\newtheorem{thm}{Theorem}
\newtheorem{cor}[thm]{Corollary}
\newtheorem{lem}[thm]{Lemma}
\newtheorem{rem}[thm]{Remark}
\newtheorem{ass}[thm]{Assumption}
\numberwithin{table}{section}
\numberwithin{equation}{section}
\numberwithin{figure}{section}
\newcommand{\CMPLX}{\ensuremath{\mathbb{C}}}
\newcommand{\REAL}{\ensuremath{\mathbb{R}}}
\newcommand{\BC}{\ensuremath{{\tt BC}}}
\newcommand{\p}{\ensuremath{{\tt p}}}
\newcommand{\aBC}{\ensuremath{{\tt a}}}
\newcommand{\N}{\ensuremath{{\tt N}}}
\newcommand{\D}{\ensuremath{{\tt D}}}
\newcommand{\bu}{\mathbf{u}}
\newcommand{\bv}{\mathbf{v}}
\newcommand{\bM}{\mathbf{M}}
\newcommand{\bA}{\mathbf{A}}
\newcommand{\bb}{\mathbf{b}}
\newcommand{\oh}{{\Omega_h}}
\newcommand{\norm}[1]{\left\|#1\right\|}
\begin{document}

\title{Incorporating local boundary conditions \\
into nonlocal theories
\thanks{Burak Aksoylu was supported in part
  by National Science Foundation DMS 1016190 grant, European Commission
  Marie Curie Career Integration Grant 293978, and Scientific and
  Technological Research Council of Turkey (T\"UB\.{I}TAK) TBAG
  112T240 and MAG 112M891 grants. Research visit of Horst R. Beyer was
  supported in part by T\"UB\.{I}TAK 2221 Fellowship for Visiting
  Scientist Program. Sabbatical visit of Fatih Celiker was supported
  in part by T\"UB\.{I}TAK 2221 Fellowship for Scientist on Sabbatical
  Leave Program.}
\thanks{Fatih Celiker was supported in part by National Science Foundation DMS
  1115280 grant.}
}

\author[1,4]{Burak Aksoylu}
\author[1,2,3]{Horst Reinhard Beyer}
\author[4]{Fatih Celiker}
\affil[1]{TOBB University of Economics and Technology, 
Department of Mathematics, Ankara, 06560, Turkey}
\affil[2]{Instituto Tecnol\'ogico Superior de Uruapan, 
Carr. Uruapan-Carapan No. 5555, Col. La Basilia, Uruapan, 
Michoac\'an. M\'exico}
\affil[3]{Theoretical Astrophysics, IAAT, Eberhard Karls 
University of T\"ubingen, T\"ubingen 72076, Germany}

\affil[4]{Wayne State University, Department of Mathematics, 656 W. Kirby, Detroit, MI 48202, USA}
\date{\today}

\maketitle
\begin{abstract}
  We study nonlocal equations from the area of peridynamics on bounded
  domains.  In our companion paper, we discover that, on
  $\mathbb{R}^n$, the governing operator in peridynamics, which
  involves a convolution, is a bounded function of the classical
  (local) governing operator.  Building on this, we define an abstract
  convolution operator on bounded domains which is a generalization of
  the standard convolution based on integrals.  The abstract
  convolution operator is a function of the classical operator,
  defined by a Hilbert basis available due to the purely discrete
  spectrum of the latter.  As governing operator of the nonlocal
  equation we use a function of the classical operator, this allows us
  to incorporate local boundary conditions into nonlocal theories.
  The governing operator is determined by what we call the regulating
  function.  By choosing different regulating functions, we can define
  governing operators tailored to the needs of the underlying
  application.

  For the homogeneous wave equation with the considered boundary
  conditions, we prove that continuity is preserved by time
  evolution. Namely, if the initial data is continuous, then the
  solution is continuous for $t \in \REAL.$ This is due to the fact
  that the solution has a unique decomposition into two parts.  The
  first part is the product of a function of time with the initial
  data.  The second part is continuous.  This decomposition is induced
  by the fact that the governing operator has a unique decomposition
  into multiple of the identity and a Hilbert-Schmidt operator.  The
  decomposition also implies that discontinuities remain stationary.

  We give explicit solution expressions for the initial value problems
  with prominent boundary conditions such as periodic, antiperiodic,
  Neumann, and Dirichlet.  In order to connect to the standard
  convolution, we give an integral representation of the abstract
  convolution operator.  We present additional ``simple'' convolutions
  based on periodic and antiperiodic boundary conditions that lead
  Neumann and Dirichlet boundary conditions.

  We present a numerical study of the solutions of the wave equation.
  For discretization, we employ a weak formulation based on a Galerkin
  projection and use piecewise polynomials on each element which
  allows discontinuities of the approximate solution at the element
  borders. We study convergence order of solutions with respect to
  polynomial order and observe optimal convergence.  We depict the
  solutions for each boundary condition.


\end{abstract}

\section{Introduction}

There are indications that a development of nonlocal theories is
necessary for description of certain natural phenomena.  It is part of
the folklore in physics that the point particle model, which is the
root for \emph{locality} in physics, is the cause of unphysical
singular behavior in the description of phenomena.  On the other hand,
all fundamental theories of physics are local.  There are many more
alternatives for formulating nonlocal theories compared to local ones.
Therefore, the task of formulating a viable nonlocal theory consistent
with experiment seems much harder than that of a local one.  In any
case, for such formulation, an understanding of the capacities of
nonlocal theories appears inevitable.

Considering wave phenomena only partially successfully described by a
classical wave equation, it seems reasonable to expect that a more
successful model can be obtained by employing the functional calculus of
self-adjoint operators, i.e., by replacing the classical governing
operator $A$ by a suitable function $f(A)$.  We call $f$ the
\emph{regulating function}. Since classical boundary conditions (BCs) is an integral part of
the classical operator, these BCs are automatically inherited by
$f(A)$.  In this way, we vision to model wave phenomena by using
appropriate $f(A)$ and, as a consequence, need to study the effect of
$f(A)$ on the solutions.  One advantage of our approach is that every
symmetry that commutes with $A$ also commutes with $f(A)$.  As a
result, required invariance with respect to classical symmetries
such as translation, rotation and so forth is
preserved.  The choice of regulating functions appropriate for the
physical situation at hand is an object for future research.

We are interested in studying instances of successful modeling by
nonlocal theories of phenomena that cannot be captured by local
theories.  There are noteworthy developments in the area of nonlocal
modeling.  For instance, crack propagation
\cite{silling2000PDfirstPaper} and viscoelastic damping
\cite{beyerKempfle1995} are modeled by peridynamics (PD) and fractional
derivatives, respectively, both of which are nonlocal.  Similar
classes of operators are used in numerous applications such as
nonlocal diffusion
\cite{rossiEtAl2010_book,duEtal2012_sirev,selesonGunzburgerParks2013_nonlocalDomains},
population models
\cite{carrilloFife2005,mogilnerEdelstein-Keshet1999}, image processing
\cite{Gilboa:2008:NonlocalImage,kindermannOsherJones2005}, particle
systems \cite{bodnarVelazquez2006}, phase transition
\cite{albertiBellettini1998,albertiBellettini1998-2}, and coagulation
\cite{fournierLaurencot2006}.  Further applications are in the context
of multiscale modeling, where PD has been shown to be an upscaling of
molecular dynamics
\cite{selesonGunzburgerParks2014,Seleson:2009:UpscalingMDtoPD} and has
been demonstrated as a viable multiscale material model for length
scales ranging from molecular dynamics to classical elasticity
\cite{askariEtAl2008scidac}.  Also see other related engineering
applications
\cite{celikGuvenMadenci2011,kilicAgwaiMadenci2009,kilicMadenci2010,oterkusMadenciEtAl2012,oterkusMadenci2012_initiation},
the review and news articles
\cite{duEtal2012_sirev,duLipton2014_siamNews,lehoucqSilling2010} for a
comprehensive discussion, and the recent book
\cite{madenciOterkus2014_book}.
In addition, we witness a major effort
to meet the need for a mathematical theory for PD applications and
related nonlocal problems addressing, for instance, conditioning
analysis, domain decomposition and variational theory
\cite{aksoyluMengesha2010,aksoyluParks2011,aksoyluUnlu2014_nonlocal},
volume constraints
\cite{duEtal2012_sirev},
nonlinearity
\cite{durukErbayErkip2009,durukErbayErkip2010,durukErbayErkip2011_general,lipton2014},
discretization
\cite{aksoySenocak2011,aksoyluUnlu2014_nonlocal,Emmrich:2007:DiscretizePD_,tianDu2013},
numerical methods
\cite{chenGunzburger2011,duJuTianZhou2013_aPosterioriErrorAnal,duTianZhao2013_adaptiveFEM,selesonBeneddinePrudhomme2013_couplingScheme}, and
various other aspects 
\cite{Alali:2009:HeterogeneousPD,duKammLehoucqParks2012,emmrichLehoucqPuhst2013,Weckner:2007:PDConverge_,hindsRadu2012,lehoucqSilling2008_PD_elasticity,Lehoucq:2008:PDStress,mengesha2012,duMengesha2013_signChangingKernel,mikata2012,selesonGunzburgerParks2014,selesonParks2011_influenceFunction,zhouDu2010}.

Disturbances in solids propagate in form of waves.  The wave equation
is the basic model for the description of the evolution of
deformations.  This paper focuses on the class of nonlocal wave
equations from PD.  Classical elasticity has been successful in
characterizing and measuring the resistance of materials to crack
growth. On the other hand, PD, a nonlocal extension of continuum
mechanics developed by Silling~\cite{silling2000PDfirstPaper}, is
capable of quantitatively predicting the dynamics of propagating
cracks, including bifurcation.  Its effectiveness has been established
in sophisticated applications such as successful description of
results of Kalthoff-Winkler experiments of the fracture of a steel
plate with notches \cite{kalthoffWinkler1988,silling2003}, fracture
and failure of composites, nanofiber networks, and polycrystal
fracture
\cite{kilicMadenci2009,oterkusMadenci2012,sillingBobaru2005,sillingBobaru2004}.
Since PD is a nonlocal theory, one might expect only the appearance of
nonlocal BCs, and, indeed, so far the concept of local BC does not apply to
PD. Instead, external forces must be supplied through the loading
force density $b$ \cite{silling2000PDfirstPaper}.  On the other hand,
we demonstrate that the anticipation that local BCs are incompatible
with nonlocal operators is not quite correct.

The rest of the article is structured as follows.  In Section
\ref{sec:Convo}, we define convolutions on Hilbert spaces and study
properties of related operators.  In Section
\ref{sec:diagonalization}, we provide diagonalizations of the classical
governing operator $A$ and
functions, $f(A)$, of $A$.  The governing nonlocal operator is a
function of $A$. On the other hand, the
representation of the solution of the initial value problem
corresponding to $f(A)$ contains bounded functions of $f(A)$.  Hence,
in Section \ref{sec:diagonalizationFunc}, we prove that bounded
functions of $f(A)$ are bounded functions of $A$. In Section
\ref{sec:soluInhomog}, we give a representation of the solution of the
inhomogeneous wave equation in terms of an eigenbasis.  In addition, in
Section \ref{sec:soluInhomogConvo}, we make the connection to functions
of $A$  from the peridynamic governing operator in the unbounded domain
case.  

In Section~\ref{sec:hilbertSchmidt}, we address the important question
under what conditions solutions will satisfy prescribed BCs.  We find
that Hilbert-Schmidt operators play a crucial role in satisfying BCs.
The Hilbert-Schmidt property
leads to a uniform convergence argument which allows us to interchange
limits.  Hence, BCs are automatically satisfied.  In addition, the
Hilbert-Schmidt property leads to \emph{smoothing} of the input in the
sense an $L^2$ function is mapped into a function that is continuous
up to the boundary.  With additional decay
conditions of the eigenvalues, we reach a uniform convergence argument
also for derivatives which allows us to interchange limits.  As a
consequence, BCs that involve derivatives are also automatically satisfied. 

In Section \ref{sec:convoBC}, we apply the material from previous
sections to study prominent BCs such as periodic,
antiperiodic, Neumann, and Dirichlet BCs. In the case
of periodic and antiperiodic BCs, integral
representations of the abstract convolutions are relatively
straightforward to establish.  On the other hand, for Neumann boundary
condition, this representation is considerably more involved,
requiring arguments related to half-way symmetry of functions.  For
Dirichlet BC, we give representation in terms limits
of integral convolutions.  For both Neumann and Dirichlet conditions, 
we also give other plausible simple definitions of convolutions that are
related to periodic and antiperiodic extensions
of the micromodulus function.

In Section \ref{sec:numerics}, we present a comprehensive numerical
treatment of the nonlocal wave equation.  We have two goals in
numerical experiments.  First, we want to demonstrate that
discontinuities of the initial data remain stationary for $t \in
\REAL$. Second, solutions satisfy the BCs also for $t \in \REAL$.  In
order to show that the two goals are accomplished, we choose
discontinuous initial data and run experiments showing wave evolutions
for all of the considered BCs; periodic, antiperiodic, Neumann, and
Dirichlet.  Furthermore, by choosing continuous initial data, we
draw parallels between the local and nonlocal wave equations for
Neumann and Dirichlet BCs.

\section{Construction for the Bounded Domain}
\label{sec:abstractConstruction}

Practical applications call for a bounded domain. In the unbounded
domain case, in the companion paper
\cite{beyerAksoyluCeliker2014_unbounded}, we discovered that PD uses
as governing operator a function $f(A)$ of the classical operator $A$:
\begin{eqnarray} 
f(A) u(x,t) & := & \int_{{\mathbb{R}}^n} C(y-x) \cdot
\left(u(y,t) - u(x,t) \right) dy \label{governingOp} \\
& = & \left( \int_{{\mathbb{R}}^n} C(y) dy \right) u(x,t) - 
\int_{{\mathbb{R}}^n} C(x-y) u(y,t) dy \nonumber.
\end{eqnarray}
The question is the generalization of convolutions to functions on
bounded domains, preserving the Banach algebra structure of
convolutions for functions on $L^1(\REAL^n)$.  Indeed, such a
generalization is known for periodic functions.  In Section
\ref{periodicboundaryconditions}, we show that this definition is a
special case of Theorem \ref{convolutionsinhilbertspaces}.  

\subsection{Convolutions on Hilbert Spaces}
\label{sec:Convo}

In the unbounded domain case, in our companion paper
\cite{beyerAksoyluCeliker2014_unbounded}, we discovered that the
governing nonlocal operator is a function of a multiple of the Laplace
operator, the classical governing operator.  Therefore, for the
bounded domain case, it is natural to define the governing operator as
a function of the corresponding classical (local) operator.  This
opens a gateway to incorporate local BCs to nonlocal theories.

For simplicity, we choose the classical (local) operator to be a
multiple of the Laplace operator with appropriate BCs. 
In the bounded domain case, the spectrum of the Laplace operator with
classical BCs such as periodic, antiperiodic, Neumann, Dirichlet, and
Robin is purely discrete.  Furthermore, we can explicitly calculate
the eigenfunctions $e_k$ corresponding to each BC and the 
subscript signifies the BC used;  
$\BC \in \{\p, \aBC, \N, \D\}$ where $\p, \aBC, \N$, and $\D$ stand for
periodic, antiperiodic, Neumann, Dirichlet, respectively.
These
eigenfunctions form a Hilbert basis (complete orthonormal basis)
through which the abstract convolution can be defined as follows:
\begin{equation} \label{abstractConvo}
C *_{{\tt BC}} u := \sum \braket{e_k|C} \braket{e_k|u} e_k,
\end{equation}
where 
$$
\braket{e_k | u} := \int_{-1}^{1} e_k^*(y) u(y)dy.
$$
The nonlocal wave equation we solve is given as follows:
\begin{equation} \label{governingEqu}
u_{tt}(x,t) + \varphi(A_{{\tt BC}}) u(x,t) = 0,\quad x \in (-1,1),~t \in [0,T],
\end{equation}
where $T \geqslant 0$, $\varphi: \sigma(A_{{\tt BC}}) \rightarrow \REAL$ is a bounded function
and $A_{{\tt BC}}$ is the (local) classical operator with spectrum $\sigma(A_{{\tt BC}})$.  For instance,
the convolution in \eqref{abstractConvo} defines  the
governing operator $c-C*_{{\tt BC}}$ where $c$ is an appropriate constant.
The \emph{regulating function} 
$\varphi: \sigma(A_{{\tt BC}}) \rightarrow \REAL$ is determined
by the following equation:
\begin{equation} \label{regulatingFunc}
\varphi(A_{{\tt BC}}):= c-C*_{{\tt BC}}  \; .
\end{equation}
Representing $u$ in this Hilbert basis, 
$$
u = \sum \braket{e_k | u} e_k,
$$  
we arrive at
\begin{equation*}
(c - C*_{{\tt BC}}) u =   \sum [c - \braket{e_k|C}] \braket{e_k|u} e_k
= \sum \varphi(\lambda_k) \braket{e_k|u} e_k,
\end{equation*}
where $(\lambda_k, e_k)$ denotes an eigenpair of the classical operator.
Since the last relation regulates which function of the classical
operator is used to define the nonlocal operator, for simplicity, we also 
call $\varphi_C(k):=\varphi(\lambda_k)$ the \emph{regulating function}:
\begin{equation} \label{regulatingFunc}
\varphi_C (k) = c - \braket{e_k|C}.
\end{equation}

The following is a formal definition of convolutions on Hilbert spaces
which leads to a Banach algebra structure.
\begin{thm} 
\label{convolutionsinhilbertspaces} {\bf (Convolutions on Hilbert Spaces)}
Let ${\mathbb{K}} \in \{{\mathbb{R}},{\mathbb{C}}\}$, 
$(X,\braket{\,|\,})$ a non-trivial ${\mathbb{K}}$-Hilbert space with 
induced norm $\|\,\,\|$ and 
$M \subset X$ a Hilbert basis. By 
\begin{equation*}
\xi * \eta := \sum_{e \in M(\xi) \cap M(\eta)} \braket{e|\xi} \braket{e|\eta} . \, e
\end{equation*}
for $\xi,\eta \in X$, where 
\begin{equation*}
M(\xi) := \{e \in M :\braket{e|\xi} \neq 0 \} \, \, , \, \,
M(\eta) := \{e \in M :\braket{e|\eta} \neq 0 \} \, \, , 
\end{equation*}
and the sum over the empty set is defined as $0_{X}$,
there is defined a bilinear, commutative and associative map 
\begin{equation*}
\begin{pmatrix}
* \, \: \! \! \! \! & \! \! \! \! \! \! \! \! \! \! \! X^2 \rightarrow X \\
 & (\xi,\eta) \mapsto \xi * \eta 
\end{pmatrix}
\end{equation*} 
such that 
\begin{equation*}
\|\xi * \eta\| \leqslant \|\xi\| \cdot \|\eta\|
\end{equation*}
for all $\xi,\eta \in X$ and, as a consequence, $(X, +, .\, , *, \|\, \|)$
is a commutative Banach algebra.
\end{thm}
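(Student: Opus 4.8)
The plan is to transport the entire statement to the coefficient representation furnished by the Hilbert basis $M$. For $\xi \in X$ write $\hat\xi(e) := \braket{e|\xi}$ for its Fourier coefficient with respect to $e \in M$. Since $M$ is a Hilbert basis, the assignment $\xi \mapsto \hat\xi$ is a linear isometric isomorphism of $X$ onto $\ell^2(M)$: by Parseval, $\hat\xi \in \ell^2(M)$ with $\sum_{e \in M}|\hat\xi(e)|^2 = \|\xi\|^2$, and conversely, by Riesz--Fischer, every square-summable family over $M$ is realized this way by an unconditionally convergent series. Under this identification the defining formula reads $\widehat{\xi * \eta}(e) = \hat\xi(e)\,\hat\eta(e)$ for every $e \in M$; indeed the omitted indices $e \notin M(\xi) \cap M(\eta)$ are exactly those with $\hat\xi(e)\hat\eta(e) = 0$, so restricting the sum to $M(\xi) \cap M(\eta)$ changes nothing, and $*$ is simply the pointwise (Hadamard) product of coefficient families.

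First I would settle well-definedness, which is the only step with genuine analytic content. Fix $\xi,\eta \in X$. Because each $e \in M$ is a unit vector, Cauchy--Schwarz gives $|\hat\eta(e)| = |\braket{e|\eta}| \leqslant \|\eta\|$ for every $e$, whence
\begin{equation*}
\sum_{e \in M} |\hat\xi(e)\,\hat\eta(e)|^2
= \sum_{e \in M} |\hat\xi(e)|^2\,|\hat\eta(e)|^2
\leqslant \|\eta\|^2 \sum_{e \in M} |\hat\xi(e)|^2
= \|\xi\|^2\,\|\eta\|^2 < \infty .
\end{equation*}
Thus $(\hat\xi(e)\hat\eta(e))_{e \in M} \in \ell^2(M)$, so Riesz--Fischer guarantees that the defining sum converges unconditionally to a unique element $\xi * \eta \in X$; as $X$ is a Hilbert space it is complete, so it is a Banach space under $\|\,\,\|$. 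Parseval applied to $\xi * \eta$ then yields the submultiplicativity directly: $\|\xi * \eta\|^2 = \sum_{e} |\hat\xi(e)\hat\eta(e)|^2 \leqslant \|\xi\|^2\|\eta\|^2$, i.e. $\|\xi * \eta\| \leqslant \|\xi\| \cdot \|\eta\|$.

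The algebraic properties then follow formally, since the coefficient map $\xi \mapsto \hat\xi$ is injective and each identity holds coefficient-wise in the commutative field $\mathbb{K}$. Bilinearity follows from linearity of $\xi \mapsto \hat\xi(e)$ combined with $\widehat{\xi*\eta}(e) = \hat\xi(e)\hat\eta(e)$; commutativity from $\hat\xi(e)\hat\eta(e) = \hat\eta(e)\hat\xi(e)$; and associativity from observing that $\hat\xi(e)\hat\eta(e)\hat\zeta(e)$ is the common coefficient of both $(\xi*\eta)*\zeta$ and $\xi*(\eta*\zeta)$ — here I invoke the well-definedness step once more, to know that both iterated convolutions exist in $X$ before comparing coefficients. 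Assembling these facts gives a Banach space $(X,+,\cdot,\|\,\,\|)$ equipped with an associative, commutative, bilinear, submultiplicative product $*$, so $(X,+,\cdot,*,\|\,\,\|)$ is a commutative Banach algebra (not necessarily unital).

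I expect the sole real obstacle to be the well-definedness/square-summability estimate above; everything else is a routine consequence of passing to Fourier coefficients and working in $\ell^2(M)$. The one point demanding slight care is reconciling the restricted index set $M(\xi)\cap M(\eta)$ with summation over all of $M$, but this is immediate because the suppressed terms are precisely the vanishing ones.
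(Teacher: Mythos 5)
Your proposal is correct and follows essentially the same route as the paper: both proofs reduce everything to the coefficient family $(\braket{e|\xi}\braket{e|\eta})_{e}$, establish its square-summability to get well-definedness and the bound $\|\xi * \eta\| \leqslant \|\xi\|\cdot\|\eta\|$ via Parseval/orthogonality, and dismiss bilinearity, commutativity, and associativity as immediate coefficient-wise identities. The only cosmetic difference is the estimate: you bound one factor uniformly by $|\braket{e|\eta}| \leqslant \|\eta\|$, while the paper bounds finite partial sums by the product $\bigl(\sum_{e\in S}|\braket{e|\xi}|^2\bigr)\bigl(\sum_{e\in S}|\braket{e|\eta}|^2\bigr)$ --- both yield the same constant and the same conclusion.
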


\begin{proof}
We note for $\xi,\eta \in X$ that, as intersection of at most countable subsets of $X$,  $M(\xi) \cap M(\eta)$ is at most countable. Furthermore, for every finite 
subset $S$ of $M(\xi) \cap M(\eta)$
\begin{align*}
& \sum_{e \in S} |\braket{e|\xi} \braket{e|\eta}|^2 = 
\sum_{e \in S} |\braket{e|\xi}|^2 \cdot |\braket{e|\eta}|^2 
\leqslant \left( \sum_{e \in S} |\braket{e|\xi}|^2 \right) \cdot 
\left( \sum_{e \in S} |\braket{e|\eta}|^2 \right) \\
& \leqslant 
\left( \sum_{e \in M(\xi)} |\braket{e|\xi}|^2 \right) \cdot 
\left( \sum_{e \in M(\eta)} |\braket{e|\eta}|^2 \right) =
\|\xi\|^2 \cdot \|\eta\|^2 \, \, .
\end{align*}
Hence the sequence
\begin{equation*}
(|\braket{e|\xi} \braket{e|\eta}|^2)_{e \in M(\xi) \cap M(\eta)} 
\end{equation*}
is (absolutely) summable. As a consequence, the sequence 
\begin{equation*}
(\braket{e|\xi} \braket{e|\eta} . \, e)_{e \in M(\xi) \cap M(\eta)}
\end{equation*}
is summable, with a sum that  is independent of the order of summation. 
Therefore, $\xi * \eta$ is well-defined and satisfies 
\begin{equation*}
\| \xi * \eta\|^2 = \sum_{e \in M(\xi) \cap M(\eta)} |\braket{e|\xi} \braket{e|\eta}|^2  \leqslant \| \xi\|^2 \cdot  \|\eta\|^2 \, \, .
\end{equation*}
That $*$ is bilinear, commutative and associative is obvious.
\end{proof}

We present properties of operators induced by convolutions on Hilbert spaces.

\begin{cor} \label{cor:convoOps}
Let ${\mathbb{K}}$, 
$(X,\braket{\,|\,})$, $\|\,\,\|$, 
$M$ and $*$ as in Theorem~\ref{convolutionsinhilbertspaces}. 
Then  for every $\xi \in X$,
$\xi * \cdot \in L(X,X)$ and, in addition, Hilbert-Schmidt 
and therefore also compact.
\end{cor}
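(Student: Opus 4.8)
The plan is to identify the convolution operator $\xi * \cdot$ as a \emph{diagonal} operator with respect to the Hilbert basis $M$; from this structure the Hilbert-Schmidt property follows at once, and compactness then comes from the standard fact that every Hilbert-Schmidt operator on a Hilbert space is compact. That $\xi * \cdot$ is linear is immediate from the bilinearity of $*$ recorded in Theorem~\ref{convolutionsinhilbertspaces}, and its boundedness (with operator norm at most $\|\xi\|$) is immediate from the estimate $\|\xi * \eta\| \leqslant \|\xi\| \cdot \|\eta\|$ proved there; hence $\xi * \cdot \in L(X,X)$.

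The one computation that matters is the action of $\xi * \cdot$ on an arbitrary basis vector $e \in M$. Because $M$ is orthonormal, $\braket{e'|e} = \delta_{e,e'}$, so $M(e) = \{e\}$ (note that $\braket{e|e} = 1 \neq 0$). Consequently the index set $M(\xi) \cap M(e)$ equals $\{e\}$ when $\braket{e|\xi} \neq 0$ and is empty otherwise, and in either case the defining sum collapses to $\xi * e = \braket{e|\xi}\,e$ (when $\braket{e|\xi} = 0$ both sides vanish, using the convention that the sum over the empty set is $0_X$). Thus $\xi * \cdot$ sends each $e$ to a scalar multiple of itself; it is diagonal with eigenvalue $\braket{e|\xi}$ on $e$.

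With this in hand the Hilbert-Schmidt norm is computed directly against the basis $M$ (recalling that this norm is independent of the chosen orthonormal basis). Since $\|\xi * e\|^2 = |\braket{e|\xi}|^2$, summing over $M$ and invoking Parseval's identity yields $\sum_{e \in M} \|\xi * e\|^2 = \sum_{e \in M} |\braket{e|\xi}|^2 = \|\xi\|^2 < \infty$. Hence $\xi * \cdot$ is Hilbert-Schmidt with Hilbert-Schmidt norm exactly $\|\xi\|$, and compactness follows immediately.

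I do not expect a genuine obstacle here: the only point requiring care is the bookkeeping with the support sets $M(\xi)$ and $M(e)$ when verifying that the sum defining $\xi * e$ really reduces to the single term $\braket{e|\xi}\,e$. Once the diagonal action is established, the remainder is Parseval's identity together with a textbook implication.
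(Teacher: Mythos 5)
Your proof is correct and follows essentially the same route as the paper's: boundedness of $\xi * \cdot$ from Theorem~\ref{convolutionsinhilbertspaces}, the diagonal action $\xi * e = \braket{e|\xi}\,e$ on basis vectors, and the Hilbert-Schmidt property from summability of $(|\braket{e|\xi}|^2)_{e}$. Your only additions are cosmetic sharpenings the paper omits — the explicit bookkeeping with $M(\xi)\cap M(e)$ and the observation via Parseval that the Hilbert-Schmidt norm equals exactly $\|\xi\|$ rather than merely being finite.
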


\begin{proof}
Let $\xi \in X$. 
First, it follows from Theorem~\ref{convolutionsinhilbertspaces}
that $\xi * \cdot \in L(X,X)$. Furthermore, from the definition of 
$*$, it follows
for every $e \in M$ that 
\begin{equation*}
\xi * e = \braket{e|\xi} . e \, \, .
\end{equation*}
In particular, this implies that the set 
\begin{equation*}
M(A) := \{e \in M: \xi * e \neq 0\}
\end{equation*}
is at most countable and that 
\begin{equation*}
\left(\|\braket{e|\xi} . e \|^2 \right)_{e \in M(A)}
= \left(|\braket{e|\xi}|^2\right)_{e \in M(A)}
\end{equation*}
is summable. Hence $\xi * \cdot$ is in addition Hilbert-Schmidt 
and therefore also compact. 
\end{proof}

\begin{rem}
In the context of $L^2$ spaces, the Hilbert-Schmidt property leads to
smoothing of input functions; see Theorem \ref{regularizingfunction}.
\end{rem}

In the case that the Hilbert basis is an eigenbasis of an operator
$A$, we give a condition that the operators in Corollary
\ref{cor:convoOps} are functions of $A$.

\begin{cor}  \label{convolutionsasfunctionsofoperators}
Let ${\mathbb{K}} \in \{{\mathbb{R}},{\mathbb{C}}\}$, 
$(X,\braket{\,|\,})$ a non-trivial ${\mathbb{K}}$-Hilbert space, with 
induced norm $\|\,\,\|$, $A$ a densely-defined, linear and 
self-adjoint operator in $X$ with a purely discrete spectrum 
$\sigma(A)$, 
$M \subset X$ the Hilbert basis consisting of the eigenvectors of $A$ and, finally, $*$ the convolution corresponding to 
$M$. Then, for every $\xi \in X$, satisfying 
\begin{align*}
\braket{e|\xi} = \braket{e^{\prime}|\xi}
\end{align*}
for every $e,e^{\prime} \in M$ corresponding to the same 
eigenvalue of $A$, 
$\xi * \cdot$ is a bounded function of $A$.
\end{cor}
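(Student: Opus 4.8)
The plan is to identify the convolution operator $\xi * \cdot$ explicitly as a diagonal operator in the eigenbasis $M$ and then exhibit a bounded function $g$ on $\sigma(A)$ whose functional calculus reproduces it. From Corollary~\ref{cor:convoOps} we already have $\xi * e = \braket{e|\xi} . \, e$ for every $e \in M$; since $\xi * \cdot$ is bounded, this extends by bilinearity and continuity (equivalently, it is just the defining formula of Theorem~\ref{convolutionsinhilbertspaces}, because the terms outside $M(\xi) \cap M(u)$ vanish) to
\begin{equation*}
(\xi * \cdot) u = \sum_{e \in M} \braket{e|\xi} \braket{e|u} . \, e
\end{equation*}
for all $u \in X$. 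Thus $\xi * \cdot$ is diagonal with respect to $M$, acting on the $e$-component by multiplication with the scalar $\braket{e|\xi}$.

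Next I would recall the functional calculus for $A$. Because $A$ is self-adjoint with purely discrete spectrum and $M$ is an eigenbasis, every bounded function $g : \sigma(A) \to \mathbb{K}$ induces the bounded operator
\begin{equation*}
g(A) u = \sum_{e \in M} g(\lambda_e) \braket{e|u} . \, e,
\end{equation*}
where $\lambda_e$ denotes the eigenvalue associated with $e$. Comparing the two displays, $\xi * \cdot$ equals $g(A)$ precisely when $g(\lambda_e) = \braket{e|\xi}$ for every $e \in M$. The natural candidate is therefore to define $g$ on $\sigma(A)$ by $g(\lambda) := \braket{e|\xi}$, choosing any eigenvector $e$ with $Ae = \lambda e$.

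The one point that genuinely uses the hypothesis --- and the step I view as the crux --- is the well-definedness of $g$. An eigenvalue $\lambda$ of multiplicity greater than one is shared by several basis vectors $e, e^{\prime} \in M$, and the prescription above must not depend on which one is selected. This is exactly guaranteed by the assumption $\braket{e|\xi} = \braket{e^{\prime}|\xi}$ for all $e, e^{\prime}$ corresponding to the same eigenvalue; without it the per-eigenvector coefficients would fail to assemble into a genuine function on $\sigma(A)$, and $\xi * \cdot$ would merely be diagonal in $M$ rather than a function of $A$.

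Finally I would verify that $g$ is bounded, which is routine: the Cauchy--Schwarz inequality gives $|g(\lambda)| = |\braket{e|\xi}| \leqslant \|e\| \cdot \|\xi\| = \|\xi\|$ for every $\lambda \in \sigma(A)$, so $g$ is bounded by $\|\xi\|$. Hence $g$ is an admissible symbol for the functional calculus and the two displays coincide, i.e. $\xi * \cdot = g(A)$, establishing that $\xi * \cdot$ is a bounded function of $A$.
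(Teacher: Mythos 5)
Your proposal is correct and follows essentially the same route as the paper's own proof: both define the symbol $g$ (the paper calls it $f$) by $g(\lambda(e)) := \braket{e|\xi}$, invoke the hypothesis precisely for well-definedness on eigenvalues of higher multiplicity, bound it by $\|\xi\|$, and conclude $\xi * \cdot = g(A)$ by checking agreement on the eigenbasis via the spectral functional calculus. The only cosmetic difference is that you write out the diagonal action on a general $u \in X$, whereas the paper verifies equality on each $e \in M$ and concludes directly.
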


\begin{proof}
Let $\xi \in X$ and for every $e \in M$, $\lambda(e) \in 
{\mathbb{R}}$ the corresponding eigenvalue of $A$. Then the 
spectrum  $\sigma(A)$ of $A$ is given by 
\begin{equation*}
\sigma(A) := \{\lambda(e): e \in M\} \, \, .
\end{equation*}
We define $f \in B(\sigma(A),{\mathbb{C}})$, where $B(\sigma(A),{\mathbb{C}})$
denotes complex valued bounded functions on $\sigma(A)$,  by 
\begin{equation*}
f(\lambda(e)) := \braket{e|\xi}
\end{equation*}
for every $e \in M$. This definition leads to a well-defined 
$f$, since 
according to the assumptions,
\begin{align*}
\braket{e|\xi} = \braket{e^{\prime}|\xi}
\end{align*}
for every $e,e^{\prime} \in M$ satisfying $\lambda(e) = \lambda(e^{\prime})$. Also, we note that $f$ is bounded since 
\begin{equation*}
|f(\lambda(e))| = |\braket{e|\xi}| \leqslant \|\xi\| \, \, ,
\end{equation*}
for every $e \in M$.
Furthermore, from the spectral theorem for densely-defined, 
linear and self-adjoint Hilbert spaces and the definition of 
$*$, it follows for every $e \in M$ that 
\begin{equation*}
f(A) e = f(\lambda(e)) e = \braket{e|\xi} . e = \xi * e
\end{equation*}
and hence that 
\begin{equation*}
\xi * \cdot = f(A) \, \, .
\end{equation*}
\end{proof}

We note that simple rotations of the eigenbasis lead to different
convolutions. 

\begin{rem} \label{crucialremark}
Note that if $M \subset X$ is a Hilbert basis and $\alpha : M \rightarrow S^1$,
where $S^1 \subset {\mathbb C}$ denotes the unit circle, then 
\begin{equation*}
M_{\alpha} := \{\alpha(e) . e : M\}
\end{equation*}
is also a Hilbert basis, and for every $\xi, \eta \in X$,
\begin{equation*}
\xi *_{\alpha} \eta = \sum_{e \in M(\xi) \cap M(\eta)} (\alpha(e))^{*}  \braket{e|\xi} \braket{e|\eta} . \, e \, \, , 
\end{equation*}
where $*_{\alpha}$ denotes the convolution that is associated to $M_{\alpha}$. 
\end{rem}

To avoid repetition in the upcoming discussions, we make the following
assumptions with corresponding abstract homogeneous governing equation:
\begin{equation} \label{governingEquAbstract}
u^{\prime \prime}(t) + f(A)u(t) = 0.
\end{equation}
On the other hand, in motivations we use symbols from
the governing equation \eqref{governingEqu} such as $\varphi$ and
$A_{\BC}$ instead of $f$ and $A$.

\begin{ass}
In the following, let $(X,\braket{\,|\,})$ be a non-trivial complex Hilbert Space, 
$A : D(A) \rightarrow X$ a densely-defined, linear and self-adjoint 
operator with a purely discrete spectrum $\sigma(A)$, i.e., such 
that the (non-empty, closed and real) $\sigma(A)$ is discrete and
contains only eigenvalues of finite multiplicity. Note that this implies that, 
$\sigma(A)$ is at most countable,  there is an at most countable 
Hilbert basis $M \subset X$ of $X$ consisting of eigenvectors of 
$A$ and that $(X,\braket{\,|\,})$ is separable. In the following,
we consider the particular case that $X$ is infinite dimensional 
and hence that $M$ is countable. As a consequence, 
\begin{equation*}
M = \{e_1,e_2,\dots \} \, \, , 
\end{equation*}
where $e_1,e_2,\dots$ are pairwise orthogonal normalized 
eigenvectors of $A$. In particular, for every $k \in {\mathbb{N}}^{*}$,
let $\lambda_k$ be the eigenvalue corresponding to $e_k$. Then 
\begin{equation*}
\sigma(A) = \{\lambda_k : k \in {\mathbb{N}}^{*} \} \, \, .
\end{equation*}

\end{ass}

\subsection{Diagonalization of $A$ and Induced Representation of  Bounded Functions of $A$}
\label{sec:diagonalization}

With the knowledge of the eigenbasis, the diagonalization of $A$
is straightforward.

\begin{thm} \label{thm:diagonalization}
Define $U : L^2_{\mathbb{C}}(I) \rightarrow \sum_{k \in {\mathbb{N}}^{*}} {\mathbb{C}} = l^2_{\mathbb{C}}$ by 
\begin{equation*}
U(\xi) := (\braket{e_{k}|\xi}_2)_{k \in {\mathbb{N}}^{*}}
\end{equation*}
for every $\xi \in X$. Then $U$ is a Hilbert space 
isomorphism. In particular, 
\begin{align} \label{diagonalizationofA}
& U \circ A \circ U^{-1} 
= \sum_{k \in {\mathbb{N}}^{*}} \lambda_k .{\textrm{id}}_{\mathbb{C}} \, \, .
\end{align}
\end{thm}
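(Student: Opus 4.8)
The plan is to verify in turn that $U$ is well-defined into $l^2_{\mathbb{C}}$, isometric, surjective, and inner-product preserving, and then to identify the transported operator $U \circ A \circ U^{-1}$ with the diagonal multiplication operator, tracking domains carefully. First I would observe that, since $M = \{e_1,e_2,\dots\}$ is a Hilbert basis, Parseval's identity gives $\sum_k |\braket{e_k|\xi}|^2 = \norm{\xi}^2 < \infty$ for every $\xi \in X$, so that $U(\xi) \in l^2_{\mathbb{C}}$ and $U$ is a linear isometry, hence injective. Completeness of $X$ together with the Riesz--Fischer argument yields surjectivity: given $(c_k)_k \in l^2_{\mathbb{C}}$, the partial sums of $\sum_k c_k . e_k$ form a Cauchy sequence in the complete space $X$ by orthonormality, hence converge to some $\xi \in X$ with $\braket{e_k|\xi} = c_k$, i.e. $U(\xi) = (c_k)_k$. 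That $U$ preserves the inner product then follows from the polarization identity (or directly from Parseval), so $U$ is a Hilbert space isomorphism.

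It remains to establish \eqref{diagonalizationofA}, and here the only genuine point is the unboundedness of $A$: the transported operator $U \circ A \circ U^{-1}$ is a priori defined on $U(D(A))$, and I must check this coincides with the natural domain $\{(c_k)_k \in l^2_{\mathbb{C}} : \sum_k |\lambda_k|^2 |c_k|^2 < \infty\}$ of the multiplication operator $\sum_{k \in \mathbb{N}^{*}} \lambda_k . \textrm{id}_{\mathbb{C}}$. The spectral theorem for densely-defined self-adjoint operators with purely discrete spectrum supplies precisely the representation $D(A) = \{\xi \in X : \sum_k |\lambda_k|^2 |\braket{e_k|\xi}|^2 < \infty\}$ together with $A\xi = \sum_k \lambda_k \braket{e_k|\xi} . e_k$ on that domain. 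Applying $U$ to this identity gives componentwise $(U A \xi)_k = \lambda_k \braket{e_k|\xi} = \lambda_k (U\xi)_k$, which is exactly the action of $\sum_{k} \lambda_k . \textrm{id}_{\mathbb{C}}$, and the two domains agree by the same characterization; in particular the identity holds trivially on each eigenvector, $U A U^{-1}$ sending the $k$-th standard basis vector to $\lambda_k$ times itself.

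The main obstacle, such as it is, lies not in any hard estimate but in the bookkeeping of domains for the unbounded operator $A$: the algebraic identity on the eigenbasis is immediate, and the real content is that $A$ admits the above spectral expansion on its maximal domain, which is supplied by the spectral theorem once the purely discrete spectrum and the eigenbasis $M$ are in hand. Once that representation is invoked, the diagonalization reduces to a direct componentwise computation, and no further analytic work is needed beyond confirming that the domain of the multiplication operator is matched exactly by $U(D(A))$.
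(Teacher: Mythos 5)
Your proposal is correct, but your handling of the key step -- the identity \eqref{diagonalizationofA} as an equality of \emph{unbounded} operators -- goes by a genuinely different route than the paper. You invoke the spectral theorem in its strong form, namely that $A$ admits the eigen-expansion $A\xi = \sum_k \lambda_k \braket{e_k|\xi}\, .\, e_k$ on the \emph{maximal} domain $D(A) = \{\xi \in X : \sum_k |\lambda_k|^2 |\braket{e_k|\xi}|^2 < \infty\}$, and then match $U(D(A))$ against the natural domain of the multiplication operator componentwise. The paper never identifies $D(A)$ at all: it only computes, for $\xi \in D(A)$, that $\braket{e_k|A\xi}_2 = \braket{Ae_k|\xi}_2 = \lambda_k \braket{e_k|\xi}_2$ (using symmetry), which yields merely the \emph{inclusion} $U \circ A \circ U^{-1} \subset \sum_{k} \lambda_k\, .\, \textrm{id}_{\mathbb{C}}$, and then concludes equality from the fact that both sides are densely-defined self-adjoint operators on $l^2_{\mathbb{C}}$ -- a self-adjoint operator admits no proper self-adjoint extension. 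The paper's maximality trick buys economy: it needs only the symmetry computation, the (directly verifiable) self-adjointness of the diagonal multiplication operator on its maximal domain, and unitary invariance of self-adjointness; in particular it sidesteps the explicit domain characterization, which is essentially the content of the theorem being proved and which your argument must import wholesale from the cited spectral theorem. Your route is also sound -- the representation you cite is a standard consequence of the spectral theorem for operators with purely discrete spectrum -- and it buys explicitness: the domain equality is exhibited rather than deduced abstractly, and your Parseval/Riesz--Fischer verification of the isomorphism property of $U$ fills in a step the paper silently takes for granted.
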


\begin{proof}
In particular, for every $\xi \in D(A)$
\begin{align} \label{diagonalizationofA}
& U \circ A \circ U^{-1} (\braket{e_{k}|\xi}_2)_{k \in {\mathbb{N}}^{*}} 
= U \circ A \xi = (\lambda_k \braket{e_k|\xi}_2)_{k \in {\mathbb{N}}^{*}} \nonumber 
\\
&
= \left[ \sum_{k \in {\mathbb{N}}^{*}} \lambda_k .{\textrm{id}}_{\mathbb{C}}
\right] (\braket{e_{k}|\xi}_2)_{k \in {\mathbb{N}}^{*}}
\, \, .
\end{align}
For the proof of the latter, we note for every $k \in {\mathbb{N}}^{*}$ that 
\begin{equation*}
\braket{e_k|A \xi}_2 = \braket{A e_k|\xi}_2 = \lambda_k \braket{e_k|\xi}_2 \, \, .
\end{equation*}
The  identity (\ref{diagonalizationofA}) implies that 
\begin{equation*}
\sum_{k \in {\mathbb{N}}^{*}} \lambda_k  .{\textrm{id}}_{\mathbb{C}} \supset 
U \circ A \circ U^{-1}
\end{equation*}

and hence, since $U \circ A \circ U^{-1}$ and $\sum_{k \in {\mathbb{N}}^{*}} (\lambda_k .{\textrm{id}}_{\mathbb{C}})$ are both densely-defined, linear and 
self-adjoint operators in $l^2_{\mathbb{C}}$, that 
\begin{equation*} 
U \circ A \circ U^{-1} = \sum_{k \in {\mathbb{N}}^{*}} \lambda_k .{\textrm{id}}_{\mathbb{C}} \, \, .
\end{equation*}
\end{proof}

The Hilbert space isomorphism $U$ in Theorem \ref{thm:diagonalization}
also diagonalizes bounded functions of $A$.  In the solution of the
initial value problem of the wave equation with governing operator
$A$, only bounded functions appear.  Namely, for  
$f \in B(\sigma(A),{\mathbb{C}})$ and $\xi \in X$,
\begin{equation} \label{boundedfunctionsofA}
f(A) \, \xi = 
\sum_{k=1}^{\infty} f(\lambda_k) \braket{e_k|\xi}_2 . e_{k}.
\end{equation}

\subsection{Functions of Functions of $A$}
\label{sec:diagonalizationFunc}

The governing operator in the nonlocal equation \eqref{governingEqu} is a
bounded function, $\varphi(A_{\BC})$, of the classical operator $A_{\BC}$.  As a
consequence, in the solution of the initial value problem, bounded
functions, $g(\varphi(A_{\BC}))$, of bounded function, $\varphi(A_{\BC})$, 
appear.  Since $A_{\BC}$ has a purely discrete spectrum, it is easy to see
that $g(\varphi(A_{\BC}))$ is a bounded function of $A_{\BC}$, 
$(g \circ \varphi)(A_{\BC})$, as indicated in the following. 

\begin{thm}
Let $f \in B(\sigma(A),{\mathbb{R}})$.

\begin{enumerate}
\item[(i)] Then $f(A)$ is
self-adjoint with {\it pure point spectrum}, $\sigma(f(A))$, given by 
\begin{equation*}
\sigma(f(A)) = \overline{\{f(\lambda_k): k \in {\mathbb{N}}^{*}\}} \, \, .
\end{equation*}

\item[(i)] If $g \in B(\sigma(f(A)),{\mathbb{C}})$, $k \in {\mathbb{N}}^{*}$ and $\eta \in X$, then
\begin{equation} \label{compositions}
g(f(A)) = (g \circ f)(A) \, \, .
\end{equation}

\end{enumerate}

\end{thm}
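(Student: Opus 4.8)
The plan is to reduce everything to the explicit diagonal action formula \eqref{boundedfunctionsofA}, $f(A)\xi = \sum_{k=1}^{\infty} f(\lambda_k)\braket{e_k|\xi}_2 . e_k$, since once we know that $\{e_k\}$ is a complete orthonormal eigenbasis of $f(A)$, this formula encodes all the spectral information we need. First I would dispatch the boundedness and self-adjointness claims of (i). Because $f$ is bounded, Parseval gives $\|f(A)\xi\|^2 = \sum_k |f(\lambda_k)|^2\,|\braket{e_k|\xi}_2|^2 \leqslant \|f\|_\infty^2\,\|\xi\|^2$, so $f(A)\in L(X,X)$. Since $f$ is real-valued, a direct computation with \eqref{boundedfunctionsofA} using orthonormality of $\{e_k\}$ shows $\braket{f(A)\xi|\eta}_2 = \braket{\xi|f(A)\eta}_2$ for all $\xi,\eta\in X$; being bounded and everywhere defined, $f(A)$ is therefore self-adjoint. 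Moreover $f(A)e_k = f(\lambda_k)e_k$, so $\{e_k\}$ is a complete orthonormal basis of eigenvectors of $f(A)$, which is exactly the statement that $f(A)$ has \emph{pure point spectrum}.

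Next I would identify the spectrum. Each $f(\lambda_k)$ is an eigenvalue of $f(A)$ and hence lies in $\sigma(f(A))$; since the spectrum of a bounded operator is closed, $\overline{\{f(\lambda_k):k\in{\mathbb N}^{*}\}}\subseteq\sigma(f(A))$. For the reverse inclusion I would take any $\mu\notin\overline{\{f(\lambda_k)\}}$, choose $\delta>0$ with $|f(\lambda_k)-\mu|\geqslant\delta$ for all $k$, and exhibit the bounded operator $R_\mu\xi := \sum_k (f(\lambda_k)-\mu)^{-1}\braket{e_k|\xi}_2 . e_k$, whose norm is bounded by $1/\delta$ by the same Parseval estimate as above. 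A term-by-term check against \eqref{boundedfunctionsofA} shows $R_\mu$ is a two-sided inverse of $f(A)-\mu$, so $\mu\in\rho(f(A))$. This yields $\sigma(f(A))\subseteq\overline{\{f(\lambda_k)\}}$ and completes (i).

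For (ii) the strategy is to write both $g(f(A))$ and $(g\circ f)(A)$ as diagonal operators in the basis $\{e_k\}$ and match them. On one side, $g\circ f$ is well-defined on $\sigma(A)$ because $f(\sigma(A))\subseteq\sigma(f(A))$, and it is bounded by $\|g\|_\infty$, so \eqref{boundedfunctionsofA} applied to $g\circ f$ gives $(g\circ f)(A)\eta = \sum_k g(f(\lambda_k))\braket{e_k|\eta}_2 . e_k$. On the other side I must evaluate $g(f(A))$, and here lies the one genuine subtlety: the eigenvalues $f(\lambda_k)$ may accumulate, so $f(A)$ need \emph{not} have a purely discrete spectrum in the sense of the Assumption, and one cannot simply reapply Theorem~\ref{thm:diagonalization} or the Assumption to $f(A)$. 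Instead I would invoke part (i): $f(A)$ is a bounded self-adjoint operator possessing the complete orthonormal eigenbasis $\{e_k\}$ with eigenvalues $f(\lambda_k)$, and for such an operator the functional calculus reduces to the diagonal formula $g(f(A))\eta = \sum_k g(f(\lambda_k))\braket{e_k|\eta}_2 . e_k$. The two right-hand sides now coincide term by term, giving $g(f(A)) = (g\circ f)(A)$. I expect the main obstacle to be precisely this justification of the diagonal functional calculus for $f(A)$ directly from completeness of its eigenbasis, rather than from the stronger discreteness hypothesis that is no longer available.
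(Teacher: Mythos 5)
Your proposal is correct and takes essentially the same route as the paper: both arguments reduce everything to the diagonal representation \eqref{boundedfunctionsofA}, observe that each $e_k$ is an eigenvector of $f(A)$ with eigenvalue $f(\lambda_k)$ (so that $f(A)$ has pure point spectrum), and prove (ii) by matching the diagonal actions of $g(f(A))$ and $(g\circ f)(A)$ on the basis vectors and extending by linearity and continuity of the bounded operators involved. The only differences are ones of detail, and they favor you: where the paper cites the spectral theorem for self-adjointness and simply asserts $\sigma(f(A)) = \overline{\{f(\lambda_k): k \in {\mathbb{N}}^{*}\}}$, you verify boundedness and symmetry directly via Parseval and supply the reverse spectral inclusion with an explicit resolvent $R_\mu \xi = \sum_k (f(\lambda_k)-\mu)^{-1}\braket{e_k|\xi}_2\, e_k$, which is precisely the one nontrivial point the paper's proof leaves unproved; your closing remark that the diagonal functional calculus for $f(A)$ must be justified from completeness of its eigenbasis (rather than from a purely discrete spectrum, which $f(A)$ need not have) is exactly the step the paper disposes of by its appeal to the spectral theorem.
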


\begin{proof}
  If $f \in B(\sigma(A),{\mathbb{R}})$, we conclude from the spectral
  theorem for densely-defined, linear and self-adjoint Hilbert spaces,
  that $f(A)$ is, in particular, self-adjoint and from
  (\ref{boundedfunctionsofA}) that every member of the Hilbert basis
  $(e_{k})_{k \in {\mathbb{N}}^{*}}$ is an eigenvector of $f(A)$.
  Hence $f(A)$ has a {\it pure point spectrum}, and its spectrum
  $\sigma(f(A))$ is given by
\begin{equation*}
\sigma(f(A)) = \overline{\{f(\lambda_k): k \in {\mathbb{N}}^{*}\}} \, \, .
\end{equation*}

Furthermore, if $g \in B(\sigma(f(A)),{\mathbb{C}})$, $k \in {\mathbb{N}}^{*}$ and $\eta \in X$, it follows from 
the spectral theorem for densely-defined, linear and self-adjoint Hilbert spaces that 
\begin{align*} \label{boundedfunctionsoff(A)}
g(f(A)) \, e_{k} & = g(f(\lambda_k)) . e_{k} \, \, , \nonumber \\
g(f(A)) \, \eta & = g(f(A)) \sum_{k=1}^{\infty} \braket{e_{k}|\eta}_2 . e_{k} =
\sum_{k=1}^{\infty} \braket{e_{k}|\eta}_2 . g(f(A)) \, e_{k} \nonumber \\
& =
\sum_{k=1}^{\infty} g(f(\lambda_k)) \braket{e_{k}|\eta}_2 . e_{k} \\
& =
\sum_{k=1}^{\infty} (g \circ f)(\lambda_k) \braket{e_{k}|\eta}_2 . e_{k} = 
(g \circ f)(A) \eta 
\end{align*}
and hence that 
\begin{equation} \label{compositions}
g(f(A)) = (g \circ f)(A) \, \, .
\end{equation}
\end{proof}

The following functions appear in the solution of the initial value
problem for \eqref{governingEquAbstract} with abstract governing
operator $f(A)$.

\begin{rem}
In particular,
for all $t \in {\mathbb{R}}$, $\eta \in X$,
\begin{align} \label{representations}
& \left[\overline{\cos \left(t \sqrt{\phantom{ij}} \right)}\,
\bigg|_{\sigma(f(A))}\right]\!(f(A)) \, \eta =
\sum_{k=1}^{\infty} \overline{\cos \left(t \sqrt{\phantom{ij}}\right)}(f(\lambda_k)) 
\braket{e_{k}|\eta}_2 . e_{k} \, \, , \nonumber \\
& \left[\,\overline{\frac{\sin \left(t \sqrt{\phantom{ij}} \right)}{\sqrt{\phantom{ij}}}} \, \bigg|_{\sigma(f(A))}\right]\!(f(A)) \, \eta 
=
\sum_{k=1}^{\infty} \overline{\frac{\sin \left(t \sqrt{\phantom{ij}} \right)
}{\sqrt{\phantom{ij}}}}(f(\lambda_k)) 
\braket{e_{k}|\eta}_2 . e_{k} \, \, . \nonumber
\end{align}
\end{rem}

\subsection{Solution of the Inhomogeneous Wave Equation}
\label{sec:soluInhomog}

A solution for vanishing initial data, of the inhomogeneous wave equation 
with abstract governing operator
$f(A)$ corresponding to continuous inhomogeneity is given by the
following theorem.

\begin{thm} \label{solutionoftheinhomogenousequation}
In addition, let $A : D(A) \rightarrow X$ be positive
and 
$$
b: {\mathbb{R}} \rightarrow L^2_{\mathbb{C}}(I)
$$ 
be continuous. Furthermore, let $f: \sigma(A) \rightarrow \REAL$ be bounded.
Then, $v : {\mathbb{R}} \rightarrow X$, for every defined by
\begin{align*}
v(t) := \int_{I_t} 
\left[\,\overline{\frac{\sin \left((t - \tau) \sqrt{\phantom{ij}} \right)}{\sqrt{\phantom{ij}}}} \, \bigg|_{\sigma(f(A))}\!\right]\! \left( f(A) \right) b(\tau)
\, d\tau \, \, ,
\end{align*}
where $\int$ denotes weak integration in $X$,
\begin{equation*}
I_{t} := 
\begin{cases}
[0,t] & \text{if $t \geqslant 0$} \\
[t,0] & \text{if $t < 0$}
\end{cases} \, \, , 
\end{equation*}
is twice continuously differentiable, such that 
\begin{equation*}
v(0) = v^{\prime}(0) = 0 \, \, ,
\end{equation*} 
and 
\begin{equation*} 
v^{\, \prime \prime}(t) + A \, v(t) = b(t), \quad t \in {\mathbb{R}}.
\end{equation*}
\end{thm}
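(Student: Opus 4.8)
The plan is to read $v$ as the variation-of-parameters (Duhamel) solution and to verify the claim by differentiating twice under the weak integral, the genuine difficulty being that the operator appearing in the equation, $A$, is \emph{unbounded}. First I would record the two bounded solution operators furnished by the functional calculus of $A$. Because $A$ is positive, $\sigma(A)\subset[0,\infty)$, so $\sqrt{A}$ is well defined, and the functions $\lambda\mapsto\cos(s\sqrt{\lambda})$ and $\lambda\mapsto\sin(s\sqrt{\lambda})/\sqrt{\lambda}$ are bounded on $\sigma(A)$ (the estimate $|\sin x|\leqslant|x|$ gives the uniform bound $|s|$ for the second, with value $s$ at $\lambda=0$). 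Writing $C(s):=\cos(s\sqrt{A})$ and $S(s):=\sin(s\sqrt{A})/\sqrt{A}$ for the corresponding bounded functions of $A$ in the sense of \eqref{boundedfunctionsofA}, both are bounded on $X$ and strongly continuous in $s$, so the integrand $\tau\mapsto S(t-\tau)b(\tau)$ is continuous on the compact oriented interval $I_t$. Hence the weak integral defining $v(t)$ exists in $X$ for every $t$, and $v(0)=0$ since $I_0$ is degenerate.

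Next I would differentiate once. Applying the variable-endpoint Leibniz rule together with $S(0)=0$ and the \emph{strong} identity $\partial_s\big(S(s)\eta\big)=C(s)\eta$ — which follows from \eqref{boundedfunctionsofA} and dominated convergence in the spectral sum, since $\partial_s\big(\sin(s\sqrt{\lambda})/\sqrt{\lambda}\big)=\cos(s\sqrt{\lambda})$ with $|\cos|\leqslant 1$ — I obtain
\[
v^{\prime}(t)=\int_{I_t}C(t-\tau)\,b(\tau)\,d\tau,
\]
so in particular $v^{\prime}(0)=0$. This first step has no domain subtlety because $C(s)$ is bounded; continuity of $b$ and strong continuity of $C$ also give that $v^{\prime}$ is continuous.

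The second differentiation is where the real work lies. Differentiating $v^{\prime}$ and using $C(0)=\mathrm{id}_X$ together with $\partial_s\big(C(s)\eta\big)=-\sqrt{A}\,\sin(s\sqrt{A})\,\eta$ should yield
\[
v^{\prime\prime}(t)=b(t)-\int_{I_t}\sqrt{A}\,\sin\!\big((t-\tau)\sqrt{A}\big)\,b(\tau)\,d\tau,
\]
and the crux is to identify the remaining integral with $A\,v(t)$. Formally $\sqrt{A}\,\sin(s\sqrt{A})=A\big(\sin(s\sqrt{A})/\sqrt{A}\big)=A\,S(s)$, so the integrand is $A\big[S(t-\tau)b(\tau)\big]$; but $A\,S(s)$ is an unbounded operator, so I must (i) show $S(t-\tau)b(\tau)\in D(A)$ for each $\tau$ with $\tau\mapsto A\,S(t-\tau)b(\tau)$ continuous on $I_t$, and (ii) pull $A$ out of the weak integral. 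Part (ii) I would get from the closedness of the self-adjoint $A$: if both $\tau\mapsto S(t-\tau)b(\tau)$ and $\tau\mapsto A\,S(t-\tau)b(\tau)$ are weakly integrable on $I_t$, then $v(t)=\int_{I_t}S(t-\tau)b(\tau)\,d\tau\in D(A)$ and $A\,v(t)=\int_{I_t}A\,S(t-\tau)b(\tau)\,d\tau$.

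I expect part (i) — the $D(A)$-membership of the integrand and its continuity in $\tau$ — to be the main obstacle, precisely because it does not follow from boundedness. Via \eqref{boundedfunctionsofA} and the diagonalization of Theorem~\ref{thm:diagonalization}, this amounts to controlling, for each $\tau$, the spectral sum
\[
\big\|A\,S(t-\tau)b(\tau)\big\|^2=\sum_{k=1}^{\infty}\lambda_k\,\sin^2\!\big((t-\tau)\sqrt{\lambda_k}\big)\,\big|\braket{e_k|b(\tau)}\big|^2,
\]
whose summability is \emph{not} automatic from $b(\tau)\in X$ alone since $\lambda_k\to\infty$; it is here that the positivity of $A$ and the smoothing of the solution operators must be combined with the regularity of $b$, and any decay of the eigenvalues or extra regularity of $b$ would enter. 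Once (i) is established, closedness gives $A\,v(t)$ and the continuity of $\tau\mapsto A\,S(t-\tau)b(\tau)$ shows $v^{\prime\prime}$ is continuous, so that $v$ is twice continuously differentiable; substituting into the formula for $v^{\prime\prime}$ then yields $v^{\prime\prime}(t)+A\,v(t)=b(t)$ for all $t\in\REAL$, completing the argument.
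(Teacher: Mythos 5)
Your proposal proves a different statement from the one in the paper, and the difference is fatal. In the theorem, the solution operator under the integral is a function of $f(A)$, \emph{not} of $A$: the governing operator of the inhomogeneous equation is $f(A)$, and the hypothesis that $f\colon\sigma(A)\rightarrow\REAL$ is \emph{bounded} makes $f(A)$ a bounded self-adjoint operator via \eqref{boundedfunctionsofA}. The occurrence of ``$A\,v(t)$'' in the final display is a typo for ``$f(A)\,v(t)$'': the sentence introducing the theorem speaks of the ``inhomogeneous wave equation with abstract governing operator $f(A)$,'' the Remark that follows expands $\braket{e_k|v(t)}_2$ using $f(\lambda_k)$, and the paper's own proof is nothing but a citation of the companion paper's theorem for precisely this bounded-operator setting. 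Once this is recognized, the proof is routine: $\overline{\cos(t\sqrt{\phantom{ij}})}$ and $\overline{\sin(t\sqrt{\phantom{ij}})/\sqrt{\phantom{ij}}}$ are entire, one differentiates under the weak integral using uniform bounds (all operators involved are bounded, uniformly for $t,\tau$ in compacta), and the bounded operator $f(A)$ passes through the weak integral with no appeal to closedness and no domain questions. Your proposal instead discards $f$ entirely and takes the solution operators to be $\cos(s\sqrt{A})$ and $\sin(s\sqrt{A})/\sqrt{A}$ for the unbounded $A$, so the $v$ you differentiate is not the $v$ defined in the statement.

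Moreover, the theorem you are actually attempting --- that the Duhamel formula for the \emph{unbounded} $A$ with merely continuous $b$ yields a twice continuously differentiable solution of $v''+Av=b$ --- is false, so the obstacle you flag as ``part (i)'' is not a technical gap to be filled but a genuine failure. In the paper's setting $\lambda_k\rightarrow\infty$; take $x\in X\setminus D(\sqrt{A})$ and $b(\tau):=\cos(\tau\sqrt{A})\,x$, which is continuous since the cosine family is strongly continuous and uniformly bounded. The product-to-sum identity $\sin((t-\tau)\omega)\cos(\tau\omega)=\tfrac12[\sin(t\omega)+\sin((t-2\tau)\omega)]$ and the vanishing of $\int_0^t\sin((t-2\tau)\omega)\,d\tau$ give
\begin{equation*}
v(t)=\frac{t}{2}\,\frac{\sin(t\sqrt{A})}{\sqrt{A}}\,x \, \, ,
\end{equation*}
and, choosing for instance $\lambda_k=k^2$ and $\braket{e_k|x}=k^{-3/2}$, one finds $\sum_k \lambda_k\sin^2(t\sqrt{\lambda_k})\,|\braket{e_k|x}|^2=\sum_k \sin^2(tk)/k=\infty$ for every $t\notin\pi{\mathbb{Z}}$. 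Hence $v(t)\notin D(A)$ and $v$ is not twice strongly differentiable; only the combination $v''+Av$ makes sense. So no refinement of your closedness-plus-weak-integration scheme can succeed without extra hypotheses (e.g.\ $b\in C^1$ in time, or $b$ valued in $D(\sqrt{A})$) that the theorem does not grant. The repair is simply to work with the operator the statement actually specifies, $f(A)$, whose boundedness is the whole point of the paper's framework.
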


\begin{proof}
See the proof in the companion paper 
\cite[Thm. 2.5]{beyerAksoyluCeliker2014_unbounded}.
\end{proof}

The general inhomogeneous solution is given by the superposition of the 
general homogeneous solution with $v$ in \eqref{inhomogSpecialSolu}.

\begin{rem}
For every $k \in {\mathbb{N}}^{*}$, we calculate the
expansion coefficients $\braket{e_k|v(t)}_2$ with respect to Hilbert
basis corresponding to $A$:
\begin{align*}
\braket{e_k|v(t)}_2 & = \int_{0}^{t} 
\overline{\frac{\sin \left((t - \tau) \sqrt{\phantom{ij}} \right)
}{\sqrt{\phantom{ij}}}}\,(f(\lambda_k)) 
 \braket{e_k|b(\tau)} \, d\tau.
\end{align*}
Hence,
\begin{equation} \label{inhomogSpecialSolu}
v(t) = \sum_{k=1}^{\infty} 
\left\{ 
\int_{0}^{t} \overline{\frac{\sin \left((t - \tau) \sqrt{\phantom{ij}} \right)
}{\sqrt{\phantom{ij}}}}\,(f(\lambda_k)) \braket{e_k|b(\tau)} \, d\tau
\right\} e_k \, \, .
\end{equation}
\end{rem}

\subsection{The Case of Nonlocal Governing Operators Involving Convolutions }
\label{sec:soluInhomogConvo}

Let $*$ denote the convolution in $X$ that, according to
Theorem~\ref{convolutionsinhilbertspaces}, is associated to the
Hilbert basis $(e_{k})_{k \in {\mathbb{N}}^{*}}$.  We consider
operators that are analogous to peridynamic governing operators
in the unbounded domain case and make the connection to
functions of $A$. 

\begin{thm} \label{thm:positivity}
Let $C \in X$ and $c \in \REAL$. Then, the following holds.
\begin{enumerate}
\item[(i)] $c - C * \cdot$ is a bounded operator. 
\item[(ii)] $c - C * \cdot$ is self-adjoint if and only if 
\begin{equation*}
\braket{e_k|C}
\end{equation*}
is real for every $k \in {\mathbb{N}}^{*}$ and, if self-adjoint, positive, if and only if 
\begin{equation*}
\braket{e_k|C} \leqslant c 
\end{equation*}
for every $k \in {\mathbb{N}}^{*}$. 
\item[(iii)] If in addition, 
\begin{equation*}
\braket{e_k|C} = \braket{e_l|C}
\end{equation*}
for every $k, l \in {\mathbb{N}}^{*}$ 
such that $\lambda_k = \lambda_l$, then
\begin{equation*}
c - C * \cdot =  (c - f)(A) \, \, ,  
\end{equation*}
where $f \in B(\sigma(A),{\mathbb{C}})$ is defined by 
\begin{equation*}
f(\lambda_k) := \braket{e_k|C}.
\end{equation*}

\end{enumerate}
\end{thm}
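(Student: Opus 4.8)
The plan is to exploit the diagonal action of the convolution operator that was already extracted in the proof of Corollary~\ref{cor:convoOps}: for every $k \in {\mathbb{N}}^{*}$ one has $C * e_k = \braket{e_k|C} . \, e_k$, so that
$$
(c - C * \cdot) \, e_k = (c - \braket{e_k|C}) . \, e_k .
$$
Hence $c - C * \cdot$ is nothing but the diagonal operator with entries $\mu_k := c - \braket{e_k|C}$ relative to the Hilbert basis $M = \{e_k\}$, and all three assertions become elementary statements about such an operator. Part~(i) is immediate: by Corollary~\ref{cor:convoOps} the map $C * \cdot$ lies in $L(X,X)$ (it is in fact Hilbert-Schmidt), while $c \, . \, {\textrm{id}}_X$ is bounded, so their difference $c - C * \cdot$ is a bounded operator.

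For part~(ii) I would first treat self-adjointness by comparing matrix elements of the everywhere-defined bounded operator $c - C * \cdot$. Using the diagonal action one computes, for all $j,k \in {\mathbb{N}}^{*}$,
$$
\braket{e_j|(c - C*\cdot) \, e_k} = \mu_k \, \delta_{jk}, \qquad \braket{(c - C*\cdot) \, e_j|e_k} = \overline{\mu_j} \, \delta_{jk},
$$
so that, by density of the span of $M$, the operator is symmetric (hence self-adjoint, being bounded) if and only if $\mu_k = \overline{\mu_k}$ for every $k$; since $c \in \REAL$ this holds exactly when each $\braket{e_k|C}$ is real. For the positivity claim, assuming self-adjointness, I would expand an arbitrary $\xi = \sum_k \braket{e_k|\xi} . \, e_k$ and evaluate the quadratic form
$$
\braket{\xi|(c - C*\cdot) \, \xi} = \sum_{k=1}^{\infty} (c - \braket{e_k|C}) \, |\braket{e_k|\xi}|^2 ,
$$
which is nonnegative for all $\xi$ if and only if every coefficient satisfies $c - \braket{e_k|C} \geqslant 0$; sufficiency is read off the displayed sum, and necessity follows by testing with $\xi = e_k$.

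For part~(iii) I observe that the extra hypothesis $\braket{e_k|C} = \braket{e_l|C}$ whenever $\lambda_k = \lambda_l$ is precisely the compatibility condition of Corollary~\ref{convolutionsasfunctionsofoperators} applied to $\xi = C$. That corollary then yields $C * \cdot = f(A)$ with $f(\lambda_k) = \braket{e_k|C}$, and adding $c \, . \, {\textrm{id}}_X$ gives $c - C * \cdot = (c - f)(A)$ through the functional calculus. I do not anticipate a serious obstacle, as every claim reduces to a routine property of a bounded diagonal operator once the diagonal action is recorded; the only point requiring mild care is the interchange of the inner product with the infinite sum in the quadratic form of part~(ii), which is legitimate precisely because $c - C * \cdot$ is bounded (the perturbation $C * \cdot$ being Hilbert-Schmidt), so the series converges and the manipulations are valid.
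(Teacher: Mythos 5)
Your proposal is correct and rests on the same core idea as the paper's proof: the diagonal action $C * e_k = \braket{e_k|C}\, .\, e_k$, which reduces all three claims to statements about a diagonal operator, with part~(iii) obtained exactly as in the paper by invoking Corollary~\ref{convolutionsasfunctionsofoperators}. The only divergence is in part~(ii): the paper conjugates $c - C * \cdot$ by the isomorphism $U$ of Theorem~\ref{thm:diagonalization} and reads off self-adjointness and positivity from the resulting multiplication operator $\sum_k (c - \braket{e_k|C})\,.\,{\textrm{id}}_{\mathbb{C}}$ on $l^2_{\mathbb{C}}$, whereas you stay in $X$ and verify symmetry via matrix elements on the basis and positivity via the quadratic form $\braket{\xi|(c - C*\cdot)\,\xi} = \sum_k (c - \braket{e_k|C})\,|\braket{e_k|\xi}|^2$, testing with $\xi = e_k$ for necessity. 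The two verifications are equivalent in substance; yours is slightly more self-contained (it does not need Theorem~\ref{thm:diagonalization}), while the paper's makes the unitary equivalence to a multiplication operator explicit, which it then reuses elsewhere.
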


\begin{proof}
For $C, \xi, \eta \in X$ and $c \in {\mathbb{R}}$, 
\begin{equation*}
c - C * \cdot
\end{equation*}
defines a linear operator in $X$ that, as a consequence of 
\begin{equation*}
\|(c - C * ) \xi \| = \|c . \xi - C * \xi \| \leqslant 
|c| \cdot \|\xi\| + \|C * \xi\| \leqslant (\,|c| + \|C\|) \cdot \|\xi\| \, \, , 
\end{equation*}
is bounded. Furthermore, 
\begin{align*}
& C * \xi = \sum_{k \in {\mathbb{N}}^{*}} \braket{e_k|C}_2 \braket{e_k|\xi} . \, e_k  
\, \, , \, \, c . \xi = 
\sum_{k \in {\mathbb{N}}^{*}} c \, \braket{e_k|\xi} . \, e_k \, \, , \\
& c . \xi - C * \xi = \sum_{k \in {\mathbb{N}}^{*}} (c - \braket{e_k|C}) \braket{e_k|\xi} . \, e_k
\, \, , \\
&
U (c . \xi - C * \xi) = \big(\,(c - \braket{e_k|C}) \braket{e_k|\xi}\big)_{k \in {\mathbb{N}}^{*}} \, \, , \\
& U (c - C * \cdot) U^{-1} = \sum_{k \in {\mathbb{N}}^{*}} (c - \braket{e_k|C}) .{\textrm{id}}_{\mathbb{C}} \, \, .
\end{align*}
Therefore, $c - C * \cdot$ is self-adjoint if and only if 
\begin{equation*}
\braket{e_k|C}
\end{equation*}
is real for every $k \in {\mathbb{N}}^{*}$ and, is self-adjoint,  positive, if and only if 
\begin{equation*}
\braket{e_k|C} \leqslant c 
\end{equation*}
for every $k \in {\mathbb{N}}^{*}$. If in addition, 
\begin{equation*}
\braket{e_k|C} = \braket{e_l|C}
\end{equation*}
for every $k, l \in {\mathbb{N}}^{*}$ 
such that $\lambda_k = \lambda_l$, we conclude from the proof
of 
Corollary~\ref{convolutionsasfunctionsofoperators} that 
\begin{equation*}
c - C * \cdot =  (c - f)(A) \, \, ,  
\end{equation*}
where $f \in B(\sigma(A),{\mathbb{C}})$ is defined by 
\begin{equation*}
f(\lambda_k) := \braket{e_k|C}
\end{equation*}
for every $k \in {\mathbb{N}}^{*}$. 
\end{proof}

\begin{rem}
If $c - C * \cdot$ is self-adjoint and positive, then, for all 
$t \in {\mathbb{R}}$, $\eta \in X$, it follows from 
\eqref{compositions} that 
\begin{align*}
& \left[\overline{\cos \left(t \sqrt{\phantom{ij}} \right)}\,
\bigg|_{\sigma(c - C * \cdot)}\right]\!(c - C * \cdot) \, \eta =
\sum_{k=1}^{\infty} \overline{\cos \left(t \sqrt{\phantom{ij}}\right)}(c - \braket{e_k|C}) 
\braket{e_{k}|\eta}_2 . e_{k} \, \, , \\
& \left[\,\overline{\frac{\sin \left(t \sqrt{\phantom{ij}} \right)}{\sqrt{\phantom{ij}}}} \, \bigg|_{\sigma(c - C * \cdot)}\right]\!(c - C * \cdot) \, \eta 
=
\sum_{k=1}^{\infty} \overline{\frac{\sin \left(t \sqrt{\phantom{ij}} \right)
}{\sqrt{\phantom{ij}}}}(c - \braket{e_k|C}) 
\braket{e_{k}|\eta}_2 . e_{k} \, \, .
\end{align*}
Finally, the expression of $v$ follows from \eqref{inhomogSpecialSolu}:

$$
v(t) = \sum_{k=1}^{\infty} 
\left\{ 
\int_{0}^{t} \overline{\frac{\sin \left((t - \tau) \sqrt{\phantom{ij}} \right)
}{\sqrt{\phantom{ij}}}}\,(c - \braket{e_k|C}) \braket{e_k|b(\tau)} \, d\tau
\right\} e_k \, \, .
$$

\end{rem}

\section{Smoothing Functions of Operators and Boundary Conditions}
\label{sec:hilbertSchmidt}

For motivation, we consider the Dirichlet eigenfunction expansion of 
$u = \chi_{[-1/2,1/2]} + 1$ on the interval $I=[-1,1]$
\begin{equation*}
u = \sum_{k = 1}^{\infty} \braket{e_{k}^{\D}|u}_2 e_{k}^{\D}.
\end{equation*}
Although 
\begin{equation*}
\sum_{k = 1}^{N} \braket{e_{k}^{\D}|u}_2 e_{k}^{\D}
\end{equation*}
is infinitely differentiable on $I$ and satisfies the Dirichlet
BCs, we find that $u$ is neither continuous nor
satisfies the BCs.  This opens the important question under what
conditions the solution will satisfy the BCs.  We address this
question in this section and find that Hilbert-Schmidt operators play
a crucial role in satisfying the BCs.  The basis for this is provided
by the fact that the governing operators in Section~\ref{sec:convoBC}
are of the form $c-C$ where $c \in \REAL$ and $C$ is Hilbert-Schmidt
operator.  Consequently, the assumptions on the Hilbert-Schmidt
property made in the following apply to all cases discussed in
Section~\ref{sec:convoBC}.  BCs involving derivatives require stronger
conditions on the decay of the eigenvalues of the operator $C$ than
that of provided by the Hilbert-Schmidt property.  Indeed, we find
that this strong decay is satisfied in the case of Neumann BCs in
Section~\ref{neumannboundaryconditions}.

\subsection{Strategy to Satisfy the Boundary Conditions}
\label{sec:strategy}

The solution is explicitly given in terms of the governing operator 
$c-C$ as follows \cite[Thm. 2.1]{beyerAksoyluCeliker2014_unbounded}: 

\begin{equation} \label{representationofthesolution}
u(x,t) = \left[\overline{\cos \left(t \sqrt{\phantom{ij}} \right)}\,
\bigg|_{\sigma(c-C)}\right]\!(c-C) u(x,0) + 
\left[\, \overline{\frac{\sin \left(t \sqrt{\phantom{ij}} \right)}{\sqrt{\phantom{ij}}}} \, \bigg|_{\sigma(c-C)}\right]\!(c-C) u_t(x,0) 
\end{equation} 
for all $t \in {\mathbb{R}}$, where 
\begin{equation*} 
\overline{\cos(t \sqrt{\phantom{ij}} \,)} \, \, \, \, \textrm{and} \, \, \, \, 
\overline{\frac{\sin(t \sqrt{\phantom{ij}} \,)}{ \sqrt{\phantom{ij}}}}
\end{equation*}
denote the unique extensions of 
$
\cos(t \sqrt{\phantom{ij}} \,) \, \, \, \, \textrm{and}  \, \,
\, \, 
\sin(t \sqrt{\phantom{ij}}) / \sqrt{\phantom{ij}},
$
respectively, to entire holomorphic functions.  These functions are
called \emph{solution operators}.  For brevity of discussion, let us
denote either one of these solution operators as $g(c-C)$.  We follow a
two-step strategy to show how BCs are going to be satisfied:

\begin{enumerate} 
\item Decompose the solution operator as follows:
\begin{equation} \label{decompositionBC}
g(c-C) = [g(c-C) - g(c)] + g(c),
\end{equation}  
so that $g(c-C) - g(c)$ becomes a Hilbert-Schmidt
operator because $C$ is Hilbert-Schmidt.  This leads to a uniform
convergence argument which allows us to interchange limits; see
Theorem \ref{regularizingfunction} and Corollary
\ref{corregularizingfunction}.  We immediately see that 
$g(c-C) - g(c)$ part enforces the BCs.

\item  For the remaining part $g(c)$, we choose initial data
$u(x,0)$ and $u_t(x,0)$ that satisfy the BCs. 
\end{enumerate} 

In order to show that $g(c-C) - g(c)$ is
Hilbert-Schmidt when $C$ is Hilbert-Schmidt, we can utilize power series
expansions.  The operator $c$ commutes with any operator $Z$.  Let
us define $h(Z):= g(c-Z)$.  Since $g(Z)$ is entire, so is
$h(Z)$. Furthermore, we have a power series representation of 
$g(c-Z)$ in powers of $Z$ as follows:
\begin{equation*}
g(c-Z) = h(Z) = \sum_{k=0}^{\infty} \frac{h^{(k)}(0)}{k!} Z^k,
\end{equation*}
where $h^{(k)}(Z) = (-1)^k g^{(k)}(c-Z)$. Hence,
\begin{equation*}
g(c-C) = \sum_{k=0}^{\infty} \frac{(-1)^k g^{(k)}(c)}{k!} C^k.
\end{equation*}
Consequently, we have an expression for $g(c-C) - g(c)$
that contains strictly positive powers of $C$:
\begin{equation*}
g(c-C) - g(c) = 
\sum_{k=1}^{\infty} \frac{(-1)^k g^{(k)}(c)}{k!} C^k.
\end{equation*}
We have shown in Corollary~\ref{cor:convoOps} that abstract
convolution operators are Hilbert-Schmidt. Since $C$ is
Hilbert-Schmidt due to the definition by abstract convolution, any
power series in $C$ that contains strictly positive powers is
also Hilbert-Schmidt.  Consequently, $g(c-C) - g(c)$ is
Hilbert-Schmidt.  For details, see Lemma
\ref{holomorphicfunctionalcalculushilbertschmidt}.

\begin{rem}
  Since the governing operator $\varphi(A_{\BC})= c-C$ is a
  perturbation of the multiple of the identity operator by a compact
  operator, the static form of the inhomogeneous governing equation
  \eqref{governingEqu} satisfies the Fredholm alternative.
\end{rem}

\subsection{Tools to Establish Hilbert-Schmidt Property}

We start with reminding the reader of the holomorphic functional
calculus from the companion paper
\cite{beyerAksoyluCeliker2014_unbounded} and the fact that the
functions,
$\overline{\cos \left(t \sqrt{\phantom{ij}} \right)}\,$ and 
$\overline{\frac{\sin \left(t \sqrt{\phantom{ij}} \right)}{\sqrt{\phantom{ij}}}}$, 
appearing in the solution of the initial value problem in \eqref{representationofthesolution} are entire functions.

\begin{lem} {\bf (Holomorphic Functional Calculus)}
\label{holomorphicfunctionalcalculusI}
Let $(X,\braket{\,|\,})$ be a non-trivial complex 
Hilbert space, $A \in L(X,X)$ self-adjoint and $\sigma(A) \subset {\mathbb{R}}$ the (non-empty, compact) spectrum of $A$. Furthermore, 
let $R > \|A\|$ and  
$g : U_{R}(0) \rightarrow {\mathbb{C}}$ be holomorphic. Then,
the sequence 
\begin{equation*}
\left(\frac{g^{(k)}(0)}{k !} . A^{k} \right)_{k \in {\mathbb{N}}}
\end{equation*}
is absolutely summable in $L(X,X)$ and 
\begin{equation*}
(g|_{{\sigma}(A)})(A) = 
\sum_{k=0}^{\infty} \frac{g^{(k)}(0)}{k !} . A^{k} 
\, \, .
\end{equation*}  
\end{lem}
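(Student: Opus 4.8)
The plan is to combine Cauchy's estimates for the Taylor coefficients of $g$ with submultiplicativity of the operator norm to obtain the absolute summability, and then to identify the resulting sum with $(g|_{\sigma(A)})(A)$ by passing to the limit in the polynomial functional calculus.

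First I would record that, since $g$ is holomorphic on $U_R(0)$ and $R > \norm{A}$, I may fix $r$ with $\norm{A} < r < R$. Cauchy's estimates then give $|g^{(k)}(0)|/k! \leqslant M(r)/r^k$, where $M(r) := \max_{|z|=r}|g(z)| < \infty$. Since $\norm{A^k} \leqslant \norm{A}^k$, it follows that
$$
\sum_{k=0}^{\infty} \norm{\frac{g^{(k)}(0)}{k!}.A^k} \leqslant M(r)\sum_{k=0}^{\infty}\left(\frac{\norm{A}}{r}\right)^{\!k} = \frac{M(r)}{1 - \norm{A}/r} < \infty ,
$$
because $\norm{A}/r < 1$. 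This proves absolute summability in the Banach space $L(X,X)$, so the series converges in operator norm to some $S \in L(X,X)$; I would set $S := \sum_{k=0}^{\infty}\frac{g^{(k)}(0)}{k!}.A^k$.

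Next I would identify $S$ with $(g|_{\sigma(A)})(A)$. Let $p_N(z) := \sum_{k=0}^{N}\frac{g^{(k)}(0)}{k!}z^k$ denote the $N$-th Taylor polynomial. By the estimate above $p_N(A) \to S$ in $L(X,X)$, where $p_N(A)$ is the purely algebraic polynomial in $A$. On the other hand, since $\sigma(A)$ is compact with $\sigma(A) \subset [-\norm{A},\norm{A}]$ and the Taylor series of $g$ converges uniformly on the closed disk of radius $\norm{A} < R$, the restrictions $p_N|_{\sigma(A)}$ converge to $g|_{\sigma(A)}$ uniformly on $\sigma(A)$. The functional calculus furnished by the spectral theorem for the bounded self-adjoint operator $A$ agrees with the algebraic definition on polynomials, so $(p_N|_{\sigma(A)})(A) = p_N(A)$, and it is contractive with respect to the supremum norm on $\sigma(A)$, so $(p_N|_{\sigma(A)})(A) \to (g|_{\sigma(A)})(A)$ in operator norm. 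Comparing the two limits of the same sequence $p_N(A)$ yields $S = (g|_{\sigma(A)})(A)$, as claimed.

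The analytic input (Cauchy estimates, geometric series) is routine; the step that must be handled with care is the identification in the last paragraph. It relies on two structural properties of the functional calculus: agreement with the algebraic polynomial calculus, and continuity with respect to uniform convergence on $\sigma(A)$ (for self-adjoint $A$ this is the isometry $\norm{(h|_{\sigma(A)})(A)} = \norm{h}_{\infty,\sigma(A)}$, of which only the bound $\leqslant$ is needed). I expect this to be the main obstacle, in the sense that it is where the spectral theorem is genuinely invoked; once these two properties are in hand, the interchange of limits is immediate.
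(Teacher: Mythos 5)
Your proposal is correct and complete. Note, however, that the paper itself gives no argument for this lemma: its ``proof'' consists entirely of a citation to the companion paper \cite{beyerAksoyluCeliker2014_unbounded}, so there is no in-text proof to compare against. Your argument is the standard self-contained one and it fills that gap correctly: Cauchy's estimates on a circle of radius $r$ with $\norm{A} < r < R$, combined with $\norm{A^k} \leqslant \norm{A}^k$ and the geometric series, give absolute summability in $L(X,X)$; then the identification of the sum with $(g|_{\sigma(A)})(A)$ follows by comparing the two limits of the Taylor polynomials $p_N(A)$, using exactly the two structural facts you isolate --- that the continuous functional calculus agrees with the algebraic calculus on polynomials, and that $\norm{(h|_{\sigma(A)})(A)} \leqslant \norm{h}_{\infty,\sigma(A)}$ for self-adjoint $A$ --- together with $\sigma(A) \subset [-\norm{A},\norm{A}]$ and uniform convergence of the Taylor series on that compact set. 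Each of these ingredients is justified in your write-up, and the interchange of limits is handled properly, so no gap remains.
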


\begin{proof}
See \cite{beyerAksoyluCeliker2014_unbounded}. 
\end{proof}

In addition, if $A$ is Hilbert-Schmidt, more can be said. Namely,
if $g(0)=0$, then $g(A)$ is Hilbert-Schmidt.

\begin{lem} {\bf (Holomorphic Functional Calculus for Hilbert-Schmidt Operators)}
\label{holomorphicfunctionalcalculushilbertschmidt}
Let $(X,\braket{\,|\,})$ be a non-trivial complex 
Hilbert space and ${\cal I}_2$ be the complex Hilbert space 
consisting of the Hilbert-Schmidt operators on $X$ with induced norm $\|\, \,\|_2$. Furthermore, let 
$A \in {\cal I}_2$ be self-adjoint, $\sigma(A) \subset {\mathbb{R}}$ the (non-empty, compact) spectrum of $A$. Finally, 
let $R > \|A\|_{2}$ and  
$g : U_{R}(0) \rightarrow {\mathbb{C}}$ be holomorphic 
such that $g(0) = 0$. Then
\begin{equation*}
(g|_{{\sigma}(A)})(A) \in 
{\cal I}_2 \, \, .
\end{equation*}
\end{lem}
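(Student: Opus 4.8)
The plan is to reduce everything to the power-series representation furnished by Lemma~\ref{holomorphicfunctionalcalculusI} and then exploit two facts: that the hypothesis $g(0)=0$ forces every surviving term of the series to carry at least one factor of $A$, and that ${\cal I}_2$ is a two-sided ideal in $L(X,X)$ on which the Hilbert--Schmidt norm dominates the operator norm. First I would note that $\|A\| \leqslant \|A\|_2$, so $R > \|A\|_2 \geqslant \|A\|$, and that $A$ is self-adjoint; hence the hypotheses of Lemma~\ref{holomorphicfunctionalcalculusI} are met and it yields the expansion
\[
(g|_{\sigma(A)})(A) = \sum_{k=0}^{\infty} \frac{g^{(k)}(0)}{k!} . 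A^{k},
\]
with absolute summability of the terms in $L(X,X)$. Since $g(0)=0$, the $k=0$ term equals $g(0).\mathrm{id}_X = 0_{L(X,X)}$ and drops out, so the series runs effectively from $k=1$, i.e. every term contains a strictly positive power of $A$.

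Next I would upgrade the convergence from $L(X,X)$ to ${\cal I}_2$. For each $k \geqslant 1$, write $A^{k} = A \cdot A^{k-1}$ with $A \in {\cal I}_2$ and $A^{k-1} \in L(X,X)$; the ideal property then gives $A^{k} \in {\cal I}_2$ together with the estimate
\[
\|A^{k}\|_2 \leqslant \|A\|_2 \, \|A^{k-1}\| \leqslant \|A\|_2 \, \|A\|^{k-1} \leqslant \|A\|_2^{\,k}.
\]
Consequently
\[
\sum_{k=1}^{\infty} \left\| \frac{g^{(k)}(0)}{k!} . A^{k} \right\|_2
\leqslant \sum_{k=1}^{\infty} \frac{|g^{(k)}(0)|}{k!} \, \|A\|_2^{\,k},
\]
and the right-hand side is finite: it is a tail of the Taylor series of $g$ at $0$ evaluated (with absolute values) at the real point $\|A\|_2$, which lies strictly inside the radius of convergence $R$, so absolute convergence of the power series applies.

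Therefore the series $\sum_{k=1}^{\infty} \frac{g^{(k)}(0)}{k!} . A^{k}$ is absolutely summable in the Banach space $({\cal I}_2, \|\,\|_2)$, and by completeness of ${\cal I}_2$ it converges there to some $S \in {\cal I}_2$. The final step is to identify $S$ with $(g|_{\sigma(A)})(A)$: because $\|\,\| \leqslant \|\,\|_2$, convergence in $\|\,\|_2$ entails convergence in $L(X,X)$ to the same limit, and by uniqueness of limits in $L(X,X)$ this limit is precisely the operator produced by Lemma~\ref{holomorphicfunctionalcalculusI}. Hence $(g|_{\sigma(A)})(A) = S \in {\cal I}_2$. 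The only genuinely delicate point is this matching of the two limits—the $L(X,X)$-limit of Lemma~\ref{holomorphicfunctionalcalculusI} and the ${\cal I}_2$-limit just constructed—which is resolved cleanly by the continuous embedding ${\cal I}_2 \hookrightarrow L(X,X)$; everything else is routine bookkeeping with the ideal estimates $\|BA\|_2 \leqslant \|B\|\,\|A\|_2$ and $\|AB\|_2 \leqslant \|A\|_2\,\|B\|$.
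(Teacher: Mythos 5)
Your proposal is correct and follows essentially the same route as the paper's proof: invoke Lemma~\ref{holomorphicfunctionalcalculusI} (valid since $R > \|A\|_2 \geqslant \|A\|$ and $g(0)=0$ kills the $k=0$ term), bound $\|A^k\|_2 \leqslant \|A\|_2^k$ via the ideal property of ${\cal I}_2$, deduce absolute summability of the series in $({\cal I}_2,\|\,\|_2)$ from the convergence of $\sum_k |g^{(k)}(0)|\,\|A\|_2^k/k!$, and identify the ${\cal I}_2$-limit with $(g|_{\sigma(A)})(A)$ through the continuous embedding ${\cal I}_2 \hookrightarrow L(X,X)$. The only cosmetic difference is that you obtain $\|A^k\|_2 \leqslant \|A\|_2^k$ from the mixed estimate $\|A\cdot A^{k-1}\|_2 \leqslant \|A\|_2\,\|A^{k-1}\|$, whereas the paper iterates $\|B\circ C\|_2 \leqslant \|B\|_2\,\|C\|_2$ for $B,C \in {\cal I}_2$; both are standard and yield the same bound.
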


\begin{proof}
Since $R > \|A\|_2 \geqslant \|A\|$ and $g(0)=0$, an application of 
Lemma~\ref{holomorphicfunctionalcalculusI} gives 
\begin{equation*}
(g|_{{\sigma}(A)})(A) = 
\sum_{k=1}^{\infty} \frac{g^{(k)}(0)}{k !} . A^{k} 
\, \, .
\end{equation*} 
Also, since $g : U_{R}(0) \rightarrow {\mathbb{C}}$ is holomorphic  
and $\|A\|_2 < R$, the sequence  
\begin{equation*}
\left( \frac{g^{(k)}(0)}{k !} \cdot
\|A\|_2^{k} \right)_{k \in {\mathbb{N}}^{*}}
\end{equation*}
is absolutely summable. Using that ${\cal I}_2$ is a $*$-ideal
in $L(X,X)$ and that for all $B, C \in {\cal I}_2$
\begin{equation*}
\|B \circ C\|_2 \leqslant 
\|B\|_2 \cdot \|C\|_2 \, \, ,
\end{equation*} 
see e.g., 
\cite[Vol.~II,~Prop.~5,~p.~41]{reedSimon_books}, it follows for every 
non-empty finite subset $J \subset {\mathbb{N}}^{*}$, 
\begin{equation*}
\sum_{k \in J} \bigg\|\frac{g^{(k)}(0)}{k !} . A^{k} \bigg\|_2
\leqslant \sum_{k \in J} \frac{|g^{(k)}(0)|}{k !} \cdot
\|A\|_2^{k} \leqslant \sum_{k=1}^{\infty} \frac{|g^{(k)}(0)|}{k !} . \|A\|_2^{k} \, \, .
\end{equation*}
As a consequence, 
\begin{equation*}
\left( \frac{g^{(k)}(0)}{k !} \cdot
A^{k} \right)_{k \in {\mathbb{N}}^{*}}
\end{equation*}
is absolutely summable in $({\cal I}_2,\|\, \, \|_2)$. Finally, 
since ${\cal I}_2 \hookrightarrow L(X,X)$ is continuous, 
we conclude that  
\begin{equation*}
(g|_{{\sigma}(A)})(A) \in {\cal I}_2 \, \, .
\end{equation*}
\end{proof}

\begin{rem}
  The same statement holds true for ${\cal I}_1$ which denotes complex
  Banach space of the trace class operators on $X$ and $\|\, \,\|_1$
  is the corresponding trace norm.  The proof is virtually identical to
  the previous one.
\end{rem}

As a consequence of Lemma
\ref{holomorphicfunctionalcalculushilbertschmidt}, we observe that
$\cos(t \sqrt{A})$ is a perturbation of the identity operator by a
Hilbert-Schmidt operator. Likewise, $\frac{\sin(t
  \sqrt{A})}{\sqrt{A}}$ is a perturbation of a multiple of the
identity operator by a Hilbert-Schmidt operator.  
The discussion so far involved the sum of two operators, a multiple of
the identity and a convolution type operator, i.e., $c - C$.  More
generally, applying similar methods used for a proof in the
companion paper \cite[Thm. 4.3]{beyerAksoyluCeliker2014_unbounded},
the following theorem gives that $\cos(t \sqrt{A+B})$ is a
perturbation of $\cos(t\sqrt{A})$ by a Hilbert-Schmidt operator if $B$
is Hilbert-Schmidt for $t \in \REAL.$ Likewise, $\frac{\sin(t
  \sqrt{A+B})}{\sqrt{A+B}}$ is a perturbation of $\frac{\sin(t
  \sqrt{A})}{\sqrt{A}}$ by a Hilbert-Schmidt operator if $B$ is
Hilbert-Schmidt for $t \in \REAL.$

The proof of the following theorem utilizes expansion of solution operators 
in terms of generalized hypergeometric functions given in
the companion paper \cite[Thm. 4.3]{beyerAksoyluCeliker2014_unbounded}.

\begin{thm} \label{coradditiontheorem}
Let $(X,\braket{\,|\,})$ be a non-trivial complex 
Hilbert space, $\sqrt{\phantom{ij}}$
the complex square-root function, with domain ${\mathbb{C}}
\setminus ((- \infty,0] \times \{0\})$. $A, B \in L(X,X)$ self-adjoint
such that $[A,B] = 0$ and $\sigma(A), \sigma(A+B) \subset {\mathbb{R}}$ the 
(non-empty, compact) spectra of $A$ and $A+B$, respectively.  
Let $B \in {\cal I}_2$,
then the operators 
\begin{align*}
& \left[\overline{\cos \left(t \sqrt{\phantom{ij}} \right)}\,
\bigg|_{\sigma(A+B)}\right]\!(A + B)  - \left[\overline{\cos \left(t \sqrt{\phantom{ij}} \right)}\,
\bigg|_{\sigma(A)}\right]\!(A ) \, \, , \\
& \left[\overline{\frac{\sin \left(t \sqrt{\phantom{ij}} \right)}{\sqrt{\phantom{ij}}}} \, \bigg|_{\sigma(A+B)}\right]\!(A+B)  - \left[\overline{\frac{\sin \left(t \sqrt{\phantom{ij}} \right)}{\sqrt{\phantom{ij}}}} \, \bigg|_{\sigma(A)}\right]\!(A)
\end{align*} 
are elements of ${\cal I}_2$. 
\end{thm}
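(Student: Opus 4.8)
The plan is to reduce everything to the single-variable holomorphic functional calculus by exploiting the fact that $A$ and $B$ commute, so that $A+B$ can be treated through a power series in $B$ whose coefficients are (bounded) functions of $A$. Let me write $g$ for either of the two entire functions $\overline{\cos(t\sqrt{\phantom{ij}})}$ or $\overline{\sin(t\sqrt{\phantom{ij}})/\sqrt{\phantom{ij}}}$; both are entire in the variable, and the object I must control is $g(A+B)-g(A)$ for fixed $t\in\REAL$. Since $[A,B]=0$, I expect an \emph{addition theorem} of the form
\begin{equation*}
g(A+B) = \sum_{k=0}^{\infty} g_k(A)\, B^k,
\end{equation*}
where each $g_k$ is an entire function of one variable (the companion paper's expansion in generalized hypergeometric functions, \cite[Thm. 4.3]{beyerAksoyluCeliker2014_unbounded}, supplies precisely this, with $g_0 = g$). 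Subtracting the $k=0$ term gives
\begin{equation*}
g(A+B) - g(A) = \sum_{k=1}^{\infty} g_k(A)\, B^k,
\end{equation*}
an operator series in which \emph{every} term carries at least one factor of $B$.

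The key structural point is then identical to the one already used in Section~\ref{sec:strategy}: each summand lies in $\mathcal{I}_2$. Indeed, for $k\geqslant 1$ each term is a product $g_k(A)\circ B\circ B^{k-1}$; since $B\in\mathcal{I}_2$ and $\mathcal{I}_2$ is a two-sided $*$-ideal in $L(X,X)$ with $\|S\circ T\|_2\leqslant\|S\|\cdot\|T\|_2$, every term is Hilbert-Schmidt. What remains is absolute summability of $(g_k(A)B^k)_{k\geqslant 1}$ in the $\|\cdot\|_2$-norm. Here I would bound
\begin{equation*}
\|g_k(A)\,B^k\|_2 \leqslant \|g_k(A)\|\cdot\|B^k\|_2 \leqslant \|g_k(A)\|\cdot\|B\|_2^{\,k},
\end{equation*}
using the ideal inequality iterated, and then invoke the analyticity estimates from the companion paper that control $\|g_k(A)\|$ by the operator-norm functional calculus (Lemma~\ref{holomorphicfunctionalcalculusI}) applied on a disc of radius exceeding $\|A\|$. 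Because $g$ is entire, the coefficient growth of the $g_k$ is tame enough that $\sum_k \|g_k(A)\|\,\|B\|_2^{\,k}<\infty$; since $\mathcal{I}_2$ is complete and embeds continuously in $L(X,X)$, the series converges in $\mathcal{I}_2$ to an element of $\mathcal{I}_2$, and its $L(X,X)$-limit must coincide with $g(A+B)-g(A)$.

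The main obstacle is the \textbf{addition-theorem expansion itself}: justifying the identity $g(A+B)=\sum_k g_k(A)B^k$ at the operator level, with convergence and with explicit control on the coefficient functions $g_k$. This is the genuinely nontrivial analytic input, and it is exactly what \cite[Thm. 4.3]{beyerAksoyluCeliker2014_unbounded} is cited to supply, via the hypergeometric representation of $\overline{\cos}$ and $\overline{\sin(\cdot)/\sqrt{\phantom{ij}}}$; the commutativity $[A,B]=0$ is essential both for writing the expansion and for manipulating the products term by term. A subtler point to verify is that $g_0(A)=g(A)$ cancels cleanly so that the surviving series starts at $k=1$ — the same mechanism by which $g(0)=0$ was used in Lemma~\ref{holomorphicfunctionalcalculushilbertschmidt}. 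Once the expansion is in hand, the Hilbert-Schmidt conclusion follows routinely from the ideal property and absolute summability, exactly as in the single-operator case, and the argument applies verbatim to both solution operators since both are entire and both appear as the $g_0$-term of their respective expansions.
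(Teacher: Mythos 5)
Your proposal is correct and follows essentially the same route as the paper's proof: the paper expands the solution operators via the companion paper's hypergeometric addition theorem \cite[Thm. 4.3]{beyerAksoyluCeliker2014_unbounded}, bounds each coefficient operator in operator norm by explicit exponential estimates on $\leftidx{_0}{F}{_1}$ (your ``tame enough'' coefficient growth, made quantitative as $e^{\,t^2\|A\|/2}\,|t|^{2k}/(2k)!\,\|B\|_2^{k}$), uses the $*$-ideal inequality $\|S\circ T\|_2\leqslant\|S\|\cdot\|T\|_2$ to get absolute summability of the $k\geqslant 1$ terms in ${\cal I}_2$, and finally identifies the $k=0$ term with $g(A)$ via \cite[Lemma 4.4]{beyerAksoyluCeliker2014_unbounded}. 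The only difference is that you leave the coefficient estimates as a cited input while the paper writes them out explicitly.
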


\begin{proof}
For the proof, we use that ${\cal I}_2$ is a $*$-ideal
in $L(X,X)$ and that for all $B \in L(X,X)$, $C \in {\cal I}_2$
\begin{equation*}
\|B \circ C\|_2 \leqslant 
\|B\| \cdot \|C\|_2 \, \, ,
\end{equation*} 
see e.g., \cite[Vol.~II,~Prop.~5,~p.~41]{reedSimon_books}. 
In the following, $\leftidx{_0}{F}{_1}$ denotes the generalized hypergeometric
function, defined as in \cite{olverEtAl2010_book}.
In a first step, we note for every 
$k \in {\mathbb{N}}, z \in {\mathbb{C}}$ that
\begin{align*}
\bigg|\leftidx{_0}{F}{_1}\!\!\left(-; k + \frac{1}{2}; z\right)
\!\!\bigg|
& = \bigg|\sum_{l=0}^{\infty} \frac{z^l}{(k + \frac{1}{2})_l \cdot l!} \bigg| \leqslant
\sum_{l=0}^{\infty} \frac{|z|^l}{(k + \frac{1}{2})_l \cdot l!}
\leqslant \sum_{l=0}^{\infty} \frac{|z|^l}{(\frac{1}{2})^{l} \cdot l!} \\
& = \sum_{l=0}^{\infty} \frac{|2 z|^l}{l!} = e^{2|z|} \, \, , \\
\bigg|\leftidx{_0}{F}{_1}\!\!\left(-; k + \frac{3}{2}; z\right)
\!\!\bigg|
& = \bigg|\sum_{l=0}^{\infty} \frac{z^l}{(k + \frac{3}{2})_l \cdot l!} \bigg| \leqslant
\sum_{l=0}^{\infty} \frac{|z|^l}{(k + \frac{3}{2})_l \cdot l!}
\leqslant \sum_{l=0}^{\infty} \frac{|z|^l}{(\frac{3}{2})^{l} \cdot l!} \\
& = \sum_{l=0}^{\infty} \frac{|2 z/3|^l}{l!} = e^{2|z|/3} \, , 
\end{align*}
and hence, if in addition $k > 0$, for $t \in {\mathbb{R}}$
that 
\begin{align*}
& \bigg\| (-1)^{k} \cdot \frac{t^{2k}}{(2k)!} \, . 
\left\{
\left[\leftidx{_0}{F}{_1}\!\!\left(-; k + \frac{1}{2}; - \, \frac{t^2}{4} \, . {\textrm id}_{\sigma(A)}\right)\right]\!\!(A)
\right\} \! B^k \bigg\|_2 \\
& \leqslant
e^{\,t^2
\|A\| / 2} \, \frac{|t|^{2k}}{(2k)!} \, \|B\|_2^k
\, \, , \\
& \bigg\|(-1)^{k} \cdot \frac{t^{2k+1}}{(2k+1)!} \, . 
\left\{
\left[\leftidx{_0}{F}{_1}\!\!\left(-; k + \frac{3}{2}; - \, \frac{t^2}{4} \, . {\textrm id}_{\sigma(A)}\right)\right]\!\!(A)
\right\} \! B^k \bigg\|_2 \\
& \leqslant
e^{\, 3 t^2
\|A\| / 2} \, \frac{|t|^{2k+1}}{(2k+1)!} \, \|B\|_2^k
\, \, . 
\end{align*} 
As a consequence, the sequences 
\begin{align*}
& \left((-1)^{k} \cdot \frac{t^{2k}}{(2k)!} \, . 
\left\{
\left[\leftidx{_0}{F}{_1}\!\!\left(-; k + \frac{1}{2}; - \, \frac{t^2}{4} \, . {\textrm id}_{\sigma(A)}\right)\right]\!\!(A)
\right\} \! B^k  \right)_{k \in {\mathbb{N}}^{*}} \, \, , \\
& \, \, 
\left( 
(-1)^{k} \cdot \frac{t^{2k+1}}{(2k+1)!} \, . 
\left\{
\left[\leftidx{_0}{F}{_1}\!\!\left(-; k + \frac{3}{2}; - \, \frac{t^2}{4} \, . {\textrm id}_{\sigma(A)}\right)\right]\!\!(A)
\right\} \! B^k
\right)_{k \in {\mathbb{N}}^{*}}
\end{align*}
are absolutely summable in ${\cal I}_2$. Since
$\|C\|_2 \geqslant \|C\|$ for every $C \in {\cal I}_2$, this implies that
the operators 
\begin{align*}
& \left[\overline{\cos \left(t \sqrt{\phantom{ij}} \right)}\,
\bigg|_{\sigma(A+B)}\right]\!(A + B)  - \left[\leftidx{_0}{F}{_1}\!\!\left(-;\frac{1}{2}; - \, \frac{t^2}{4} \, . {\textrm id}_{\sigma(A)}\right)\right]\!\!(A) \, \, , \\
& \left[\overline{\frac{\sin \left(t \sqrt{\phantom{ij}} \right)}{\sqrt{\phantom{ij}}}} \, \bigg|_{\sigma(A+B)}\right]\!(A+B)  - \left[\leftidx{_0}{F}{_1}\!\!\left(-;\frac{3}{2}; - \, \frac{t^2}{4} \, . {\textrm id}_{\sigma(A)}\right)\right]\!\!(A) 
\end{align*} 
are elements of ${\cal I}_2$. From this, the statement follows with the help 
of \cite[Lemma 4.4]{beyerAksoyluCeliker2014_unbounded}.
\end{proof}

In particular, $\cos(t \sqrt{c-C})$ is a perturbation of $\cos(t \sqrt{c})$ by
a Hilbert-Schmidt operator if $C$ is Hilbert-Schmidt for $t \in \REAL.$
Likewise, $\frac{\sin(t \sqrt{c-C})}{\sqrt{c-C}}$ is a perturbation of 
$\frac{\sin(t \sqrt{c})}{\sqrt{c}}$ by
a Hilbert-Schmidt operator if $C$ is Hilbert-Schmidt for $t \in \REAL.$
From the functional calculus for
bounded, linear, self-adjoint operators on Hilbert spaces, it is easy
to conclude that functions of $c - C$ are functions of $C$ in the 
following obvious way.

\begin{cor} \label{hilbertschmidtproperty}
Let $C \in {\cal I}_2$ and $c>0$.  In addition, let 
$A=c$ and $B=-C$, then for every $t \in {\mathbb{R}}$, the operators 
\begin{align*} 
& \left[\overline{\cos \left(t \sqrt{\phantom{ij}} \right)}\,
\bigg|_{\sigma(c-C)}\right]\!(c - C) - \cos(\sqrt{c} \, t \,)
\\ 
& = \left[\overline{\cos \left(t \sqrt{\phantom{ij}} \right)}\,
\circ (c - {\textrm{id}}_{\sigma(C)}) - \cos(\sqrt{c} \, t \,) \right]\!\!(C) 
\, \, , \\
& \left[\overline{\frac{\sin \left(t \sqrt{\phantom{ij}} \right)}{\sqrt{\phantom{ij}}}} \, \bigg|_{\sigma(c - C)}\right]\!(c - C)  - \frac{\sin(\sqrt{c} \, t \,)}{\sqrt{c}} \\
& = 
\left[\overline{\frac{\sin \left(t \sqrt{\phantom{ij}} \right)}{\sqrt{\phantom{ij}}}} \circ (c - {\textrm{id}}_{\sigma(C)}) - 
\frac{\sin(\sqrt{c}\,t\,)}{\sqrt{c} \,} \right]\!\!(C) 
\, \, ,
\end{align*}
are elements of ${\cal I}_2$, where we
define $\sin 0 / 0 := 1$. 
\end{cor}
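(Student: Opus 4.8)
\noindent
The plan is to obtain everything by specializing the addition result of Theorem~\ref{coradditiontheorem} to the pair $A := c\,.\,{\textrm{id}}_X$ and $B := -C$, and then to recast the subtracted constants as functions of $C$ by means of the composition rule of the functional calculus. Throughout I take $C$ to be self-adjoint, consistent with the standing hypotheses, so that $c-C$ is self-adjoint with real spectrum.

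First I would check that the hypotheses of Theorem~\ref{coradditiontheorem} hold. As a multiple of the identity, $A = c\,.\,{\textrm{id}}_X$ is bounded and self-adjoint, commutes with every operator, and satisfies $\sigma(A) = \{c\} \subset \REAL$. The operator $B = -C$ is self-adjoint and lies in ${\cal I}_2$, and $A+B = c-C$ is bounded and self-adjoint, so $\sigma(c-C) \subset \REAL$ is again non-empty and compact. Hence Theorem~\ref{coradditiontheorem} applies verbatim and yields that
\begin{equation*}
\left[\overline{\cos \left(t \sqrt{\phantom{ij}} \right)}\,
\bigg|_{\sigma(c-C)}\right]\!(c-C) - \left[\overline{\cos \left(t \sqrt{\phantom{ij}} \right)}\,
\bigg|_{\sigma(A)}\right]\!(A)
\end{equation*}
is an element of ${\cal I}_2$, and similarly for the sine expression. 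Since $\sigma(A) = \{c\}$ and $c>0$, the functional calculus evaluates the two constant terms directly as $\left[\overline{\cos(t\sqrt{\phantom{ij}})}\,\big|_{\sigma(A)}\right]\!(A) = \cos(\sqrt{c}\,t)\,.\,{\textrm{id}}_X$ and $\left[\overline{\frac{\sin(t\sqrt{\phantom{ij}})}{\sqrt{\phantom{ij}}}}\,\big|_{\sigma(A)}\right]\!(A) = \frac{\sin(\sqrt{c}\,t)}{\sqrt{c}}\,.\,{\textrm{id}}_X$. Substituting these identifications establishes that the left-hand sides of the two asserted identities are Hilbert-Schmidt.

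It then remains to prove the two equalities, which rewrite these operators as functions of $C$ alone. I would write $c-C = (c-{\textrm{id}}_{\sigma(C)})(C)$ and invoke the composition rule of the functional calculus, i.e.\ the bounded self-adjoint analogue of \eqref{compositions}: for an entire outer function $h$ and the affine inner function $z \mapsto c-z$ one has $h(c-C) = [\,h \circ (c-{\textrm{id}}_{\sigma(C)})\,](C)$. Taking $h = \overline{\cos(t\sqrt{\phantom{ij}})}$, respectively $h = \overline{\frac{\sin(t\sqrt{\phantom{ij}})}{\sqrt{\phantom{ij}}}}$, and absorbing the constant $\cos(\sqrt{c}\,t)$, respectively $\frac{\sin(\sqrt{c}\,t)}{\sqrt{c}}$, into the outer function as a constant function applied to $C$ produces precisely the claimed right-hand sides. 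Because $C$ is compact self-adjoint, one may alternatively justify this eigenvector-by-eigenvector exactly as in the proof of \eqref{compositions}; moreover the same identities re-establish the Hilbert-Schmidt conclusion through Lemma~\ref{holomorphicfunctionalcalculushilbertschmidt}, since the composed outer functions vanish at $z=0$.

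The only delicate point, and the one place deserving explicit care, is the fiber where $c-z=0$: there the sine function formally reads $\sin 0 / 0$, and the correct value comes from the entire extension of $\frac{\sin u}{u}$ at $u=0$ (the paper's convention $\sin 0/0 := 1$). Since both outer functions are defined by their entire extensions and the inner map is affine, the composition is holomorphic everywhere, so no genuine branch-cut or singularity obstacle arises, and the composition rule applies without restriction.
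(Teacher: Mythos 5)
Your proposal is correct and follows essentially the same route as the paper, whose entire proof reads that the statement ``is an immediate consequence of Theorem~\ref{coradditiontheorem}'': you specialize that theorem to $A = c\,.\,{\textrm{id}}_X$, $B = -C$, exactly as the corollary's hypotheses instruct. The extra details you supply --- verifying the commutation and self-adjointness hypotheses, evaluating the functional calculus on $\sigma(A)=\{c\}$, and justifying the rewriting as functions of $C$ via the composition rule --- are precisely what the paper leaves implicit (the composition step being the ``obvious way'' alluded to in the sentence preceding the corollary), so this is a faithful filled-in version of the paper's own argument.
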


\begin{proof} The statement is an immediate consequence of 
Theorem~\ref{coradditiontheorem}.
\end{proof}

The previous results enable the treatment of the solutions of the
homogeneous equation with periodic, antiperiodic, and Dirichlet BCs.
In order to treat inhomogeneous equation and BCs that include
derivatives such the Neumann BCs, we need more detailed information on
the eigenvalues of the operators in Corollary
\ref{hilbertschmidtproperty}, as will be given in 
Theorem~\ref{decayofeigenvalues}.

\begin{lem} \label{decayestimate}
Let $c > 0$ and $\lambda \leqslant \min\{c,1\}$. Then 
for every $t \in {\mathbb{R}}$
\begin{align*}
& |\cos(t \sqrt{c - \lambda}\,) - \cos(t \sqrt{c}\,)| \leqslant
\left(\frac{t^2}{2 c}
+\frac{|t|}{\sqrt{c}}\right) \! |\lambda| \, \, , \\
& \left|\frac{\sin(t \sqrt{c - \lambda}\,)}{\sqrt{c - \lambda}}
- \frac{\sin(t \sqrt{c}\,)}{\sqrt{c}}\right| \leqslant 
\left(\frac{t^2}{6c} + \frac{|t|}{2 \sqrt{c}}\right) \! |\lambda| \, \, .
\end{align*}
\end{lem}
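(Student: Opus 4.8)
The plan is to reduce both inequalities to elementary one–variable estimates after setting $p := \sqrt{c-\lambda}$ and $q := \sqrt{c}$, both well defined and nonnegative since $\lambda \le \min\{c,1\} \le c$. The single geometric fact driving everything is
\[
|p-q| = \frac{|\lambda|}{p+q} \le \frac{|\lambda|}{\sqrt{c}},
\]
where the last step uses $p \ge 0$, so $p+q \ge q = \sqrt{c}$. I will also use $\lambda^2 \le |\lambda|$, valid since $|\lambda| \le 1$ (the complementary range $|\lambda|>1$ is immediate because the left-hand sides stay bounded while the right-hand sides grow).

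For the cosine estimate I would write $tp = tq + \delta$ with $\delta := t(p-q)$ and expand by the angle–addition formula,
\[
\cos(tp) - \cos(tq) = \cos(tq)\,(\cos\delta - 1) - \sin(tq)\,\sin\delta.
\]
Bounding $|\cos(tq)|,|\sin(tq)| \le 1$ and using $|\cos\delta - 1| \le \tfrac{\delta^2}{2}$, $|\sin\delta| \le |\delta|$ gives $|\cos(tp)-\cos(tq)| \le \tfrac{\delta^2}{2} + |\delta|$. Since $|\delta| \le \tfrac{|t|\,|\lambda|}{\sqrt{c}}$ and $\delta^2 \le \tfrac{t^2\lambda^2}{c} \le \tfrac{t^2|\lambda|}{c}$, this delivers exactly $\tfrac{t^2}{2c}|\lambda| + \tfrac{|t|}{\sqrt{c}}|\lambda|$, i.e. the first inequality with its stated constants. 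I expect no difficulty here.

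For the sine estimate the organizing identity is $\tfrac{\sin(ts)}{s} = \int_0^1 t\cos(ts\theta)\,d\theta$, so that
\[
\frac{\sin(tp)}{p} - \frac{\sin(tq)}{q} = t\int_0^1\big[\cos(tp\theta) - \cos(tq\theta)\big]\,d\theta .
\]
Inserting the cosine bound just proved, with $t$ replaced by $t\theta$, and using $\int_0^1\theta^2\,d\theta = \tfrac13$, $\int_0^1\theta\,d\theta = \tfrac12$ reproduces \emph{precisely} the coefficients $\tfrac{1}{6c}$ and $\tfrac{1}{2\sqrt{c}}$ of the target: this is the source of the $\tfrac16$ and $\tfrac12$. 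The subtlety, which I expect to be the main obstacle, is that the crude step $|\int|\le\int|\cdot|$ discards the oscillatory cancellation of the $\theta$–integral and therefore costs one extra power of $|t|$; to reach the stated exponents $t^2$ and $|t|$ (rather than $|t|^3$ and $t^2$) that cancellation must be retained.

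Concretely I would secure the sharp exponents via the mean value theorem in $s$: with $G(s) := \tfrac{\sin(ts)}{s} = t\,h(ts)$ and $h(x) := \tfrac{\sin x}{x}$ one has $G'(s) = t^2 h'(ts)$, $h'(x) = \tfrac{x\cos x - \sin x}{x^2}$, together with the complementary bounds $|h'(x)| \le \tfrac{|x|}{3}$ (all $x$) and $|h'(x)| \le \tfrac{2}{|x|}$ (for $|x|\ge 1$); the second encodes the oscillatory decay. Feeding $|G(p)-G(q)| \le \sup_\sigma|G'(\sigma)|\cdot|p-q|$ with $|p-q|\le|\lambda|/\sqrt{c}$ and splitting on the size of $|t|\sqrt{c}$ keeps the leading behaviour linear in $|t|$ where $|t|\sqrt{c}$ is large (via $\tfrac{2}{|x|}$) and cubic only where $|t|\sqrt{c}$ is small, a regime in which the cubic contribution is itself controlled by $\tfrac{t^2}{6c}$. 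The one genuinely delicate region is $\lambda \nearrow c$, where $p\to 0$ and the derivative bounds blow up; there I would instead use the triangle inequality $|G(p)-G(q)| \le |G(p)| + |G(q)|$ with $|G(s)| \le |t|$, checking it against the right-hand side, which is then large because $|\lambda|$ is comparable to $c$. Gluing these regimes under the single polynomial $\tfrac{t^2}{6c} + \tfrac{|t|}{2\sqrt{c}}$, while tracking $|\lambda| \le \min\{c,1\}$, is the crux; the individual estimates are elementary.
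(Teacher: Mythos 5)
Your cosine argument is correct and coincides with the paper's own proof: write $t\sqrt{c-\lambda}=t\sqrt{c}+\delta$, expand by angle addition, bound $|\cos\delta-1|\leqslant\delta^2/2$ and $|\sin\delta|\leqslant|\delta|$, and use $|\sqrt{c-\lambda}-\sqrt{c}|=|\lambda|/(\sqrt{c-\lambda}+\sqrt{c})\leqslant|\lambda|/\sqrt{c}$ together with $\lambda^2\leqslant|\lambda|$. (Both you and the paper implicitly need $|\lambda|\leqslant 1$, which does not follow from $\lambda\leqslant\min\{c,1\}$ when $\lambda$ is negative; your dismissal of $|\lambda|>1$ as ``immediate'' is not a proof, though for the cosine that case is genuinely easy via the Lipschitz bound $|\cos(tp)-\cos(tq)|\leqslant|t|\,|p-q|$. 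This is a side issue.)

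The sine part is where your proposal has a genuine gap, and, notably, your diagnosis there is more accurate than the paper itself. The paper proves the one-variable inequality $\bigl|\frac{\sin x}{x}-\frac{\sin y}{y}\bigr|\leqslant\frac{|x-y|^2}{6}+\frac{|x-y|}{2}$ and substitutes $x=t\sqrt{c-\lambda}$, $y=t\sqrt{c}$; but with that substitution the left-hand side equals $\frac{1}{|t|}\bigl|\frac{\sin(t\sqrt{c-\lambda})}{\sqrt{c-\lambda}}-\frac{\sin(t\sqrt{c})}{\sqrt{c}}\bigr|$, and the paper silently drops the factor $\frac{1}{|t|}$. What its computation actually yields is $\bigl(\frac{|t|^3}{6c}+\frac{t^2}{2\sqrt{c}}\bigr)|\lambda|$ --- precisely the ``one extra power of $|t|$'' you predicted for the naive route. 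Your proposed rescue (mean value theorem with the decay bound $|h'(x)|\leqslant 2/|x|$ and regime splitting) cannot be completed, because the stated inequality is simply false. Take $c=1$, $\lambda=0.99$, $t=3.5$: then $\sqrt{c-\lambda}=0.1$, the left side is $|\sin(0.35)/0.1-\sin(3.5)|\approx 3.429+0.351=3.780$, while the right side is $\bigl(\frac{12.25}{6}+\frac{3.5}{2}\bigr)\cdot 0.99\approx 3.754$; in the limiting case $\lambda=c=1$ the left side is $|t-\sin t|\approx 3.851$ against $3.792$. (Incidentally, the blow-up you fear as $p\to 0$ does not occur: $\frac{d}{ds}\frac{\sin(ts)}{s}=t^2h'(ts)\to 0$ there; the failure is elsewhere.) So the ``crux'' you deferred is not delicate but impossible: the extra power of $|t|$ is unavoidable, the lemma must be weakened to $\bigl(\frac{|t|^3}{6c}+\frac{t^2}{2\sqrt{c}}\bigr)|\lambda|$ --- which your naive route does prove --- and that weaker form still suffices for every later use, since Theorem~\ref{decayofeigenvalues}, Theorem~\ref{regularizingfunction} and Corollary~\ref{corregularizingfunction} only require, for each fixed $t$, a bound of the form $K(t)\,|\lambda_k|$ with $K$ locally bounded in $t$.
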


\begin{proof}
For $\lambda < c$, we conclude that 
\begin{align*}
& \cos(t \sqrt{c - \lambda}\,) - \cos(t \sqrt{c}\,) =
\cos(t \, [\sqrt{c - \lambda} - \sqrt{c}\,] + t \sqrt{c} \,)
- \cos(t \sqrt{c}\,) \\
& = \cos(t \, [\sqrt{c - \lambda} - \sqrt{c}\,])
\cos(t \sqrt{c} \,) - \sin(t \, [\sqrt{c - \lambda} - \sqrt{c}\,])
\sin(t \sqrt{c} \,) - \cos(t \sqrt{c}\,) \\
& = \{\cos(t \, [\sqrt{c - \lambda} - \sqrt{c}\,]) - 1\} 
\cos(t \sqrt{c} \,) - \sin(t \, [\sqrt{c - \lambda} - \sqrt{c}\,]) \sin(t \sqrt{c} \,) \\
& = -2 \sin^2\!\!\left(\frac{t}{2} \, [\sqrt{c - \lambda} - \sqrt{c}\,]\right) 
\cos(t \sqrt{c} \,) - \sin(t \, [\sqrt{c - \lambda} - \sqrt{c}\,]) \sin(t \sqrt{c} \,) 
\end{align*}
and hence 
\begin{align*}
& |\cos(t \sqrt{c - \lambda}\,) - \cos(t \sqrt{c}\,)| \leqslant
\frac{t^2}{2} \, [\sqrt{c - \lambda} - \sqrt{c}\,]^2 
+ |t| \cdot |\sqrt{c - \lambda} - \sqrt{c}\,| \\
& = \frac{t^2}{2} \left[\frac{-\lambda}{\sqrt{c - \lambda} + \sqrt{c}}\right]^2 
+ |t| \cdot \left|\frac{-\lambda}{\sqrt{c - \lambda} + \sqrt{c}}\,\right| \leqslant \frac{t^2 \lambda^2}{2 c}
+\frac{|t| \cdot |\lambda|}{\sqrt{c}} \, \, .
\end{align*}
Furthermore, using that for $x, y > 0$
\begin{align*}
& \frac{\sin(x)}{x} - \frac{\sin(y)}{y} =
\int_{0}^{1} [\cos(x u) - \cos(y u)] \, du \\
& =
\int_{0}^{1} [\cos((x - y) u + y u) - \cos(y u)] \, du \\
& = \int_{0}^{1} \{[\cos((x - y) u)  -1 ] \cos(y u) + 
\sin((x - y) u) \sin(y u)\} \, du 
\\
& = \int_{0}^{1} \left[- 2 \sin^2\!\!\left(\frac{x - y}{2} \, u \right) \cos(y u) + 
\sin((x - y) u) \sin(y u)\right] du 
\end{align*}
and hence that 
\begin{align*} 
& \left| 
\frac{\sin(x)}{x} - \frac{\sin(y)}{y}
\right| \leqslant 
\int_{0}^{1} \left|- 2 \sin^2\!\!\left(\frac{x - y}{2} \, u \right) \cos(y u) + 
\sin((x - y) u) \sin(y u)\right| du \\
& \leqslant \frac{|x-y|^2}{2} \, \int_{0}^{1} u^2 \, du +
|x-y| \int_{0}^{1} u \, du =
\frac{|x-y|^2}{6} + \frac{|x-y|}{2} \, \, , 
\end{align*}
we conclude that 
\begin{align*}
& \left|\frac{\sin(t \sqrt{c - \lambda}\,)}{\sqrt{c - \lambda}}
- \frac{\sin(t \sqrt{c}\,)}{\sqrt{c}}\right| \leqslant 
\frac{t^2 \, |\sqrt{c - \lambda} - \sqrt{c}|^2}{6} 
+ \frac{|t|\cdot |\sqrt{c - \lambda} - \sqrt{c}|}{2} \\
& = \frac{t^2}{6} \, \left[\frac{-\lambda}{\sqrt{c - \lambda} + \sqrt{c}}\right]^2 + \frac{|t|}{2} \, \left|\frac{-\lambda}{\sqrt{c - \lambda} + \sqrt{c}}\,\right| \leqslant 
\frac{t^2 \lambda^2}{6c} + \frac{|t| \cdot |\lambda|}{2 \sqrt{c}}
\end{align*}
\end{proof}

\begin{thm} \label{decayofeigenvalues}
Let $(X,\braket{\,|\,})$ be a non-trivial complex 
Hilbert space, $\sqrt{\phantom{ij}}$
the complex square-root function, with domain ${\mathbb{C}}
\setminus ((- \infty,0] \times \{0\})$, $c > 0$ and $t \in {\mathbb{R}}$. Furthermore, let $C \in L(X,X)$ be Hilbert-Schmidt, 
\begin{equation*}
(e_k)_{k \in {\mathbb{N}}^{*}}
\end{equation*}
a corresponding basis of eigenvectors and, for every $k \in {\mathbb{N}}^{*}$, $\lambda_k$ 
the eigenvalue of $C$ that corresponds to $e_k$.
\begin{itemize}
\item[(i)] Then
\begin{equation*}
\lim_{k \rightarrow \infty} \lambda_k = 0 \, \, ;
\end{equation*}
\item[(ii)] if $N \in {\mathbb{N}}^{*}$ is such that 
$\lambda_k \leqslant \min\{c,1\}$ for every 
$k \in {\mathbb{N}}^{*}$ satisfying $k \geqslant N$, 
then 
\begin{align*}
&  \left| \left[\overline{\cos \left(t \sqrt{\phantom{ij}} \right)}\,
\circ (c - {\textrm{id}}_{\sigma(C)}) - \cos(\sqrt{c} \, t \,) \right]\!\!(\lambda_k) \right| \leqslant 
\left(\frac{t^2}{2 c}
+\frac{|t|}{\sqrt{c}}\right) \! |\lambda_k|
\, \, , \\
& 
\left| \left[\overline{\frac{\sin \left(t \sqrt{\phantom{ij}} \right)}{\sqrt{\phantom{ij}}}} \circ (c - {\textrm{id}}_{\sigma(C)}) - 
\frac{\sin(\sqrt{c}\,t\,)}{\sqrt{c} \,} \right]\!\!(\lambda_k)
\right| 
\leqslant \left(\frac{t^2}{6c} + \frac{|t|}{2 \sqrt{c}}\right) \! |\lambda_k| \, \, ,
\end{align*}
for every 
$k \in {\mathbb{N}}^{*}$ satisfying $k \geqslant N$.
\end{itemize}
\end{thm}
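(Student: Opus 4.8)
The plan is to dispatch the two parts separately: part~(i) follows from expressing the Hilbert--Schmidt norm in the eigenbasis of $C$, while part~(ii) reduces to a direct invocation of Lemma~\ref{decayestimate}.

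First I would prove (i). Since $C$ is Hilbert--Schmidt and $(e_k)_{k \in {\mathbb{N}}^{*}}$ is an orthonormal Hilbert basis with $C e_k = \lambda_k e_k$, evaluating the Hilbert--Schmidt norm in this particular basis gives
$$\|C\|_2^2 = \sum_{k=1}^{\infty} \|C e_k\|^2 = \sum_{k=1}^{\infty} |\lambda_k|^2 < \infty .$$
Summability of the nonnegative sequence $(|\lambda_k|^2)_{k \in {\mathbb{N}}^{*}}$ forces its general term to tend to zero, whence $\lim_{k \to \infty} \lambda_k = 0$, as claimed.

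Next I would establish (ii) by recognizing that the two bracketed quantities are precisely the differences estimated in Lemma~\ref{decayestimate}, evaluated at $\lambda = \lambda_k$. The hypothesis of (ii) supplies $\lambda_k \leqslant \min\{c,1\} \leqslant c$ for every $k \geqslant N$, so $c - \lambda_k \geqslant 0$ and the entire extension collapses to the elementary value,
$$\left[\overline{\cos \left(t \sqrt{\phantom{ij}} \right)}\, \circ (c - {\textrm{id}}_{\sigma(C)})\right]\!(\lambda_k) = \cos\!\big(t \sqrt{c - \lambda_k}\,\big),$$
and analogously for the $\overline{\sin(t\sqrt{\phantom{ij}})/\sqrt{\phantom{ij}}}$ branch, using the convention $\sin 0 / 0 := 1$ in the boundary case $\lambda_k = c$. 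Subtracting the respective constants $\cos(\sqrt{c}\,t\,)$ and $\sin(\sqrt{c}\,t\,)/\sqrt{c}$, the bracketed expressions become exactly $\cos(t\sqrt{c-\lambda_k}) - \cos(t\sqrt{c})$ and $\sin(t\sqrt{c-\lambda_k})/\sqrt{c-\lambda_k} - \sin(t\sqrt{c})/\sqrt{c}$. Applying Lemma~\ref{decayestimate} with $\lambda = \lambda_k$ (legitimate because $\lambda_k \leqslant \min\{c,1\}$) then yields the two stated bounds, with the prefactors $\frac{t^2}{2c} + \frac{|t|}{\sqrt{c}}$ and $\frac{t^2}{6c} + \frac{|t|}{2\sqrt{c}}$.

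The only step deserving attention is the identification of the holomorphic functional calculus symbols with ordinary trigonometric functions, for which one must confirm that the argument $c - \lambda_k$ is nonnegative; this is exactly guaranteed by the standing inequality $\lambda_k \leqslant c$. I therefore anticipate no real obstacle: once Lemma~\ref{decayestimate} is in place the argument is immediate, and part~(i) merely certifies that a threshold $N$ as assumed in (ii) does in fact exist.
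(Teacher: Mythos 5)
Your proposal is correct and follows essentially the same route as the paper: part~(i) by expressing the Hilbert--Schmidt norm in the eigenbasis of $C$ so that summability of $(|\lambda_k|^2)_{k\in\mathbb{N}^*}$ forces $\lambda_k \to 0$, and part~(ii) as a direct application of Lemma~\ref{decayestimate} with $\lambda = \lambda_k$. Your additional remark spelling out why the functional-calculus symbols reduce to the elementary expressions $\cos(t\sqrt{c-\lambda_k})$ and $\sin(t\sqrt{c-\lambda_k})/\sqrt{c-\lambda_k}$ when $\lambda_k \leqslant c$ is a detail the paper leaves implicit, but it does not change the argument.
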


\begin{proof}
Part~(i): First, we note that 
\begin{align*} 
& C \, e_{k} = \lambda_k . e_{k} \, \, , \, \, 
C \, \xi = C \sum_{k=1}^{\infty} \braket{e_k|\xi}_2 . e_{k} =
\sum_{k=1}^{\infty} \braket{e_k|\xi}_2 . C e_{k} =
\sum_{k=1}^{\infty} \lambda_k \braket{e_k|\xi}_2 . e_{k} 
\, \, ,
\end{align*}
for every $\xi \in X$.
In particular, since for every $k \in {\mathbb{N}}^{*}$
\begin{equation*}
\|C e_{k}\|_2^2 = \|C e_{k}\|_2^2 = |\lambda_k|^2 
\end{equation*}
and $C$ is Hilbert-Schmidt, 
\begin{equation*}
(|\lambda_k|^2)_{k \in {\mathbb{N}}^{*}}
\end{equation*}
is summable. As a consequence,
\begin{equation*}
\lim_{k \rightarrow \infty} \lambda_k = 0 \, \, .
\end{equation*} 
The statement of Part~(ii) is a direct consequence 
of Lemma~\ref{decayestimate}.
\end{proof}

\begin{rem}
Note that the proof of Theorem~\ref{decayofeigenvalues},
together with an application
of the spectral theorem for bounded self-adjoint 
operators on Hilbert spaces, provides an independent proof of 
Corollary~\ref{hilbertschmidtproperty}. In addition, 
Theorem~\ref{decayofeigenvalues} provides the basis for the application of Corollary~\ref{corregularizingfunction}.
\end{rem}

\subsection{Satisfying Boundary Conditions not Involving Derivatives}

Now we are in a position to study BCs not involving derivatives for
operators with a pure point spectrum.  The Hilbert-Schmidt property
leads to a uniform convergence argument which allows us to interchange
limits.  Hence, BCs are automatically satisfied.  In addition, the
Hilbert-Schmidt property leads to \emph{smoothing} of the input, in the
sense that an $L^2$ function is mapped into a function that is continuous
up to the boundary.

\begin{thm} \label{regularizingfunction}
{\bf (Smoothing Functions of an Operator)}
Let $c, K > 0$, $n \in {\mathbb{N}}^{*}$, $\Omega \subset {\mathbb{R}}^n$ be non-empty, bounded and open, 
$A$ be a densely-defined, linear and self-adjoint operator 
in $L^2_{\mathbb{C}}(\Omega)$ with a pure point spectrum $\sigma(A)$, i.e., for which there is a Hilbert basis
\begin{equation*}
(e_k)_{k \in {\mathbb{N}}^{*}}
\end{equation*}
of eigenvectors. In particular, for every $k \in {\mathbb{N}}^{*}$, let $\lambda_k$ be the eigenvalue corresponding to $e_k$. Furthermore, let 
$\hat{\Omega} \supset \bar{\Omega}$ be bounded and open, and 
for every 
$k \in {\mathbb{N}}^{*}$ let $e_k$ be the restriction of some $\hat{e}_k \in C(\hat{\Omega},{\mathbb{C}})$ satisfying 
\begin{equation*}
\|\hat{e}_k\|_{\infty} \leqslant K \, \, .
\end{equation*}
Finally, let $f \in U_{\mathbb{C}}^s(\sigma(A),{\mathbb{C}})$,
$u \in L^2_{\mathbb{C}}(\Omega)$ and 
$b : {\mathbb{R}} \rightarrow L^2_{\mathbb{C}}(\Omega)$ be continuous.
\begin{itemize}
\item[(i)] $f(A)$ is Hilbert-Schmidt if and only 
if
\begin{equation*}
(|f(\lambda_k)|^2)_{k \in {\mathbb{N}}^{*}}
\end{equation*}
is summable.
\item[(ii)] If $f(A)$ is Hilbert-Schmidt, then $f(A) u$ has  
an extension to a continuous function on $\bar{\Omega}$, and 
for every limit point $x$ of
$\Omega$ 
\begin{equation*}
\lim_{y \rightarrow x} [f(A) u](y)
= \sum_{k=1}^{\infty} f(\lambda_k) \braket{e_k|u} \left(\lim_{y \rightarrow x} e_k(y)\right) \, \, . 
\end{equation*}
\item[(iii)] If in addition, 
$f$ is real-valued such that $f(A)$ is Hilbert-Schmidt and 
$f(A) \leqslant c $, 
\begin{itemize}
\item[a)]
then $v : {\mathbb{R}} \rightarrow X$,
for every $t \in {\mathbb{R}}$
defined by 
\begin{align*}
v(t) := \int_{I_t} 
\left[\overline{\frac{\sin \left((t - \tau) \sqrt{\phantom{ij}} \right)}{\sqrt{\phantom{ij}}}} \, \bigg|_{\sigma(c - f(A))}\!\right]\!(c - f(A)) b(\tau)
\, d\tau \, \, ,
\end{align*}
where $\int$ denotes weak integration in $X$,
\begin{equation*}
I_{t} := 
\begin{cases}
[0,t] & \text{if $t \geqslant 0$} \\
[t,0] & \text{if $t < 0$}
\end{cases} \, \, , 
\end{equation*}
for every $t \in {\mathbb{R}}$, satisfies 
\begin{equation*}
v(t) = \sum_{k=1}^{\infty} 
\left\{ 
\int_{I_t} \overline{\frac{\sin \left((t - \tau) \sqrt{\phantom{ij}} \right)
}{\sqrt{\phantom{ij}}}}\,(c - f(\lambda_k)) \braket{e_k|b(\tau)} \, d\tau
\right\} e_k \, \, .
\end{equation*}
\item[b)] then for every $t \in {\mathbb{R}}$, 
$v(t) - v_c(t)$ has  
an extension to a continuous function on $\bar{\Omega}$, and 
for every limit point $x$ of
$\Omega$ 
\begin{align*}
& \lim_{y \rightarrow x} [v(t) - v_c(t)](y) \\
& = \sum_{k=1}^{\infty} \left\{ 
\int_{I_t} \left[ \overline{\frac{\sin \left((t - \tau) \sqrt{\phantom{ij}} \right)
}{\sqrt{\phantom{ij}}}}\,(c - f(\lambda_k)) \right. \right. \\
& \qquad \qquad \quad \, \, \, \, \left. \left. - \overline{\frac{\sin \left((t - \tau) \sqrt{\phantom{ij}} \right)
}{\sqrt{\phantom{ij}}}}\,(c) \right]
\braket{e_k|b(\tau)} \, d\tau
\right\} \left( \lim_{y \rightarrow x }{e}_k(y) \right)
\, \, . 
\end{align*}

where, $v_c : {\mathbb{R}} \rightarrow L^2_{\mathbb{C}}(\Omega)$ is defined by 
\begin{equation*}
v_c(t) := 
\int_{I_t} \overline{\frac{\sin \left((t - \tau) \sqrt{\phantom{ij}} \right)
}{\sqrt{\phantom{ij}}}}\,(c) . b(\tau) \, d\tau
\, \, .
\end{equation*}
\end{itemize}
\end{itemize}
\end{thm}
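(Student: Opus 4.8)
The plan is to dispatch the four assertions in turn, with parts (ii) and (iii)(b) sharing a single mechanism: an absolutely and uniformly convergent eigenfunction series on $\hat\Omega$, to which the Weierstrass $M$-test applies, so that $\lim_{y\to x}$ may be interchanged with the summation. For part (i), I would simply compute the Hilbert--Schmidt norm of $f(A)$ in the orthonormal basis $(e_k)$. Since $f(A)e_k=f(\lambda_k)e_k$ by \eqref{boundedfunctionsofA}, one has $\norm{f(A)}_2^2=\sum_k\norm{f(A)e_k}^2=\sum_k|f(\lambda_k)|^2$, so $f(A)$ is Hilbert--Schmidt exactly when $(|f(\lambda_k)|^2)_k$ is summable.

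For part (ii), assume $\sum_k|f(\lambda_k)|^2<\infty$. First I would bound the scalar series by two applications of Cauchy--Schwarz: $\sum_k|f(\lambda_k)|\,|\braket{e_k|u}|\leqslant(\sum_k|f(\lambda_k)|^2)^{1/2}(\sum_k|\braket{e_k|u}|^2)^{1/2}=\norm{f(A)}_2\,\norm{u}<\infty$, the last equality being Parseval. Since $\norm{\hat e_k}_\infty\leqslant K$, the series $\sum_k f(\lambda_k)\braket{e_k|u}\hat e_k$ is dominated in sup-norm by the summable sequence $(K|f(\lambda_k)\braket{e_k|u}|)_k$; by the Weierstrass $M$-test it converges uniformly on $\hat\Omega$ to a function $\hat w\in C(\hat\Omega,\CMPLX)$. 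As the same series converges in $L^2_\CMPLX(\Omega)$ to $f(A)u$, I conclude $\hat w|_\Omega=f(A)u$ almost everywhere, so $\hat w|_{\bar\Omega}$ is the required continuous extension. Finally, uniform convergence permits exchanging $\lim_{y\to x}$ with the sum, and continuity of each $\hat e_k$ on $\hat\Omega\supset\bar\Omega$ gives $\hat e_k(x)=\lim_{y\to x}e_k(y)$, yielding the stated limit formula.

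For part (iii)(a), the weak-integral representation of $v(t)$ follows from Theorem~\ref{solutionoftheinhomogenousequation}, applied with governing operator $c-f(A)$, which is self-adjoint and positive since $f$ is real-valued with $f(A)\leqslant c$. I would then expand $v(t)$ in the basis exactly as in the Remark after Theorem~\ref{thm:positivity}: the weak integral commutes with the bounded functional $\braket{e_k|\cdot}$, and $e_k$ is an eigenvector of $c-f(A)$ with eigenvalue $c-f(\lambda_k)$, so $\braket{e_k|v(t)}=\int_{I_t}\overline{\sin((t-\tau)\sqrt{\phantom{ij}})/\sqrt{\phantom{ij}}}(c-f(\lambda_k))\braket{e_k|b(\tau)}\,d\tau$, which is the asserted series. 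The same argument applied with $c-f(A)$ replaced by the scalar $c$ gives $v_c(t)=\sum_k\{\int_{I_t}\overline{\sin((t-\tau)\sqrt{\phantom{ij}})/\sqrt{\phantom{ij}}}(c)\braket{e_k|b(\tau)}\,d\tau\}e_k$, so the $e_k$-coefficient $a_k$ of $v(t)-v_c(t)$ involves the difference $\overline{\sin(s\sqrt{\phantom{ij}})/\sqrt{\phantom{ij}}}(c-f(\lambda_k))-\overline{\sin(s\sqrt{\phantom{ij}})/\sqrt{\phantom{ij}}}(c)$.

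Part (iii)(b) is where the real work lies, and the main obstacle is establishing $\sum_k|a_k|<\infty$. The reason for subtracting $v_c$ is that the above difference does not merely stay bounded but \emph{decays like} $|f(\lambda_k)|$: by Theorem~\ref{decayofeigenvalues}(i) one has $f(\lambda_k)\to0$, hence $f(\lambda_k)\leqslant\min\{c,1\}$ for $k\geqslant N$, and Lemma~\ref{decayestimate} then gives $|a_k|\leqslant\kappa_t\,|f(\lambda_k)|\int_{I_t}|\braket{e_k|b(\tau)}|\,d\tau$ with a constant $\kappa_t$ bounding $\tfrac{(t-\tau)^2}{6c}+\tfrac{|t-\tau|}{2\sqrt c}$ over $\tau\in I_t$. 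The crude estimate $|\braket{e_k|b(\tau)}|\leqslant\norm{b(\tau)}$ is too weak to sum in $k$, so instead I would chain three steps: Cauchy--Schwarz in $k$ against $(|f(\lambda_k)|^2)_k$ (summable by part (i)); Cauchy--Schwarz in $\tau$ to get $(\int_{I_t}|\braket{e_k|b(\tau)}|\,d\tau)^2\leqslant|I_t|\int_{I_t}|\braket{e_k|b(\tau)}|^2\,d\tau$; and finally Tonelli together with Parseval, $\sum_k|\braket{e_k|b(\tau)}|^2=\norm{b(\tau)}^2$, whose integral over the compact $I_t$ is finite by continuity of $b$. This gives $\sum_{k\geqslant N}|a_k|\leqslant\kappa_t\norm{f(A)}_2\,(|I_t|\int_{I_t}\norm{b(\tau)}^2\,d\tau)^{1/2}<\infty$, and absorbing the finitely many terms $k<N$ yields $\sum_k|a_k|<\infty$. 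With $\norm{\hat e_k}_\infty\leqslant K$, the series $\sum_k a_k\hat e_k$ converges uniformly on $\hat\Omega$ by the Weierstrass $M$-test, and I finish exactly as in part (ii), interchanging $\lim_{y\to x}$ with the sum to obtain the limit formula.
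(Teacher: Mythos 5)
Your proposal is correct and follows essentially the same route as the paper: part (i) by computing the Hilbert--Schmidt norm in the eigenbasis, part (ii) by uniform convergence of the series $\sum_k f(\lambda_k)\braket{e_k|u}\hat{e}_k$ (your Weierstrass $M$-test is exactly the paper's Cauchy-tail estimate in $(BC(\hat{\Omega},\mathbb{C}),\|\cdot\|_{\infty})$), part (iii)(a) by moving the self-adjoint operator function onto $e_k$ under the weak integral, and part (iii)(b) by the decay estimate of Lemma~\ref{decayestimate} via Theorem~\ref{decayofeigenvalues} followed by the same uniform-convergence interchange of limits. The only cosmetic difference is in (iii)(b), where you apply Cauchy--Schwarz in $k$ to the outer sum and then Cauchy--Schwarz in $\tau$ plus Tonelli, whereas the paper applies Cauchy--Schwarz in $k$ pointwise under the $\tau$-integral and then Bessel's inequality; both yield the required summability.
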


\begin{proof}
Part~(i):
From the spectral theorem for densely-defined, 
linear and self-adjoint Hilbert spaces, it follows for every $k \in {\mathbb{N}}^{*}$ and $h \in L^2_{\mathbb{C}}(\Omega)$ that 
\begin{align*} 
& f(A) \, e_{k} = f(\lambda_k) . e_{k} \, \, , \\
& f(A) \, h = f(A)  \sum_{k=1}^{\infty} \braket{e_k|h}_2 . e_{k} =
\sum_{k=1}^{\infty} \braket{e_k|h}_2 . f(A) \, e_{k} =
\sum_{k=1}^{\infty} f(\lambda_k) \braket{e_k|h}_2 . e_{k} 
\, \, .
\end{align*}
Hence $f(A)$ has a pure point spectrum, and its spectrum  $\sigma(f(A))$ is 
given by 
\begin{equation*}
\sigma(f(A)) = \overline{\{f(\lambda_k): k \in {\mathbb{N}}^{*}\}} \, \, .
\end{equation*}
In particular, since for every $k \in {\mathbb{N}}^{*}$
\begin{equation*}
\|f(A) e_{k}\|_2^2 = \|f(\lambda_k) e_{k}\|_2^2 = |f(\lambda_k)|^2 \,\, , 
\end{equation*}
$f(A)$ is Hilbert-Schmidt if and only if 
\begin{equation*}
(|f(\lambda_k)|^2)_{k \in {\mathbb{N}}^{*}}
\end{equation*}
is summable. 
\newline
\linebreak
Part~(ii):
It follows for 
$m, m^{\prime} \in {\mathbb{N}}^{*}$ satisfying $N \leqslant m \leqslant m^{\prime}$, where $N \in {\mathbb{N}}^{*}$ is sufficiently large, 
that 
\begin{align*}
& \bigg\|\sum_{k=m}^{m^{\prime}} f(\lambda_k) \braket{e_k|u}_2 \hat{e}_k \bigg\|_{\infty}
\leqslant K \sum_{k=m}^{m^{\prime}} |f(\lambda_k)| \cdot |\braket{e_k|u}_2| \\
& \leqslant K \cdot \left(\,\sum_{k=m}^{m^{\prime}} |f(\lambda_k)|^2 \right)^{\!\!\!1/2} \cdot \left(\,\sum_{k=m}^{m^{\prime}} |\braket{e_k|u}_2|^2 \right)^{\!\!\!1/2} \leqslant K \cdot \|u\|_2 \cdot \left(\,\sum_{k=m}^{m^{\prime}} |f(\lambda_k)|^2 \right)^{\!\!\!1/2} \\
&  
\left(\leqslant K \cdot \|u\|_2 \cdot \left(\,\sum_{k=1}^{\infty} |f(\lambda_k)|^2 \right)^{\!\!\!1/2} 
\, \, \right)\, \, .
\end{align*}
As a consequence, 
\begin{equation*}
\left(\, \sum_{k=1}^{N} f(\lambda_k) \braket{e_k|u}_2 {\hat e}_k \right)_{N \in {\mathbb{N}}^{*}} 
\end{equation*}
is a Cauchy sequence in $(BC(\hat{\Omega},{\mathbb{C}}),\|\, \, \|_{\infty})$, 
(i.e., bounded continuous functions defined on $\hat{\Omega}$) 
and hence uniformly convergent to an extension of $f(A) u$ to a bounded 
continuous function on $\hat{\Omega}$. In particular, this implies
that for every limit point $x$ of
$\Omega$ that 
\begin{align*}
& \lim_{y \rightarrow x} [f(A) u](y) \\
& =
\lim_{y \rightarrow x} 
\lim_{N \rightarrow \infty} \sum_{k=1}^{N} f(\lambda_k) \braket{e_k|u} {\hat e}_k(y) =
\lim_{N \rightarrow \infty} \lim_{y \rightarrow x}  \sum_{k=1}^{N} f(\lambda_k) \braket{e_k|u} {\hat e}_k(y) \\
& = \sum_{k=1}^{\infty} f(\lambda_k) \braket{e_k|u} \left(\lim_{y \rightarrow x} e_k(y)\right) \, \, .
\end{align*}
For the latter, see, e.g.,  in \cite[Thm~2.41]{beyer2007_book}.
\newline
\linebreak
Part~(iii)a):
First, for all $t \in {\mathbb{R}}$ and 
every $k \in {\mathbb{N}}^{*}$, it follows from 
the spectral theorem for densely-defined, linear and self-adjoint in Hilbert spaces that 
\begin{align*}
\braket{e_k|v(t)} & = \int_{I_t} 
\braket{e_k|\left[\overline{\frac{\sin \left((t - \tau) \sqrt{\phantom{ij}} \right)}{\sqrt{\phantom{ij}}}} \, \bigg|_{\sigma(c - f(A))}\right]\!(c - f(A)) b(\tau)} \, d\tau \\
& = \int_{I_t} \braket{ \left[\overline{\frac{\sin \left((t - \tau) \sqrt{\phantom{ij}} \right)}{\sqrt{\phantom{ij}}}} \, \bigg|_{\sigma(c - f(A))}\right]\!(c - f(A))
 e_{k}| b(\tau)} \, d\tau \\
& = \int_{I_t} 
\overline{\frac{\sin \left((t - \tau) \sqrt{\phantom{ij}} \right)
}{\sqrt{\phantom{ij}}}}\,(c - f(\lambda_k)) 
 \braket{e_k|b(\tau)} \, d\tau
\end{align*}
and hence that
\begin{equation*}
v(t) = \sum_{k=1}^{\infty} 
\left\{ 
\int_{I_t} \overline{\frac{\sin \left((t - \tau) \sqrt{\phantom{ij}} \right)
}{\sqrt{\phantom{ij}}}}\,(c - f(\lambda_k)) \braket{e_k|b(\tau)} \, d\tau
\right\} e_k \, \, .
\end{equation*}
Part(iii)b):
If $k \in {\mathbb{N}}^{*}$ is such that 
$f(\lambda_k) \leqslant \min\{c,1\}$, then 
\begin{align*}
\left|\,\overline{\frac{\sin \left((t - \tau) \sqrt{\phantom{ij}} \right)
}{\sqrt{\phantom{ij}}}}\,(c - f(\lambda_k)) -
\overline{\frac{\sin \left((t - \tau) \sqrt{\phantom{ij}} \right)
}{\sqrt{\phantom{ij}}}}\,(c)\,\right| \leqslant 
\left[\frac{(t - \tau)^2}{6c} + \frac{|t - \tau|}{2 \sqrt{c}}\right] \cdot |f(\lambda_k)| 
\end{align*}
and hence 
\begin{align*}
& \left|\int_{I_t} \left[\overline{\frac{\sin \left((t - \tau) \sqrt{\phantom{ij}} \right)
}{\sqrt{\phantom{ij}}}}\,(c - f(\lambda_k)) 
- \overline{\frac{\sin \left((t - \tau) \sqrt{\phantom{ij}} \right)
}{\sqrt{\phantom{ij}}}}\,(c) \right]
\braket{e_k|b(\tau)} \, d\tau \right| \\
& \leqslant 
\int_{I_t} \left[\frac{(t - \tau)^2}{6c} + \frac{|t - \tau|}{2 \sqrt{c}}\right] \cdot |f(\lambda_k)| \cdot |\!\braket{e_k|b(\tau)}\!| \, d\tau \\
& \leqslant \left(\frac{t^2}{6c} + \frac{|t|}{2 \sqrt{c}}\right) |f(\lambda_k)| \int_{I_t} 
|\!\braket{e_k|b(\tau)}\!| \, d\tau
\end{align*}
It follows for 
$m, m^{\prime} \in {\mathbb{N}}^{*}$ satisfying $N \leqslant m \leqslant m^{\prime}$, where $N \in {\mathbb{N}}^{*}$
is sufficiently large, 
that 
\begin{align*}
& \bigg\|\sum_{k=m}^{m^{\prime}} 
\left\{ 
\int_{I_t} \left[ \overline{\frac{\sin \left((t - \tau) \sqrt{\phantom{ij}} \right)
}{\sqrt{\phantom{ij}}}}\,(c - f(\lambda_k)) - \overline{\frac{\sin \left((t - \tau) \sqrt{\phantom{ij}} \right)
}{\sqrt{\phantom{ij}}}}\,(c) \right] \braket{e_k|b(\tau)} \, d\tau
\right\} \hat{e}_k \bigg\|_{\infty} \\
& \leqslant K \left(\frac{t^2}{6c} + \frac{|t|}{2 \sqrt{c}}\right) \int_{I_t} \left[ \, \sum_{k=m}^{m^{\prime}}  |f(\lambda_k)| \cdot 
|\!\braket{e_k|b(\tau)}\!| \right] d\tau \\
& \leqslant K \left(\frac{t^2}{6c} + \frac{|t|}{2 \sqrt{c}}\right) \cdot 
\left[\, \sum_{k=m}^{m^{\prime}}  |f(\lambda_k)|^2 \right]^{1/2} 
\cdot  \int_{I_t} \left[\,\sum_{k=m}^{m^{\prime}} 
|\!\braket{e_k|b(\tau)}\!|^2 \right]^{1/2} \! d\tau
\\
& 
\leqslant K \left(\frac{t^2}{6c} + \frac{|t|}{2 \sqrt{c}}\right) \cdot 
\left[\, \sum_{k=m}^{m^{\prime}}  |f(\lambda_k)|^2 \right]^{1/2} 
\cdot  \int_{I_t} \|b(\tau)\|_2 \, d\tau
\, \, .
\end{align*}
As a consequence, 
\begin{align*}
& \left(\, \sum_{k=1}^{N} \left\{ 
\int_{I_t} \left[ \overline{\frac{\sin \left((t - \tau) \sqrt{\phantom{ij}} \right)
}{\sqrt{\phantom{ij}}}}\,(c - f(\lambda_k)) \right. \right. \right. \\
& \left. \left. \left. \qquad \qquad \quad \, \, - \overline{\frac{\sin \left((t - \tau) \sqrt{\phantom{ij}} \right)
}{\sqrt{\phantom{ij}}}}\,(c) \right]
\braket{e_k|b(\tau)} \, d\tau
\right\} {\hat e}_k \right)_{N \in {\mathbb{N}}^{*}} 
\end{align*}
is a Cauchy sequence in $(BC(\hat{\Omega},{\mathbb{C}}),\|\, \, \|_{\infty})$
and hence uniformly convergent to an extension of $f(A) g$ to a bounded 
continuous function on $\hat{\Omega}$. In particular, this implies
that for every limit point $x$ of
$\Omega$ that 
\begin{align*}
& \lim_{y \rightarrow x} [v(t) - v_c(t)](y)  \\
& =
\lim_{y \rightarrow x} 
\lim_{N \rightarrow \infty} \sum_{k=1}^{N} \left\{ 
\int_{I_t} \left[ \overline{\frac{\sin \left((t - \tau) \sqrt{\phantom{ij}} \right)
}{\sqrt{\phantom{ij}}}}\,(c - f(\lambda_k)) \right. \right . \\
& \left. \left. \qquad \qquad \qquad \qquad \quad \, \, \, \,  - \overline{\frac{\sin \left((t - \tau) \sqrt{\phantom{ij}} \right)
}{\sqrt{\phantom{ij}}}}\,(c) \right]
\braket{e_k|b(\tau)} \, d\tau
\right\} {e}_k(y) \\
& =
\lim_{N \rightarrow \infty} \lim_{y \rightarrow x} 
\sum_{k=1}^{N} \left\{ 
\int_{I_t} \left[ \overline{\frac{\sin \left((t - \tau) \sqrt{\phantom{ij}} \right)
}{\sqrt{\phantom{ij}}}}\,(c - f(\lambda_k)) \right. \right. \\
& \left. \left.\qquad \qquad \qquad \qquad \quad \, \, \, \, - \overline{\frac{\sin \left((t - \tau) \sqrt{\phantom{ij}} \right)
}{\sqrt{\phantom{ij}}}}\,(c) \right]
\braket{e_k|b(\tau)} \, d\tau
\right\} {e}_k(y) \\
& = \sum_{k=1}^{\infty} \left\{ 
\int_{I_t} \left[ \overline{\frac{\sin \left((t - \tau) \sqrt{\phantom{ij}} \right)
}{\sqrt{\phantom{ij}}}}\,(c - f(\lambda_k)) \right. \right. \\
& \left. \left. \qquad \qquad \quad \, \, \, \, - \overline{\frac{\sin \left((t - \tau) \sqrt{\phantom{ij}} \right)
}{\sqrt{\phantom{ij}}}}\,(c) \right]
\braket{e_k|b(\tau)} \, d\tau
\right\} \left( \lim_{y \rightarrow x }{e}_k(y) \right) \, \, .
\end{align*}
For the latter, see, e.g., \cite[Thm~2.41]{beyer2007_book}.
\end{proof}

As a consequence, for the example of the Dirichlet BCs
in Section \ref{dirichletboundaryconditions}, solutions to the wave
equation corresponding to data $u(0,\cdot), u^{\prime}(0,\cdot) \in
L^2_{\mathbb{C}}(I)$ satisfying pointwise for $x$ in a some
neighborhood of $-1$ and $1$
\begin{equation*}
\lim_{x \rightarrow -1} u(x,0) = \lim_{x \rightarrow 1} u(x,0) = 0 \, \, , 
\end{equation*}
in the same sense, 
satisfy the Dirichlet BCs for all
$t \in {\mathbb{R}}$. In addition, to the micromoduli considered
in Section~\ref{neumannboundaryconditions}, Corollary~\ref{corregularizingfunction} is applicable. 

\subsection{Satisfying Boundary Conditions Involving Derivatives}

Now we are in a position to study BCs involving derivatives for
operators with a pure point spectrum.  With additional decay
conditions of the eigenvalues of $f(A)$, we reach a uniform
convergence argument also for derivatives which allows us to
interchange limits.  As a consequence, BCs are automatically
satisfied.  For instance, solutions to the wave equation in Section
\ref{neumannboundaryconditions}, for data $u(0,\cdot),
u^{\prime}(0,\cdot) \in L^2_{\mathbb{C}}(I)$ satisfying pointwise for
$x$ in a some neighborhood of $-1$ and $1$
\begin{equation*}
\lim_{x \rightarrow - 1} u^{\prime}(x,0) = \lim_{x \rightarrow 1} u^{\prime}(x,0) = 0 \, \, , 
\end{equation*}
in the same sense, satisfy the Neumann BCs for all $t \in {\mathbb{R}}$.

\begin{cor} \label{corregularizingfunction}
{\bf (Smoothing Functions of an Operator II)} In addition to the assumptions of Theorem~\ref{regularizingfunction}, we assume that 
$\Omega = I$, ${\hat \Omega} = {\hat I}$  and $I, \hat{I}$ are  non-empty open intervals of
${\mathbb{R}}$.
Furthermore, we assume that $f(A)$ is Hilbert-Schmidt 
and for every 
$k \in {\mathbb{N}}^{*}$ that $e_k$ is differentiable 
with a derivative that has an extension ${\hat e}^{\prime}_k$
to a continuous function on ${\bar{\hat I}}$. Finally, we assume that 
\begin{equation*}
(\,|\,\|{e}^{\prime}_k\|_{\infty} \, f(\lambda_k)|^2
)_{k \in {\mathbb{N}}^{*}}
\end{equation*}
is summable.
\begin{itemize}
\item[(i)]
Then $f(A) u \in C^1({\bar I},{\mathbb{C}})$, and 
for every limit point $x$ of
$I$ 
\begin{align*}
& \lim_{y \rightarrow x} [f(A) u](y)
= \sum_{k=1}^{\infty} f(\lambda_k) \braket{e_k|u} \left(\lim_{y \rightarrow x} e_k(y)\right) \, \, , \\
& \lim_{y \rightarrow x} [f(A) u]^{\prime}(y)
= \sum_{k=1}^{\infty} f(\lambda_k) \braket{e_k|u} \left(\lim_{y \rightarrow x} e_k^{\prime}(y)\right)
\, \, . 
\end{align*}
\item[(ii)] If in addition, 
$f$ is real-valued such that 
$f(A) \leqslant c$ and $v, v_c$ are defined as in Theorem~\ref{regularizingfunction}~(iii)a), 
then for every $t \in {\mathbb{R}}$, 
$v(t) - v_c(t) \in C^1({\bar{I}},{\mathbb{C}})$ and 
for every limit point $x$ of
$\Omega$ 
\begin{align*}
& \lim_{y \rightarrow x} [v(t) - v_c(t)](y) \\
& = \sum_{k=1}^{\infty} \left\{ 
\int_{I_t} \left[ \overline{\frac{\sin \left((t - \tau) \sqrt{\phantom{ij}} \right)
}{\sqrt{\phantom{ij}}}}\,(c - f(\lambda_k)) \right. \right. \\
& \qquad \qquad \qquad \left. \left. - \overline{\frac{\sin \left((t - \tau) \sqrt{\phantom{ij}} \right)
}{\sqrt{\phantom{ij}}}}\,(c) \right]
\braket{e_k|b(\tau)} \, d\tau
\right\} \left( \lim_{y \rightarrow x }{e}_k(y) \right)
\, \, , \\
& \lim_{y \rightarrow x} [v(t) - v_c(t)]^{\prime}(y) \\
& = \sum_{k=1}^{\infty} \left\{ 
\int_{I_t} \left[ \overline{\frac{\sin \left((t - \tau) \sqrt{\phantom{ij}} \right)
}{\sqrt{\phantom{ij}}}}\,(c - f(\lambda_k)) \right. \right. \\
& \qquad \qquad \qquad \left. \left. - \overline{\frac{\sin \left((t - \tau) \sqrt{\phantom{ij}} \right)
}{\sqrt{\phantom{ij}}}}\,(c) \right]
\braket{e_k|b(\tau)} \, d\tau
\right\} \left( \lim_{y \rightarrow x }{e}_k^{\prime}(y) \right)
\, \, .
\end{align*}
\end{itemize}
\end{cor}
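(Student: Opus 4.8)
The plan is to deduce the $C^{1}$-statements by strengthening the uniform-convergence argument of Theorem~\ref{regularizingfunction} so that it simultaneously controls the partial sums and their term-by-term derivatives; the extra summability hypothesis on $(\,|\,\|e_k'\|_{\infty}\,f(\lambda_k)|^2)_{k \in {\mathbb{N}}^{*}}$ is precisely what governs the derivative series. Since $f(A)$ is assumed Hilbert-Schmidt, the $C^{0}$-content of both (i) and (ii) — namely that $f(A)u$ and $v(t)-v_c(t)$ extend continuously to $\bar I$ together with the stated term-by-term limit formulas — is already supplied verbatim by Theorem~\ref{regularizingfunction}~(ii) and (iii)b). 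Thus the only genuinely new task is to establish uniform convergence of the derivative series and then invoke the standard theorem that a uniform limit of $C^{1}$ functions whose derivatives also converge uniformly is again $C^{1}$ with derivative equal to the limit of the derivatives.

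For part~(i), I would set $S_N := \sum_{k=1}^{N} f(\lambda_k)\braket{e_k|u}\,\hat{e}_k$, which lies in $C^{1}(\bar{\hat I},\CMPLX)$ because each $\hat{e}_k$ has a continuous derivative $\hat{e}_k'$ on $\bar{\hat I}$ (so $\|e_k'\|_{\infty}=\|\hat{e}_k'\|_{\infty}<\infty$), and hence $S_N' = \sum_{k=1}^{N} f(\lambda_k)\braket{e_k|u}\,\hat{e}_k'$. The first step is to show $(S_N')_{N \in {\mathbb{N}}^{*}}$ is uniformly Cauchy: for $N \leqslant m \leqslant m^{\prime}$, the triangle and Cauchy--Schwarz inequalities give
\begin{equation*}
\left\|\sum_{k=m}^{m^{\prime}} f(\lambda_k)\braket{e_k|u}\,\hat{e}_k'\right\|_{\infty}
\leqslant \left(\sum_{k=m}^{m^{\prime}}\left|\,\|e_k'\|_{\infty}\,f(\lambda_k)\right|^2\right)^{\!\!1/2}
\left(\sum_{k=m}^{m^{\prime}}|\braket{e_k|u}|^2\right)^{\!\!1/2}
\leqslant \|u\|_2\left(\sum_{k=m}^{m^{\prime}}\left|\,\|e_k'\|_{\infty}\,f(\lambda_k)\right|^2\right)^{\!\!1/2} \, \, .
\end{equation*}
By the summability hypothesis the right-hand side tends to $0$, so $(S_N')$ converges uniformly on $\bar{\hat I}$; since $(S_N)$ converges uniformly to the continuous extension of $f(A)u$ by Theorem~\ref{regularizingfunction}~(ii), the uniform-convergence-of-derivatives theorem (e.g.\ \cite[Thm~2.41]{beyer2007_book}) shows the restriction to $\bar I$ is in $C^{1}(\bar I,\CMPLX)$ with derivative $\lim_{N\to\infty}S_N'$. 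Interchanging $\lim_{y\to x}$ with $\lim_{N\to\infty}$, legitimate by the same uniform convergence, then yields the claimed term-by-term formula for $\lim_{y\to x}[f(A)u]'(y)$.

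For part~(ii), the argument is structurally identical once the scalar coefficients
\begin{equation*}
\beta_k(t) := \int_{I_t}\left[\,\overline{\frac{\sin\!\left((t-\tau)\sqrt{\phantom{ij}}\right)}{\sqrt{\phantom{ij}}}}\,(c-f(\lambda_k)) - \overline{\frac{\sin\!\left((t-\tau)\sqrt{\phantom{ij}}\right)}{\sqrt{\phantom{ij}}}}\,(c)\right]\braket{e_k|b(\tau)}\,d\tau
\end{equation*}
replace $f(\lambda_k)\braket{e_k|u}$. Because $f(A)$ is Hilbert-Schmidt, Theorem~\ref{decayofeigenvalues}~(i) gives $f(\lambda_k)\to 0$, so for all large $k$ one has $f(\lambda_k)\leqslant\min\{c,1\}$ and the decay estimate of Lemma~\ref{decayestimate} bounds the bracketed integrand pointwise, whence
\begin{equation*}
|\beta_k(t)| \leqslant \left(\frac{t^2}{6c}+\frac{|t|}{2\sqrt{c}}\right)|f(\lambda_k)|\int_{I_t}|\braket{e_k|b(\tau)}|\,d\tau \, \, .
\end{equation*}
Carrying the extra factor $\|e_k'\|_{\infty}$ through exactly the Cauchy--Schwarz computation already performed in Theorem~\ref{regularizingfunction}~(iii)b) — now applied to the derivative series $\sum_k \beta_k(t)\,\hat{e}_k'$ and using $\left(\sum_{k}|\braket{e_k|b(\tau)}|^2\right)^{1/2}\leqslant\|b(\tau)\|_2$ under the integral — reduces uniform Cauchyness of the derivative partial sums to summability of $(\,|\,\|e_k'\|_{\infty}\,f(\lambda_k)|^2)_{k \in {\mathbb{N}}^{*}}$, which is the hypothesis. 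The same $C^{1}$-limit theorem and limit interchange then complete the proof. The only point requiring real care is the uniform convergence of the derivative series; the remaining ingredients are the already-established $C^{0}$ statement and the routine interchange of limits, so the summability hypothesis is the essential new device.
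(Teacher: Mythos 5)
Your proposal is correct and follows essentially the same route as the paper: a uniform Cauchy estimate on the term-by-term differentiated partial sums via Cauchy--Schwarz and the summability hypothesis on $(\,|\,\|e_k'\|_{\infty} f(\lambda_k)|^2)_k$, combined with the already-established uniform convergence of the undifferentiated sums from Theorem~\ref{regularizingfunction}, and then the standard theorem on uniform convergence of derivatives to conclude the $C^1$ property and justify the limit interchange. Your part (ii) reduction via the decay estimate of Lemma~\ref{decayestimate} and the bound $\bigl(\sum_k |\braket{e_k|b(\tau)}|^2\bigr)^{1/2} \leqslant \|b(\tau)\|_2$ is exactly the computation the paper performs.
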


\begin{proof}
Part~(i):
It follows for 
$m, m^{\prime} \in {\mathbb{N}}^{*}$ satisfying $N \leqslant m \leqslant m^{\prime}$, where $N \in {\mathbb{N}}^{*}$ is sufficiently large, 
that 
\begin{align*}
& \bigg\|\sum_{k=m}^{m^{\prime}} f(\lambda_k) \braket{e_k|u}_2 {\hat e}^{\prime}_k \bigg\|_{\infty}
\leqslant \sum_{k=m}^{m^{\prime}} |f(\lambda_k)| \, \|{\hat e}^{\prime}_k\|_{\infty} \cdot |\braket{e_k|u}_2| \\
& \leqslant \cdot \left(\,\sum_{k=m}^{m^{\prime}} |\,\|{\hat e}^{\prime}_k\|_{\infty} \, f(\lambda_k)|^2 \right)^{\!\!\!1/2} \cdot \left(\,\sum_{k=m}^{m^{\prime}} |\braket{e_k|u}_2|^2 \right)^{\!\!\!1/2} \\
& \leqslant \|u\|_2 \cdot \left(\,\sum_{k=m}^{m^{\prime}} |\,\|{\hat e}^{\prime}_k\|_{\infty} \, f(\lambda_k)|^2 \right)^{\!\!\!1/2} 
\left(\leqslant \cdot \|u\|_2 \cdot \left(\,\sum_{k=1}^{\infty} 
|\,\|{\hat e}^{\prime}_k\|_{\infty} \, f(\lambda_k)|^2
\right)^{\!\!\!1/2} 
\, \, \right)\, \, .
\end{align*}
As a consequence, 
\begin{equation*}
\left(\, \sum_{k=1}^{N} f(\lambda_k) \braket{e_k|u}_2 {\hat e}^{\prime}_k \right)_{N \in {\mathbb{N}}^{*}} 
\end{equation*}
is a Cauchy sequence in $(C({\bar{\hat I}},{\mathbb{C}}),\|\, \, \|_{\infty})$
and hence uniformly convergent to a continuous function on
${\bar{\hat I}}$. In particular, this implies
that $f(A)u$ is continuously differentiable and 
for every limit point $x$ of
$I$ that 
\begin{align*}
& \lim_{y \rightarrow x} [f(A) u]^{\prime}(y) = \sum_{k=1}^{\infty} f(\lambda_k) \braket{e_k|u} \left(\lim_{y \rightarrow x} e_k(y) \right) \, \, . 
\end{align*}
For the latter, see, e.g., \cite[Thm~2.42]{beyer2007_book}.
\newline
\linebreak
Part~(ii): 
It follows for $t \in {\mathbb{R}}$, 
$m, m^{\prime} \in {\mathbb{N}}^{*}$ satisfying $N \leqslant m \leqslant m^{\prime}$, where $N \in {\mathbb{N}}^{*}$
is sufficiently large, 
that 
\begin{align*}
& \bigg\|\sum_{k=m}^{m^{\prime}} 
\left\{ 
\int_{I_t} \left[ \overline{\frac{\sin \left((t - \tau) \sqrt{\phantom{ij}} \right)
}{\sqrt{\phantom{ij}}}}\,(c - f(\lambda_k)) - \overline{\frac{\sin \left((t - \tau) \sqrt{\phantom{ij}} \right)
}{\sqrt{\phantom{ij}}}}\,(c) \right] \braket{e_k|b(\tau)} \, d\tau
\right\} {\hat e}^{\prime}_k \bigg\|_{\infty} \\
& \leqslant K \left(\frac{t^2}{6c} + \frac{|t|}{2 \sqrt{c}}\right) \int_{I_t} \left[ \, \sum_{k=m}^{m^{\prime}}  \|{\hat e}^{\prime}_k\|_{\infty} \, |f(\lambda_k)| \cdot 
|\!\braket{e_k|b(\tau)}\!| \right] d\tau \\
& \leqslant K \left(\frac{t^2}{6c} + \frac{|t|}{2 \sqrt{c}}\right) \cdot 
\left[\, \sum_{k=m}^{m^{\prime}}  |\|{\hat e}^{\prime}_k\|_{\infty} \, f(\lambda_k)|^2 \right]^{1/2} 
\cdot  \int_{I_t} \left[\,\sum_{k=m}^{m^{\prime}} 
|\!\braket{e_k|b(\tau)}\!|^2 \right]^{1/2} \! d\tau
\\
& 
\leqslant K \left(\frac{t^2}{6c} + \frac{|t|}{2 \sqrt{c}}\right) \cdot 
\left[\, \sum_{k=m}^{m^{\prime}}  |\|{\hat e}^{\prime}_k\|_{\infty} \, f(\lambda_k)|^2 \right]^{1/2} 
\cdot  \int_{I_t} \|b(\tau)\|_2 \, d\tau
\, \, .
\end{align*}
As a consequence, 
\begin{align*}
& \left(\, \sum_{k=1}^{N} \left\{ 
\int_{I_t} \left[ \overline{\frac{\sin \left((t - \tau) \sqrt{\phantom{ij}} \right)
}{\sqrt{\phantom{ij}}}}\,(c - f(\lambda_k)) \right. \right. \right. \\
& \left. \left. \left. \qquad \qquad \quad \, \, - \overline{\frac{\sin \left((t - \tau) \sqrt{\phantom{ij}} \right)
}{\sqrt{\phantom{ij}}}}\,(c) \right]
\braket{e_k|b(\tau)} \, d\tau
\right\} {\hat e}^{\prime}_k  \right)_{N \in {\mathbb{N}}^{*}} 
\end{align*}
is a Cauchy sequence in $(C({\bar{\hat I}},{\mathbb{C}}),\|\, \, \|_{\infty})$
and hence uniformly convergent to a continuous function on
${\bar{\hat I}}$. In particular, this implies
that $v(t) - v_c(t)$ is continuously differentiable and 
for every limit point $x$ of
$I$ that  
\begin{align*}
& \lim_{y \rightarrow x} [v(t) - v_c(t)]^{\prime}(y)  \\
& =
\lim_{y \rightarrow x} 
\lim_{N \rightarrow \infty} \sum_{k=1}^{N} \left\{ 
\int_{I_t} \left[ \overline{\frac{\sin \left((t - \tau) \sqrt{\phantom{ij}} \right)
}{\sqrt{\phantom{ij}}}}\,(c - f(\lambda_k)) \right. \right . \\
& \left. \left. \qquad \qquad \qquad \qquad \quad \, \, \, \,  - \overline{\frac{\sin \left((t - \tau) \sqrt{\phantom{ij}} \right)
}{\sqrt{\phantom{ij}}}}\,(c) \right]
\braket{e_k|b(\tau)} \, d\tau
\right\} {\hat e}^{\prime}_k(y) \\
& =
\lim_{N \rightarrow \infty} \lim_{y \rightarrow x} 
\sum_{k=1}^{N} \left\{ 
\int_{I_t} \left[ \overline{\frac{\sin \left((t - \tau) \sqrt{\phantom{ij}} \right)
}{\sqrt{\phantom{ij}}}}\,(c - f(\lambda_k)) \right. \right. \\
& \left. \left.\qquad \qquad \qquad \qquad \quad \, \, \, \, - \overline{\frac{\sin \left((t - \tau) \sqrt{\phantom{ij}} \right)
}{\sqrt{\phantom{ij}}}}\,(c) \right]
\braket{e_k|b(\tau)} \, d\tau
\right\} {\hat e}^{\prime}_k(y) \\
& = \sum_{k=1}^{\infty} \left\{ 
\int_{I_t} \left[ \overline{\frac{\sin \left((t - \tau) \sqrt{\phantom{ij}} \right)
}{\sqrt{\phantom{ij}}}}\,(c - f(\lambda_k)) \right. \right. \\
& \left. \left. \qquad \qquad \quad \, \, \, \, - \overline{\frac{\sin \left((t - \tau) \sqrt{\phantom{ij}} \right)
}{\sqrt{\phantom{ij}}}}\,(c) \right]
\braket{e_k|b(\tau)} \, d\tau
\right\} \left( \lim_{y \rightarrow x }{\hat e}^{\prime}_k(y) \right) \, \, .
\end{align*}
For the latter, see, e.g., \cite[Thm~2.41]{beyer2007_book}. 
\end{proof}

\iftrue
\section{Study of Convolutions with Various Boundary Conditions}
\label{sec:convoBC}

We study one-dimensional elasticity, which is an instance of regular
Sturm-Liouville theory with prominent BCs such as periodic,
antiperiodic, Dirichlet, and Neumann.  In regular Sturm-Liouville
problems, all BCs leading to self-adjoint operators are known
\cite[Thm. 13.14]{weidmann2003_book}.  If needed, all associated BCs
can be considered.  All regular Sturm-Liouville operators are known to
have a purely discrete spectrum, in particular, there is a Hilbert basis of
eigenfunctions.  There are a number of standard problems in higher
dimensions that can be reduced to regular Sturm-Liouville problems on
bounded domains. Also, generically, a differential operator with regular
coefficients on $\REAL^n$ has a purely discrete spectrum, providing an
eigenbasis of the underlying space.  Since the essential ingredient is
a self-adjoint operator with a purely discrete spectrum, hence, our
approach can easily cover higher spatial dimensions.

The choice of a Hilbert basis determines an abstract convolution,
which we refer to as \emph{canonical}.  The most relevant BCs in
applications are Dirichlet and Neumann BCs.  In these cases, the
connection of the abstract convolution to an integral form is not
direct.  On the other hand, for periodic and antiperiodic BCs, that
connection is direct, needing a periodic and antiperiodic extension of
the micromodulus function, respectively.  Because of this directness,
we choose to include periodic and antiperiodic BCs.

In the case of Neumann and Dirichlet BCs, we study additional
convolutions that we refer to as ``\emph{simple}.''  These are
inspired by the convolutions from the periodic and antiperiodic BCs.
Certain combinations of convolutions derived from periodic and
antiperiodic BCs of even micromoduli with even and odd input function
enforce Neumann and Dirichlet BCs in these simple convolutions.  For
instance, we sketch the case of Dirichlet BC. Let $\hat{C_{\p}}$ and
$\hat{C_{\aBC}}$ denote periodic and antiperiodic extensions of $C$,
respectively. It is easy to see that
\begin{equation*}
C *_{\p} u (1) = C *_{\p} u(-1),
\end{equation*}
for any $C$.  In addition, if $C$ is even, $C *_{\p} u (1)=0$ when $u$ is
odd.  Likewise, for the antiperiodic case,
\begin{equation*}
C *_{\aBC} u (1) = - C *_{\aBC} u(-1),
\end{equation*}
holds for any $C$.  If $C$ is even, $C *_{\aBC} u (1)=0$ when $u$ is even.
This suggests that $C *_{\p}  P_{\textrm{odd}}$ and 
$C *_{\aBC}  P_{\textrm{even}}$ are functions of the
classical operator.  For the Neumann BC, the situation is similar. 
Namely, $C *_{\p}  P_{\textrm{even}}$ and
$C *_{\aBC}  P_{\textrm{odd}}$ are functions of the
classical operator as well, where
$P_{\textrm{odd}}$ and $P_{\textrm{even}}$ are orthogonal projections 
onto odd and even functions, respectively.
We elaborate on these examples in Sections \ref{simpleConvoDirichlet} 
and \ref{simpleConvoNeumann}, respectively.

Integral operators on bounded domains are often Hilbert-Schmidt, and
hence, compact.  Indeed, for all these BCs, we show that a simple
decay condition on the regulating function leads to a Hilbert-Schmidt
operator.

In Sections \ref{standardFourierNeumann} and \ref{standardFourierDirichlet},
we connect the eigenfunction expansions for the Neumann and Dirichlet BCs
to that of (periodic) Fourier expansions on the extended domain $(-2,2)$.
This enables the application of standard results from Fourier theory
to the particular eigenfunction expansions in these cases.

We define the minimal operator 
$A_0: C^2_0(I,\mathbb{C}) \rightarrow L^2_{\mathbb{C}}(I)$ by 
\begin{equation*}
A_0 u := - a_0 \, u^{\, \prime \prime}, 
\end{equation*} 
where $a_0$ is a suitable real number and $u \in C^2_0(I,\mathbb{C})$.  
The operator $A_0$ is densely defined, linear, and symmetric, but not
essentially self-adjoint.  We give self-adjoint extensions $A_0$
by the closure of essentially self-adjoint operators. The extension
process is depicted in Figure \ref{extensionProcess}.

\begin{figure}[t] \label{extensionProcess}
\centering
\scalebox{0.8}{\includegraphics{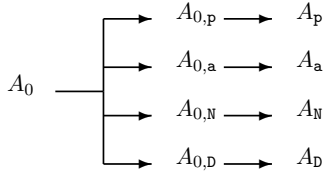}}
\caption{We extend the minimal operator $A_0$, specifying
boundary conditions such as periodic, antiperiodic, Neumann, and 
Dirichlet boundary conditions, to an essentially self-adjoint 
operator $A_{0,\p}, A_{0,\aBC}, A_{0,\N}, A_{0,\D}$, respectively.  
Finally, we arrive at self-adjoint 
operators $A_{\p}, A_{\aBC}, A_{\N}, A_{\D}$ by taking the closure of 
$A_{0,\p}, A_{0,\aBC}, A_{0,\N}, A_{0,\D}$, respectively.}
\end{figure}

\subsection{Periodic Boundary Conditions} 
\label{periodicboundaryconditions}
\begin{flushleft}
We define the operator $A_{0,\p} : D(A_{0,\p}) \rightarrow L^2_{\mathbb{C}}(I)$
by 
\end{flushleft}
\begin{equation*}
D(A_{0,\p}) := \left\{ u \in C^2(\bar{I},{\mathbb{C}}): 
\lim_{x \rightarrow -1} u(x) = \lim_{x \rightarrow 1} u(x) \, \, , \, \, \lim_{x \rightarrow -1} u^{\prime}(x) = \lim_{x \rightarrow 1} u^{\prime}(x)
 \right\}
\end{equation*} 
and  
\begin{equation*}
A_{0,\p} u := - \frac{1}{\pi^2} \, u^{\, \prime \prime} 
\end{equation*}  
for every $u \in D(A_{0,\p})$, where $I := (-1,1)$,
$C^2(\bar{I},{\mathbb{C}})$ 
consists of the restrictions of the elements of
$C^2(J,{\mathbb{C}})$ to $I$, 
where $J$ runs through 
all open intervals of ${\mathbb{R}}$ containing $\bar{I}$. 
Note that 
$C^2(\bar{I},{\mathbb{C}})$ is a {\it dense subspace} of $X$.
$A_{0,\p}$ is densely-defined, linear and symmetric.

\subsubsection{Associated Hilbert Basis and Properties}
We note that $A_{0,\p}$ is a special case of 
a regular Sturm-Liouville operator. In particular, $A_{0,\p}$ is essentially 
self-adjoint. The closure $A_{\p}$ of $A_{0,\p}$ is given by
\begin{equation*}
A_{\p} u = - \frac{1}{\pi^2} \, u^{\, \prime \prime}, 
\end{equation*}
where $\prime$ denotes the weak derivative and $u$ is a restriction to
$I$ of an periodic element of $W^2(\REAL,\CMPLX)$.  $A_{\p}$ has a purely
discrete spectrum $\sigma(A_{\p})$ consisting of the eigenvalues,
\begin{equation*}
\sigma(A_{\p}) = \left\{k^2 : k \in {\mathbb{N}}
\right\} \, \, . 
\end{equation*}
For every $k \in {\mathbb{Z}}$, a normalized eigenvector 
corresponding 
to the eigenvalue $k^2$ is given by  
\begin{equation*}
e_{k}(x) := \frac{1}{\sqrt{2}} \, e^{i \pi k x}.
\end{equation*}
Hence, $(e_k)_{k \in {\mathbb{Z}}}$ is a Hilbert basis of $L^2_{\mathbb{C}}(I)$, $0$ is a simple eigenvalue and for every $k \in {\mathbb{N}}^{*}$, $k^2$ is an eigenvalue of geometric multiplicity $2$,
with corresponding linearly independent eigenvectors $e_k,e_{-k}$. 

\subsubsection{Compactness of $f(A_{\p})$}

For every $f \in B(\sigma(A_{\p}),{\mathbb{C}})$, if 
\begin{equation*}
(|f(k^2)|^2)_{k \in {\mathbb{N}}}
\end{equation*}
is summable, then $f(A_{\p})$ is a Hilbert-Schmidt operator and hence compact. The latter
is the case if 
\begin{equation*}
|f(\lambda)| \leqslant  c \, \lambda^{-\alpha} 
\end{equation*}
for every $\lambda \in \sigma(A_{\p})$, where $\alpha > 1/2$, $c \geqslant 0$.

\subsubsection{Properties of Canonical Convolutions and Integral Representations}
In the following, $*_{\p}$ denotes the convolution in $L^2_{\mathbb{C}}(I)$
that, according to Theorem~\ref{convolutionsinhilbertspaces}, is associated
to the Hilbert basis $(e_{k})_{k \in {\mathbb{Z}}}$. In particular, for
even  $C \in L^2(I)$, $c \in {\mathbb{R}}$,
\begin{align*}
\braket{e_k|C}_2  & = \braket{e_{-k}|C} 
= \frac{1}{\sqrt{2}} \int_{-1}^{1} \cos(\pi k y) \cdot C(y) \, dy, 
\quad k \in {\mathbb{N}}.
\end{align*}
Hence, $c - C *_{\p} \cdot$ is a bounded self-adjoint function of $A_{\p}$. Furthermore, 
if $C$ is in addition positive and 
\begin{equation*}
c := \frac{1}{\sqrt{2}} \int_{-1}^{1} C(y) dy \, \, , 
\end{equation*}
then $c - C *_{\p} \cdot$ is in addition positive, with a spectrum
that contains $0$.

In addition, for $C, u \in L^2_{\mathbb{C}}(I)$
\begin{align*}
(e_{k}(x))^{*} \cdot \braket{C|e_{k}}_{2} &  = 
\frac{1}{2} \cdot e^{- i \pi k x} \cdot \int_{-1}^{1} C^{*}(y)  e^{i \pi k y} \, dy 
= \frac{1}{2} \cdot \int_{-1}^{1} C^{*}(y) e^{i \pi k (y - x)} \, dy \\
& = \int_{-1+x}^{1+x} {\hat C_{\p}}^{*}(y) e^{i \pi k (y - x)} \, dy 
= \frac{1}{2} \cdot \int_{-1}^{1} {\hat C_{\p}}^{*}(y + x) \cdot 
e^{i \pi k y} \, dy \\
& = \frac{1}{2} \cdot \int_{-1}^{1} e^{ - i \pi k y} 
\cdot {\hat C_{\p}}^{*}(x - y ) 
 \, dy 
= \frac{1}{\sqrt{2}}
\braket{e_{k}|{\hat C_{\p}}^{*}({ x - \cdot})}_{2} 
\, \, ,
\end{align*}
where ${\hat C_{\p}}$ denotes the extension of $C$ to a $2$-periodic
function on ${\mathbb{R}}$.  Since for every finite subset
$S \subset {\mathbb{N}}$,
\begin{equation*}
\sum_{k \in S} |(e_k(x))^{*} \cdot \braket{C|e_k}_{2}|^2 \leqslant \sum_{k \in S}
|\braket{e_k|C}_{2}|^2  \leqslant 
\sum_{k \in {\mathbb{N}}} |\braket{e_k|C}_{2}|^2
\, \, ,
\end{equation*}
$ \left(|(e_k(x))^{*} \cdot \braket{C|e_k}_{2}|^2\right)_{k \in \mathbb{N}}$ is
summable.

Hence, we note that   
\begin{align*}
(C *_{\p} u)(x) & = \sum_{l \in {\mathbb{N}}} \braket{e_{\beta(l)}|C}_{2} \braket{e_{\beta(l)}|u}_{2} . \, e_{\beta(l)}(x)  =
\big \langle \sum_{l \in {\mathbb{N}}} (e_{\beta(l)}(x))^{*} \cdot \braket{C|e_{\beta(l)}}_{2} . e_{\beta(l)}|u \big \rangle_2 \\
& =
\big \langle \sum_{l \in {\mathbb{N}}} \frac{1}{\sqrt{2}}
\braket{e_{\beta(l)}|{\hat C_{\p}}^{*}({ x - \cdot})}_{2} . e_{\beta(l)}|u \big \rangle_2  =
\frac{1}{\sqrt{2}} \, \, \big \langle \sum_{l \in {\mathbb{N}}}
\braket{e_{\beta(l)}|{\hat C_{\p}}^{*}({ x - \cdot})}_{2} . e_{\beta(l)}|u \big \rangle_2 \\
& =  \frac{1}{\sqrt{2}} \, \braket{{\hat C_{\p}}^{*}({x - \cdot})|u}_{2}
= \frac{1}{\sqrt{2}}  \int_{-1}^{1} {\hat C_{\p}}(x - y) \cdot u(y) \, dy
\, \, ,
\end{align*}
where $\beta : {\mathbb{N}} \rightarrow {\mathbb{Z}}$ is some bijection.

\subsection{Antiperiodic Boundary Conditions} 
\label{antiperiodicboundaryconditions}
\begin{flushleft}
We define the operator $A_{0,\aBC} : D(A_{0,\aBC}) \rightarrow L^2_{\mathbb{C}}(I)$
by 
\end{flushleft}
\begin{equation*}
D(A_{0,\aBC}) := \left\{ u \in C^2(\bar{I},{\mathbb{C}}): 
\lim_{x \rightarrow -1} u(x) = - \lim_{x \rightarrow 1} u(x) \, \, , \, \, \lim_{x \rightarrow -1} u^{\prime}(x) = - \lim_{x \rightarrow 1} u^{\prime}(x)
 \right\}
\end{equation*} 
and  
\begin{equation*}
A_{0,\aBC} u := - \frac{1}{\pi^2} \, u^{\, \prime \prime} 
\end{equation*}  
for every $u \in D(A_{0,\aBC})$, where $I := (-1,1)$,
$C^2(\bar{I},{\mathbb{C}})$ 
consists of the restrictions of the elements of
$C^2(J,{\mathbb{C}})$ to $I$, 
where $J$ runs through 
all open intervals of ${\mathbb{R}}$ containing $\bar{I}$. 
Note that 
$C^2(\bar{I},{\mathbb{C}})$ is a {\it dense subspace} of $X$.
$A_{0,\aBC}$ is densely-defined, linear and symmetric.

\subsubsection{Associated Hilbert Basis and Properties}
We note that  $A_{0,\aBC}$ is a special case of 
a regular Sturm-Liouville operator. In particular, $A_{0,\aBC}$ is essentially 
self-adjoint. The closure $A_{\aBC}$ of $A_{0,\aBC}$ is given by
\begin{equation*}
A_{\aBC} u = - \frac{1}{\pi^2} \, u^{\, \prime \prime}, 
\end{equation*}
where $\prime$ denotes the weak derivative
and $u$ is a restriction to $I$ of an antiperiodic element of $W^2(\REAL,\CMPLX)$.
$A_{\aBC}$ has a purely discrete spectrum
$\sigma(A_{\aBC})$  consisting of the eigenvalues,
\begin{equation*}
\sigma(A_{\aBC}) = \left\{\left(k + \frac{1}{2} \right)^{\!2} : k \in {\mathbb{N}}
\right\} \, \, . 
\end{equation*}
For every $k \in {\mathbb{Z}}$, a normalized eigenvector 
corresponding 
to the eigenvalue $\left(k + \frac{1}{2}\right)^{\!2}$ is given by  
\begin{equation*}
e_{k}(x) := \frac{1}{\sqrt{2}} \, e^{i \pi \left(k + \frac{1}{2}\right) x}
\end{equation*}

Hence $(e_k)_{k \in {\mathbb{Z}}}$ is a Hilbert basis of $L^2_{\mathbb{C}}(I)$, and for every $k \in {\mathbb{N}}$, $(k+ (1/2))^2$ is an eigenvalue of geometric multiplicity $2$, with 
corresponding linearly independent eigenvalues $e_{k}, e_{-k-1}$.

\subsubsection{Compactness of $f(A_{\aBC})$}
For every $f \in B(\sigma(A_{\aBC}),{\mathbb{C}})$, if 
\begin{equation*}
\left(\left|f\left(\left[k + \frac{1}{2}\right]^2 \right)\right|^2\right)_{k \in {\mathbb{Z}}}
\end{equation*}
is summable, then $f(A_{\aBC})$ is a Hilbert-Schmidt operator and hence compact. 
In particular, the latter
is the case if 
\begin{equation*}
|f(\lambda)| \leqslant  c \, \lambda^{-\alpha} 
\end{equation*}
for every $\lambda \in \sigma(A_{\aBC})$, where $\alpha > 1/2$, $c \geqslant 0$.

\subsubsection{Properties of Canonical Convolutions and Integral Representations}
In the following, $*_{\aBC}$ denotes the convolution in $L^2_{\mathbb{C}}(I)$
that, according to Theorem~\ref{convolutionsinhilbertspaces}, is associated
to the Hilbert basis $(e_{k})_{k \in {\mathbb{Z}}}$. In particular, for even $C \in L^2(I)$ and $c \in {\mathbb{R}}$, 
\begin{align*}
\braket{e_k|C}_2  & = \braket{e_{-k-1}|C}_2 = \frac{1}{\sqrt{2}} \int_{-1}^{1} \cos\left[\pi \left(k+\frac{1}{2}\right) y\right]
\cdot C(y) \, dy, \quad k \in {\mathbb{N}}.
\end{align*}
Hence all members of the sequence $(\braket{e_k|C}_2)_{k \in {\mathbb{Z}}}$ 
are real-valued.
Therefore $c - C *_{\aBC} \cdot$ is a self-adjoint bounded function of $A_{\aBC}$. Furthermore, if $C$
is in addition positive and 
\begin{equation*}
c := \frac{1}{\sqrt{2}} \int_{-1}^{1} C(y) dy \, \, , 
\end{equation*}
then $c - C *_{\aBC} \cdot$ is in addition positive. 
\newline
\linebreak
In addition, for $C, u \in L^2_{\mathbb{C}}(I)$ 
\begin{align*}
(e_{k}(x))^{*} \cdot \braket{C|e_{k}}_{2} & = 
\frac{1}{2} \cdot e^{- i \pi \left(k+\frac{1}{2}\right) x} \cdot \int_{-1}^{1} C^{*}(y)  e^{i \pi \left(k + \frac{1}{2}\right)y} \, dy 
= \frac{1}{2} \cdot \int_{-1}^{1} C^{*}(y) e^{i \pi \left(k+\frac{1}{2}\right) (y - x)} \, dy \\
& = \int_{-1+x}^{1+x} {\hat C_{\aBC}}^{*}(y) e^{i \pi \left(k+\frac{1}{2}\right) (y - x)} \, dy= \frac{1}{2} \cdot \int_{-1}^{1} {\hat C_{\aBC}}^{*}(y + x) \cdot 
e^{i \pi \left(k+\frac{1}{2}\right) y} \, dy \\
& = 
\frac{1}{2} \cdot \int_{-1}^{1} e^{ - i \pi \left(k+\frac{1}{2}\right) y} 
\cdot {\hat C_{\aBC}}^{*}(x - y ) 
 \, dy
= \frac{1}{\sqrt{2}}
\braket{e_{k}|{\hat C_{\aBC}}^{*}({ x - \cdot})}_{2} 
\, \, , \\
\end{align*}
where ${\hat C_{\aBC}}$ denotes the extension of $C$ to a $2$-antiperiodic function 
on ${\mathbb{R}}$. 
Since for every finite subset $S \subset {\mathbb{N}}$,
$$
\sum_{k \in S} |(e_k(x))^{*} \cdot \braket{C|e_k}_{2}|^2 \leqslant \sum_{k \in S}
|\braket{e_k|C}_{2}|^2  \leqslant 
\sum_{k \in {\mathbb{N}}} |\braket{e_k|C}_{2}|^2
\, \, , 
$$ 
$ \left(|(e_k(x))^{*} \cdot \braket{C|e_k}_{2}|^2\right)_{k \in \mathbb{N}}$ is
summable.

Hence,   
we note that 
\begin{align*}
(C *_{\aBC} u)(x) & = \sum_{l \in {\mathbb{N}}} \braket{e_{\beta(l)}|C}_{2} \braket{e_{\beta(l)}|u}_{2} . \, e_{\beta(l)}(x)  =
\big \langle \sum_{l \in {\mathbb{N}}} (e_{\beta(l)}(x))^{*} \cdot \braket{C|e_{\beta(l)}}_{2} . e_{\beta(l)}|u \big \rangle_2 \\
& =
\big \langle \sum_{l \in {\mathbb{N}}} \frac{1}{\sqrt{2}}
\braket{e_{\beta(l)}|{\hat C_{\aBC}}^{*}({ x - \cdot})}_{2} . e_{\beta(l)}|u \big \rangle_2 =
\frac{1}{\sqrt{2}} \, \, \big \langle \sum_{l \in {\mathbb{N}}}
\braket{e_{\beta(l)}|{\hat C_{\aBC}}^{*}({ x - \cdot})}_{2} . e_{\beta(l)}|u \big \rangle_2 \\
& =  \frac{1}{\sqrt{2}} \, \braket{{\hat C_{\aBC}}^{*}({x - \cdot})|u}_{2}
= \frac{1}{\sqrt{2}}  \int_{-1}^{1} {\hat C_{\aBC}}(x - y) \cdot u(y) \, dy
\, \, .
\end{align*}
where $\beta : {\mathbb{N}} \rightarrow {\mathbb{Z}}$ is some bijection.

\subsection{Neumann Boundary Conditions}
\label{neumannboundaryconditions}

\begin{flushleft}
We define the operator $A_{0,\N} : D(A_{0,\N}) \rightarrow L^2_{\mathbb{C}}(I)$
by 
\end{flushleft}
\begin{equation*}
D(A_{0,\N}) := \left\{ u \in C^2(\bar{I},{\mathbb{C}}): 
\lim_{x \rightarrow -1} u^{\prime}(x) = \lim_{x \rightarrow 1} u^{\prime}(x) = 0
 \right\}
\end{equation*} 
and  
\begin{equation*}
A_{0,\N} u := - \frac{4}{\pi^2} \, u^{\, \prime \prime} 
\end{equation*}  
for every $u \in D(A_{0,\N})$, where $I := (-1,1)$,
$C^2(\bar{I},{\mathbb{C}})$ 
consists of the restrictions of the elements of
$C^2(J,{\mathbb{C}})$ to $I$, 
where $J$ runs through 
all open intervals of ${\mathbb{R}}$ containing $\bar{I}$. 
Note that 
$C^2(\bar{I},{\mathbb{C}})$ is a {\it dense subspace} of $X$.
$A_{0,\N}$ is densely-defined, linear and positive symmetric.

\subsubsection{Associated Hilbert Basis and Properties}

We note that  $A_{0,\N}$ is a special case of 
a regular Sturm-Liouville operator. In particular, $A_{0,\N}$ is essentially 
self-adjoint.  The closure $A_{\N}$ of $A_{0,\N}$ is given by
\begin{equation*}
A_{\N} u = - \frac{4}{\pi^2} \, u^{\, \prime \prime}, 
\end{equation*}
where $u \in W^2(I,\CMPLX)$ where $\prime$ denotes the weak derivative
and $u \in W^1_0(I,\CMPLX)$.  $A_{\N}$ has a purely discrete spectrum
$\sigma(A_{\N})$  consisting of simple eigenvalues,
\begin{equation*}
\sigma(A_{\N}) = \left\{k^2 : k \in {\mathbb{N}}
\right\} \, \, . 
\end{equation*}
For every $k \in {\mathbb{N}}$, a normalized eigenvector corresponding 
to the eigenvalue $k^2$ is given by  
\begin{equation*}
e_{k}(x) := 
\begin{cases}
\quad \qquad \frac{1}{\sqrt{2}} & \text{if $k = 0$} \\
\cos\left(\frac{k \pi}{2}(x+1)\right) & \text{if $k \neq 0$}
\end{cases}
\, \, , 
\end{equation*}
for every $x \in I$,  
\begin{equation*}
A_{0,\N} e_{k} = \frac{4}{\pi^2} \, \frac{\pi^2 k^2}{4} \, e_{k}  =
k^2 \, e_{k}  \, \, .
\end{equation*}
Hence $e_0,e_1,\dots $ is a Hilbert basis of $L^2_{\mathbb{C}}(I)$.
Furthermore, we note for $k \in {\mathbb{N}}^{*}$
and $x \in I$ that 
\begin{align*}
e_{2k}(x) & =  (-1)^{k} \cos(k \pi x) \, \, , \\
e_{2k - 1}(x) & =  
(-1)^{k} \sin\left( \pi \left(k -  \frac{1}{2}\right) x \right),
\end{align*}
and hence that 
\begin{align*}
& \text{$e_k$ is even and periodic with period $2$, for even  $k \in {\mathbb{N}}$} \, \, , \\
& \text{$e_k$ is odd and antiperiodic with period $2$, for odd $k \in {\mathbb{N}}$} \, \, .
\end{align*}
Also, $\cos\left(\frac{k \pi}{2}({\textrm{id}}_{\mathbb{R}}+1)\right)$
is periodic with period $4$ for every
$k \in {\mathbb{N}}$. 

\subsubsection{Compactness of $f(A_{\N})$}
For every $f \in B(\sigma(A_{\N}),{\mathbb{C}})$, if 
\begin{equation*}
(|f(k^2)|^2)_{k \in {\mathbb{N}}}
\end{equation*}
is summable, then $f(A_{\N})$ is a Hilbert-Schmidt operator and hence compact. The latter
is the case if 
\begin{equation*}
|f(\lambda)| \leqslant  c \, \lambda^{-\alpha} 
\end{equation*}
for every $\lambda \in \sigma(A_{\N})\setminus \{0\}$, where $\alpha > 1/2$, $c \geqslant 0$.

\subsubsection{Properties of Simple Convolutions and Integral Representations} 
\label{simpleConvoNeumann}

In the following, we give connections to the 
convolutions $*_{\p}$ from Section~\ref{periodicboundaryconditions},
for periodic BCs,
and
$*_{\aBC}$ from Section~\ref{antiperiodicboundaryconditions}, for
antiperiodic BCs. For every $k \in {\mathbb{Z}}$, $x \in I$,  the
corresponding eigenfunctions are as follows:
\begin{equation*}
e_{k}^{\p}(x) := \frac{1}{\sqrt{2}} \, e^{i \pi k x} \, \, , \, \,
e_{k}^{\aBC}(x) := \frac{1}{\sqrt{2}} \, e^{i \pi \left(k + \frac{1}{2}\right) x} \, \, .
\end{equation*}  
We note for even $C \in L^2(I)$, even $u \in L^2_{\mathbb{C}}(I)$, $x \in I$, $k \in {\mathbb{N}}^{*}$ that 
\begin{align*}
\braket{e_k^{\p}|C}_2 & = \braket{e_{-k}^{\p}|C}_2 = 
\frac{1}{\sqrt{2}} \, \int_{-1}^{1} \cos(\pi k y) C(y) \, dy 
\, \, \, \, \left( \leqslant 
\frac{1}{\sqrt{2}} \int_{-1}^{1} C(y) \, dy
\right) \, \, ,
\end{align*}
\begin{align*}
\braket{e_k^{\p}|u}_2  & = \braket{e_{-k}^{\p}|u}_2 = 
\frac{(-1)^{k}}{\sqrt{2}} \braket{e_{2k}|u}_2 \, \, .
\end{align*}
As a consequence, for $k \in {\mathbb{N}}^{*}$
\begin{align*}
& \braket{e_k^{\p}|C}_2 \braket{e_k^{\p}|u}_2 e_k^{\p}(x) + 
\braket{e_{-k}^{\p}|C}_2 \braket{e_{-k}^{\p}|u}_2 e_{-k}^{\p}(x) \\
& = \sqrt{2} \, (-1)^{k} \braket{e_k^{\p}|C}_2 \braket{e_k^{\p}|u}_2 e_{2k}(x) = \braket{e_k^{\p}|C}_2 \braket{e_{2k}|u}_2 e_{2k}(x) \, \, .
\end{align*}
Hence 
\begin{align*}
C *_{\p} u  & = \sum_{k \in {\mathbb{Z}}} \braket{e_k^{\p}|C}_2 \braket{e_k^{\p}|u}_2 e_{k}^{\p}
= 
\sum_{k = 0}^{\infty} \braket{e_k^{\p}|C}_2 \braket{e_{2k}|u}_2 e_{2k} \\
& = \sum_{k = 0}^{\infty} \varphi_1(k^2) \braket{e_{k}|u}_2 e_{k} \, \, , 
\end{align*}
where $\varphi_1 \in B(\sigma(A_{\N}),{\mathbb{C}})$ is defined by 
\begin{equation*}
\varphi_1(k^2) := 
\begin{cases}
\quad \, \, 0 & \text{if $k \in {\mathbb{N}}$ is odd} \\
\braket{e_{k/2}^{\p}|C}_2 & \text{if $k \in {\mathbb{N}}$ is even} 
\end{cases} \, \, ,
\end{equation*}
and 
\begin{equation*}
C *_{\p} P_{\text{even}} = \varphi_1(A_{\N}) \, \, , 
\end{equation*}
where the orthogonal projection $P_{\text{even}} : L^2_{\mathbb{C}}(I) \rightarrow L^2_{\mathbb{C}}(I)$ is defined by 
\begin{equation*}
P_{\text{even}} h := \frac{1}{2} \left(h + h \circ (-{\text{id}}_{I}) \right) 
\, \, ,
\end{equation*}
for every $h \in L^2_{\mathbb{C}}(I)$. \\
\linebreak
Also, we note for even $C \in L^2(I)$ and 
odd $u \in L^2_{\mathbb{C}}(I)$, $x \in I$, $k \in {\mathbb{N}}$ that 
\begin{align*}
\braket{e_k^{\aBC}|C}_2 & = \braket{e_{-k-1}^{\aBC}|C}_2 = \frac{1}{\sqrt{2}} \int_{-1}^{1} \cos\left[\pi \left(k+\frac{1}{2}\right) y\right] 
\cdot C(y) \, dy  \, \, \, \, \left( \leqslant 
\frac{1}{\sqrt{2}} \int_{-1}^{1} C(y) \, dy
\right) 
\, \, , \\
\braket{e_k^{\aBC}|u}_2 & = 
- \braket{e_{-k-1}^{\aBC}|u}_2 = 
\frac{(-1)^{k} \, i}{\sqrt{2}} \braket{e_{2k+1}|u}_2  \, \, .
\end{align*}
As a consequence, 
\begin{align*}
& \braket{e_k^{\aBC}|C}_2 \braket{e_k^{\aBC}|u}_2 e_k^{\aBC}(x) + 
\braket{e_{-k-1}^{\aBC}|C}_2 \braket{e_{-k-1}^{\aBC}|u}_2 e_{-k-1}^{\aBC}(x) = 
\braket{e_k^{\aBC}|C}_2 \braket{e_{2k+1}|u}_2 e_{2k+1}(x) \, \, .
\end{align*}
Hence,
\begin{align*}
C *_{\aBC} u  & = \sum_{k \in {\mathbb{Z}}} \braket{e_k^{\aBC}|C}_2 \braket{e_k^{\aBC}|u}_2 e_{k}^{\aBC} = 
\sum_{k=0}^{\infty} \braket{e_k^{\aBC}|C}_2 \braket{e_{2k+1}|u}_2 e_{2k+1}  \\
& = 
\sum_{k=0}^{\infty} \varphi_2(k^2) \braket{e_{k}|u}_2 e_{k} \, \, ,
\end{align*}
where $\varphi_2 \in B(\sigma(A_{\N}),{\mathbb{C}})$ is defined by 
\begin{equation*}
\varphi_2(k^2) := 
\begin{cases}
\braket{e_{(k-1)/2}^{\aBC}|C}_2 & \text{if $k \in {\mathbb{N}}^{*}$ is odd} \\
\qquad \, \, \, \, 0 & \text{if $k \in {\mathbb{N}}^{*}$ is even} 
\end{cases} \, \, ,
\end{equation*}
and 
\begin{equation*}
C *_{\aBC} P_{\text{odd}} = \varphi_2(A_{\N}) \, \, , 
\end{equation*}
where the orthogonal projection $P_{\text{odd}} : L^2_{\mathbb{C}}(I) \rightarrow L^2_{\mathbb{C}}(I)$ is defined by 
\begin{equation*}
P_{\text{odd}} h := \frac{1}{2} \left(h - h \circ (-{\text{id}}_{I}) \right) 
\, \, ,
\end{equation*}
for every $h \in L^2_{\mathbb{C}}(I)$.

\subsubsection{Properties of Canonical Convolutions and Integral Representations} 
In the following, $*_{\N}$ denotes the convolution in $L^2_{\mathbb{C}}(I)$
that, according to Theorem~\ref{convolutionsinhilbertspaces}, is associated
to the Hilbert basis $(e_{k})_{k \in {\mathbb{N}}^{*}}$. In particular, for $C \in L^2(I)$ and $c \in {\mathbb{R}}$,
\begin{equation*}
\braket{e_k|C}_2 = \int_{-1}^{1} 
\cos\left(\frac{k \pi}{2}(y+1)\right) C(y)  \, dy \, \, , 
\end{equation*}
is real-valued for every $k \in {\mathbb{N}}$ and $c - C *_{\N} \cdot$ is a bounded self-adjoint function of $A_{\N}$. Furthermore, since 
\begin{equation*}
\int_{-1}^{1} C(y) \, dy - 
\braket{e_k|C}_2 = \int_{-1}^{1} \left[1 -
\cos\left(\frac{k \pi}{2}(y+1)\right) \right] C(y)  \, dy \, \, , 
\end{equation*}
for every $k \in {\mathbb{N}}^{*}$,  
if $C \geqslant 0$ and 
\begin{equation*}
c = \int_{-1}^{1} C(y) \, dy \, \, , 
\end{equation*}
then the operator 
$c - C *_{\N} \cdot$
is in particular positive.
\newline
\linebreak
If $C$ is even, 
\begin{equation*}
\braket{e_k|C}_2 = 0 \, \, , 
\end{equation*}
for every odd $k \in {\mathbb{N}}$. 
\newline
\linebreak
In addition, since for every $C, g \in L^2_{\mathbb{C}}(I)$, $x \in I$ and every finite subset
$S \subset {\mathbb{N}}$
\begin{align*}
& \sum_{k \in S} |(e_k(x))^{*} \cdot \braket{C|e_k}_2|^2 \leqslant \sum_{k \in S}
|\braket{e_k|C}_2|^2  \leqslant 
\sum_{k \in {\mathbb{N}}^{*}} |\braket{e_k|C}_2|^2
\, \, , 
\end{align*}
we note that 
\begin{equation*}
(C *_{\N} u)(x) = \sum_{k \in {\mathbb{N}}} \braket{e_k|C}_2 \braket{e_k|u}_2 . \, e_k(x) =
\big \langle \sum_{k \in {\mathbb{N}}} (e_k(x))^{*} \cdot \braket{C|e_k}_2 . e_k|u \rangle_2
\end{equation*}
and, 
since for $k \in {\mathbb{N}}$, $x, u \in {\mathbb{R}}$
\begin{align*}
& \cos\left(\frac{k \pi}{2}(x+1)\right) \cos\left(\frac{k \pi}{2}(y+1)\right) \\
& = \frac{1}{2} \left\{\left[\cos\left(\frac{k \pi}{2}(x - y + 1)\right) 
+ \cos\left(\frac{k \pi}{2}(x + y + 1)\right) \right]
\cos\left(\frac{k \pi}{2}\right) \right. \\
& \left.\qquad \, \, \, \, + \left[\sin\left(\frac{k \pi}{2}(x - y + 1)\right) - \sin\left(\frac{k \pi}{2}(x + y + 1)\right) \right]
\sin\left(\frac{k \pi}{2}\right)
\right\} \, \, ,
\end{align*}
for {\it even} $C$, $k \in {\mathbb{N}}^{*}$, $x \in I$ that 
\begin{align*}
& (e_k(x))^{*} \cdot \braket{C|e_k}_2 = 
\cos\left(\frac{k \pi}{2}(x+1)\right) \int_{-1}^{1} C^{*}(y) \cos\left(\frac{k \pi}{2}(y+1)\right) dy \\
& = \frac{1}{2} \, \cos\left(\frac{k \pi}{2}\right) \left[ 
\int_{-1}^{1} C^{*}(y) \,
\cos\left(\frac{k \pi}{2}(x - y + 1)\right)
dy \right. \\
& \left. \qquad \qquad \qquad \quad \, \, +
\int_{-1}^{1} C^{*}(y) \,
\cos\left(\frac{k \pi}{2}(x + y + 1)\right) dy
\right] \\
& = \cos\left(\frac{k \pi}{2}\right) \int_{-1}^{1} {\hat C_{\p}}^{*}(y) \,
\cos\left(\frac{k \pi}{2}(x - y + 1)\right) 
dy \\
& = \cos\left(\frac{k \pi}{2}\right) \int_{x-1}^{x+1} {\hat C_{\p}}^{*}(y) \,
\cos\left(\frac{k \pi}{2}(x - y + 1)\right) 
dy \\
& =  \cos\left(\frac{k \pi}{2} \right) \int_{-1}^{1} {\hat C_{\p}}^{*}(x - y) \,
\cos\left(\frac{k \pi}{2}(y + 1)\right) 
dy \\
& = \cos\left(\frac{k \pi}{2} \right)  \braket{e_k|{\hat{C_{\p}}}^{*}(x - {\textrm{id}}_{\mathbb{R}})}_2
\, \, , 
\end{align*}
where ${\hat C_{\p}}$ denotes the extension of $C$ to a $2$-{\it periodic} function 
on ${\mathbb{R}}$, i.e., such that 
\begin{equation*}
{\hat C_{\p}}(x + 2) = {\hat C_{\p}}(x)
\end{equation*}
for every $x \in I$. Here, it has been used that
$ \cos\left(\frac{k \pi}{2}({\textrm{id}}_{\mathbb{R}} + 1)\right)$
is $2$-{\it periodic} for {\it even} $k \in {\mathbb{N}}^{*}$.
\newline
\linebreak
We note for $k \in {\mathbb{N}}^{*}$ that 
\begin{align*}
\cos\left(\frac{k \pi}{2}\right) =
\begin{cases}
\, \, \, \, 0 & \text{if $k$ is odd,} \\
\, \, \, \, 1 & \text{if $k$ is even and $k/2$ is even,} \\
-1 & \text{if $k$ is even and $k/2$ is odd.} 
\end{cases} \, \, . 
\end{align*}
In the next step, we decompose $C$ into $C_1,C_2 \in 
L^2_{\mathbb{C}}(I)$, where 
\begin{align*}
C_1(x) := \frac{1}{2} \left[C(|x|) + C(1 - |x|)
 \right] \, \, , \, \, C_2(x) := \frac{1}{2} \left[ C(|x|) - C(1 - |x|) 
 \right] \, \, , 
\end{align*}
such that 
\begin{equation*}
C = C_1 + C_2 \, \, .
\end{equation*} 
Note that $C_1,C_2$ are even and 
have a so called ``half-wave symmetry,'' i.e, that for every 
$x \in [0,1/2]$:
\begin{align*}
& C_1(1-x) = \frac{1}{2} \left[C(|1-x|) + C(1 - |1-x|)
 \right] =
\frac{1}{2} \left[C(1-x) + C(x) 
 \right] = C_1(x) \, \, , \\
& C_2(1-x) = \frac{1}{2} \left[C(|1-x|) - C(1 - |1-x|)
 \right] =
\frac{1}{2} \left[C(1-x) - C(x) 
 \right] = - C_2(x) \, \, .
\end{align*}
As a consequence, for even $k \in {\mathbb{N}}^{*}$, $j \in \{1,2\}$,
\begin{align*}
& \braket{e_k|C_j}_2 = \int_{-1}^{1} \cos\left(\frac{k \pi}{2}(y + 1)\right) C_{j}(y) \, dy \\
& = \int_{-1}^{1} \left[\cos\left(\frac{k \pi}{2} y \right)
\cos\left(\frac{k \pi}{2}\right) - \sin\left(\frac{k \pi}{2} y \right)
\sin\left(\frac{k \pi}{2}\right)\right] C_{j}(y) \, dy \\
& = \cos\left(\frac{k \pi}{2}\right) \int_{-1}^{1} \cos\left(\frac{k \pi}{2} y \right)
C_{j}(y) \, dy = 2 \cos\left(\frac{k \pi}{2}\right) \int_{0}^{1} \cos\left(\frac{k \pi}{2} y \right)
C_{j}(y) \, dy \\
& = 2 \cos\left(\frac{k \pi}{2}\right) \left[ 
\int_{0}^{1/2} \cos\left(\frac{k \pi}{2} y \right)
C_{j}(y) \, dy + \int_{1/2}^{1} \cos\left(\frac{k \pi}{2} y \right)
C_{j}(y) \, dy \right] \\
& = 2 \cos\left(\frac{k \pi}{2}\right) \left[ 
\int_{0}^{1/2} \cos\left(\frac{k \pi}{2} y \right)
C_{j}(y) \, dy + \int_{0}^{1/2} \cos\left(\frac{k \pi}{2} (1 - y) \right)
C_{j}(1 - y) \, dy \right] \\
& = 2 \cos\left(\frac{k \pi}{2}\right) \left[ 
\int_{0}^{1/2} \cos\left(\frac{k \pi}{2} y \right)
C_{j}(y) \, dy + (-1)^{j+1} (-1)^{k/2} 
\int_{0}^{1/2} \cos\left(y\right)
C_{j}(y) \, dy \right] \\
& = 2 \cos\left(\frac{k \pi}{2}\right) \left[1 + (-1)^{j+1} (-1)^{k/2} \right] \int_{0}^{1/2} \cos\left(y\right)
C_{j}(y) \, dy \, \, , 
\end{align*}
and hence 
\begin{align*}
\braket{e_k|C_j}_2 = 
\begin{cases}
0 & \text{if $k$ odd} \\
0 & \text{if $k$ even, $k/2$ odd and $j=1$} \\
0 & \text{if $k$ even, $k/2$ is even and $j=2$}
\end{cases} \, \, .
\end{align*}
From this, we conclude for even $k \in {\mathbb{N}}^{*}$, $x \in I$
that 
\begin{align} \label{evenk}
(e_k(x))^{*} \cdot \braket{C|e_k}_2 & = (e_k(x))^{*} \cdot \braket{C_1|e_k}_2 + (e_k(x))^{*} \cdot \braket{C_2|e_k}_2 \nonumber \\
& =
\braket{e_k|{\hat{C}_{1,\p}}^{*}(x - {\textrm{id}}_{\mathbb{R}})}_2
- \braket{e_k|{\hat{C}_{2,\p}}^{*}(x - {\textrm{id}}_{\mathbb{R}})}_2
\, \, , 
\end{align}
where for every $j \in \{1,2\}$, ${\hat C}_{j,\p}$ denotes the extension of $C_j$ to a $2$-{\it periodic} function 
on ${\mathbb{R}}$.
\newline
\linebreak 
In the following, we extend (\ref{evenk}) to {\it odd} $k \in {\mathbb{N}}^{*}$. For this purpose, 
we note for every even 
$g \in L^2_{\mathbb{C}}(I)$ that its 2-periodic extension ${\hat g}$ is even, too. For the proof, let $l \in {\mathbb{N}}$, 
$x \in [-1-2l,-2l]$. Then 
\begin{equation*}
{\hat u}(x) = u(x+2l) = u(-x-2l) = 
{\hat u}(-x-2l + 2l) = {\hat u}(-x) \, \, .
\end{equation*}
Hence it follows for even
$u \in L^2_{\mathbb{C}}(I)$, $x \in I$ and odd $k \in {\mathbb{N}}^{*}$ that 
\begin{align*}
\braket{e_k|{\hat u}(x - \cdot)}_2 & =
- \braket{e_k|{\hat u}(- x - \cdot)}_2 \, \, ,
\end{align*}
which implies that 
\begin{equation*}
\braket{e_k|{\hat u}(x - \cdot) + {\hat u}(- x - \cdot)}_2 = 0
\, \, .
\end{equation*}
On the other hand, for even $k \in {\mathbb{N}}^{*}$ 
\begin{align*}
\braket{e_k|{\hat u}(x - \cdot)}_2 &  =
\braket{e_k|{\hat u}(- x - \cdot)}_2 \, \, .
\end{align*}
As a consequence, 
\begin{equation*}
\braket{e_k|\frac{1}{2}\,[{\hat u}(x - \cdot)+ {\hat u}(- x - \cdot)]}_2
= 
\begin{cases}
\qquad \, \, \, \, 0 & \text{if $k \in {\mathbb{N}}^{*}$ is odd} \\
\braket{e_k|{\hat u}(x - \cdot)}_2 & \text{if $k \in {\mathbb{N}}^{*}$ is even}
\end{cases} \, \, .
\end{equation*}
Therefore, we conclude from (\ref{evenk}) that 
for {\it even} $C$ and $k \in {\mathbb{N}}^{*}$, $x \in I$ 
\begin{align*} 
& (e_k(x))^{*} \cdot \braket{C|e_k}_2
= \braket{e_k|\frac{1}{2} \, [{\hat{C}_{1,}}^{*}(x - \cdot
) + {\hat{C}_{1,\p}}(- x - \cdot
) - {\hat{C}_{2,\p}}^{*}(x - \cdot) -
{\hat{C}_{2,\p}}^{*}(- x - \cdot)
]}_2 \, \, .
\end{align*}
Furthermore,
\begin{align*}
& \braket{e_0|\frac{1}{2} \, [{\hat{C}_{1,\p}}^{*}(x - \cdot
) + {\hat{C}_{1,\p}}(- x - \cdot
) - {\hat{C}_{2,\p}}^{*}(x - \cdot) -
{\hat{C}_{2,\p}}^{*}(- x - \cdot)
]}_2 \\
& = 2^{-3/2} \, \int_{-1}^{1} {\hat{C}_{1,\p}}^{*}(x - y) \, dy  +  2^{-3/2} \,  \int_{-1}^{1} {\hat{C}_{1,\p}}(- x - y) \, dy \\
& \quad \, -  2^{-3/2} \, \int_{-1}^{1} {\hat{C}_{2,\p}}^{*}(x - y) \, dy -
 2^{-3/2} \, \int_{-1}^{1} {\hat{C}_{2,\p}}^{*}(- x - y)
]\, dy  \\
& = 2^{-3/2} \, \int_{-1}^{1} {\hat{C}_{1,\p}}^{*}(y - x) \, dy  +  2^{-3/2} \,  \int_{-1}^{1} {\hat{C}_{1,\p}}(y + x) \, dy \\
& \quad \, -  2^{-3/2} \, \int_{-1}^{1} {\hat{C}_{2,\p}}^{*}(y - x) \, dy -
 2^{-3/2} \, \int_{-1}^{1} {\hat{C}_{2,\p}}^{*}(y + x)
\, dy  \\
& = 2^{-3/2} \, \int_{-1}^{1} C_1^{*}(y) \, dy  +  2^{-3/2} \,  \int_{-1}^{1} C_1^{*}(y) \, dy \\
& \quad \, -  2^{-3/2} \, \int_{-1}^{1} C_2^{*}(y) \, dy -
 2^{-3/2} \, \int_{-1}^{1} C_2^{*}(y)
\, dy  \\
& = 2^{-1/2} \, \int_{-1}^{1} C_1^{*}(y) \, dy  -  2^{-1/2} \, \int_{-1}^{1} C_2^{*}(y) \, dy \\
& = (e_0(x))^{*} \cdot \braket{f|e_0}_2 + \frac{\sqrt{2} - 1}{2} \, \int_{-1}^{1} C_1^{*}(y) \, dy  - \frac{\sqrt{2} + 1}{2} \, \int_{-1}^{1} C_2^{*}(y) \, dy 
\end{align*}
and hence 
\begin{align*}
(e_0(x))^{*} \cdot \braket{C|e_0}_2 e_0 = & \braket{e_0|\frac{1}{2} \, [{\hat{C}_{1,\p}}^{*}(x - \cdot
) + {\hat{C}_{1,\p}}(- x - \cdot
) - {\hat{C}_{2,\p}}^{*}(x - \cdot) -
{\hat{C}_{2,\p}}^{*}(- x - \cdot)
]}_2 e_0 + k_{N,C} e_0 \, \, , 
\end{align*}
where 
\begin{align*}
k_{N,C} :=  - \frac{\sqrt{2} - 1}{2} \, \int_{-1}^{1} C_1^{*}(y) \, dy  + \frac{\sqrt{2} + 1}{2} \, \int_{-1}^{1} C_2^{*}(y) \, dy 
\, \, . 
\end{align*} 
As a consequence, 
\begin{align*}
C *_{\N} u (x) & =
\sum_{k \in {\mathbb{N}}} \braket{e_k|C}_2 \braket{e_k|u}_2 . \, e_k(x) =
\big \langle \sum_{k \in {\mathbb{N}}} (e_k(x))^{*} \cdot \braket{C|e_k}_2 . e_k|u \big \rangle_2 \\
& = 
\big \langle \sum_{k \in {\mathbb{N}}} \langle e_k|
\frac{1}{2} \, [C_1^{*}(x - \cdot) + C_1^{*}(-x - \cdot)
- C_2^{*}(x - \cdot) - C_2^{*}(-x - \cdot)
]
\rangle_2 . e_k|g \big \rangle_2 + k_{N,C} \braket{e_0|u}_2 \\
& = \frac{1}{2} \, 
\int_{-1}^{1} C_1(x - y) u(y) \, dy 
+ \frac{1}{2} \, 
\int_{-1}^{1} C_1(- x - y) u(y) \, dy \\
& \quad \, \, - \frac{1}{2} \,
\int_{-1}^{1} C_2(x - y) u(y) \, dy 
- \frac{1}{2} \, \int_{-1}^{1} C_2(- x - y) u(y) \, dy + k_{N,C} \braket{e_0|u}_2
\, \, .
\end{align*}

\subsubsection{Connections to the Standard Fourier Expansion}
\label{standardFourierNeumann}

In the following, we connect the expansion with respect 
to the Hilbert basis $(e_k)_{k \in {\mathbb{N}}^{*}}$ to 
that expansion with respect to the Hilbert basis 
$(e^{\p}_k)_{k \in {\mathbb{Z}}}$ of $L^2_{\mathbb{C}}((-2,2))$, where 
\begin{equation*}
e_{k}^{\p}(x) := \frac{1}{2} \, e^{i \pi k x / 2} \, \, , 
\end{equation*}
for every $x \in (-2,2)$. We note for $k \in {\mathbb{N}}^{*}$, $x \in I$ that 
\begin{align*}
e_{k}(x) & = e^{\frac{i k \pi}{2}} \cdot e_{k}^{\p}(x) +
e^{-\frac{i k \pi}{2}} \cdot e_{-k}^{\p}(x) \, \, .
\end{align*}
This implies for $u \in L^2_{\mathbb{C}}(I)$ that 
\begin{align*}
\braket{e_k|u}_2 & = e^{\frac{-i k \pi}{2}} \braket{e_{k}^{\p}|{\bar{u}}\,}_{{\bar{I}},2} + e^{\frac{i k \pi}{2}} \braket{e_{-k}^{\p}|{\bar{u}}\,}_{{\bar{I}},2}
\end{align*}
where ${\bar{I}} := (-2,2)$, $\braket{\,|\,}_{{\bar{I}},2}$ denotes the scalar product 
on $L^2_{\mathbb{C}}({\bar{I}})$ and 
${\bar{u}} \in L^2_{\mathbb{C}}((-2,2))$ is defined 
by 
\begin{equation*}
{\bar{u}}(x) := 
\begin{cases}
\, \, \, \, 0 & \text{if $x \in (-2,-1)$}\\
u(x) & \text{if $x \in (-1,1)$} \\
\, \, \, \, 0 & \text{if $x \in (1,2)$}
\end{cases}
\, \, ,
\end{equation*}
for a.e. $x \in (-2,2)$.
Furthermore, for $u \in L^2_{\mathbb{C}}(I)$ and on $I$
\begin{align*}
& u = \braket{e_0|u}_2 \, e_0 + \sum_{k=1}^{\infty} \braket{e_k|u}_2 \, e_k  
= \sum_{k \in {\mathbb{Z}}} \braket{e_{k}^{\p}|u}_2 \cdot e_{k}^{\p} + \sum_{k \in {\mathbb{Z}}} e^{i k \pi} \braket{e_{-k}^{\p}|u}_2 \cdot e_{k}^{\p} \, \, .
\end{align*}
We note for $k \in {\mathbb{Z}}$ that 
\begin{align*}
& e^{i k \pi} \braket{e_{-k}^{\p}|u}_2 =
\int_{-2}^{-1} \frac{1}{2} \, e^{-i \pi k y / 2} u(-y - 2) \, dy
+ \int_{1}^{2} \frac{1}{2} \, e^{-i \pi k y / 2} u(- y + 2) \, dy 
\, \, .
\end{align*}
Hence on $I$
\begin{equation*}
u = \sum_{k=0}^{\infty} \braket{e_k|u}_2 \, e_k  = 
\sum_{k \in {\mathbb{Z}}} 
\left[\int_{-2}^2 \left(e_{k}^{\p}(y)\right)^{*} \cdot u_e(y)\, dy \right] \cdot e_{k}^{\p} \, \, , 
\end{equation*}
where 
\begin{equation*}
u_e(x) := 
\begin{cases}
u(-x - 2) & \text{if $x \in (-2,-1)$}\\
\quad \, \, u(x) & \text{if $x \in (-1,1)$} \\
u(-x + 2) & \text{if $x \in (1,2)$}
\end{cases}
\, \, . 
\end{equation*}
for a.e. $x \in (-2,2)$.

\subsection{Dirichlet Boundary Conditions}
\label{dirichletboundaryconditions}
\begin{flushleft}
We define the operator $A_{0,\D} : D(A_{0,\D}) \rightarrow L^2_{\mathbb{C}}(I)$
by 
\end{flushleft}
\begin{equation*}
D(A_{0,\D}) := \left\{ u \in C^2(\bar{I},{\mathbb{C}}): 
\lim_{x \rightarrow -1} u(x) = \lim_{x \rightarrow 1} u(x) = 0
 \right\}
\end{equation*} 
and  
\begin{equation*}
A_{0,\D} u := - \frac{4}{\pi^2} \, u^{\, \prime \prime} 
\end{equation*}  
for every $u \in D(A_{0,\D})$, where $I := (-1,1)$,
$C^2(\bar{I},{\mathbb{C}})$ 
consists of the restrictions of the elements of
$C^2(J,{\mathbb{C}})$ to $I$, 
where $J$ runs through 
all open intervals of ${\mathbb{R}}$ containing $\bar{I}$. 
Note that 
$C^2(\bar{I},{\mathbb{C}})$ is a {\it dense subspace} of $X$.
$A_{0,\D}$ is densely-defined, linear and positive symmetric.

\subsubsection{Associated Hilbert Basis and Properties}

We note that $A_{0,\D}$ is a special case of a regular Sturm-Liouville
operator. In particular, $A_{0,\D}$ is essentially self-adjoint.  The
closure $A_{\D}$ of $A_{0,\D}$, given by
\begin{equation*}
A_{\D} u = - \frac{4}{\pi^2} \, u^{\, \prime \prime}, 
\end{equation*}
where $u \in W^2_0(I,\CMPLX)$ and $\prime$ denotes the weak derivative.
$A_{\D}$ has a purely discrete spectrum
$\sigma(A_{\D})$  consisting of simple eigenvalues,
\begin{equation*}
\sigma(A_{\D}) = \left\{k^2 : k \in {\mathbb{N}}^{*}
\right\} \, \, . 
\end{equation*}
For every $k \in {\mathbb{N}}^{*}$, a normalized eigenvector corresponding 
to the eigenvalue $k^2$ is given by  
\begin{equation*}
e_{k}(x) := \sin\left(\frac{k \pi}{2}(x+1)\right)
\, \, , 
\end{equation*}
for every $x \in I$,  
\begin{equation*}
A_{0,\D} e_{k} = \frac{4}{\pi^2} \, \frac{\pi^2 k^2}{4} \, e_{k}  =
k^2 \, e_{k}  \, \, ,
\end{equation*}
Hence $e_1,e_2,\dots $ is a Hilbert basis of $L^2_{\mathbb{C}}(I)$.
Furthermore, we note for $k \in {\mathbb{N}}^{*}$,
$l \in {\mathbb{N}}$
and $x \in I$ that 
\begin{align*}
& e_{2k}(x) = (-1)^{k} \sin(k \pi x) \, \, , \\
& e_{2l + 1}(x) = 
(-1)^l \cos\left( \pi \left(l +  \frac{1}{2}\right) x \right) 
\end{align*}
and hence that 
\begin{align*}
& \text{$e_k$ is odd and periodic with period $2$, for even  $k \in {\mathbb{N}}^{*}$} \, \, , \\
& \text{$e_k$ is even and antiperiodic with period $2$, for odd $k \in {\mathbb{N}}^{*}$} \, \, .
\end{align*}
Also, $\sin\left(\frac{k \pi}{2}({\textrm{id}}_{\mathbb{R}}+1)\right)$
is periodic with period $4$ for every $k \in {\mathbb{N}}^{*}$.

\subsubsection{Compactness of $f(A_{\D})$}
For every $f \in B(\sigma(A_{\D}),{\mathbb{C}})$, if 
\begin{equation*}
(|f(k^2)|^2)_{k \in {\mathbb{N}}^{*}}
\end{equation*}
is summable, then $f(A_{\D})$ is a Hilbert-Schmidt operator and hence compact. The latter
is the case if 
\begin{equation*}
|f(\lambda)| \leqslant  c \, \lambda^{-\alpha} 
\end{equation*}
for every $\lambda \in \sigma(A_{\D})$, where $\alpha > 1/2$, $c \geqslant 0$.

\subsubsection{Properties of Simple Convolutions and Integral Representations} 
\label{simpleConvoDirichlet}

In the following, we give connections to the 
convolutions $*_{\p}$ from Section~\ref{periodicboundaryconditions},
for periodic BCs,
and
$*_{\aBC}$ from Section~\ref{antiperiodicboundaryconditions}, for
antiperiodic BCs.  For
every $k \in {\mathbb{Z}}$, $x \in I$,  the
corresponding eigenfunctions are as follows:
\begin{equation*}
e_{k}^{\p}(x) := \frac{1}{\sqrt{2}} \, e^{i \pi k x} \, \, , \, \,
e_{k}^{\aBC}(x) := \frac{1}{\sqrt{2}} \, e^{i \pi \left(k + \frac{1}{2}\right) x} \, \, .
\end{equation*}  
We note for even $C \in L^2(I)$, odd $u \in L^2_{\mathbb{C}}(I)$, $x \in I$, $k \in {\mathbb{N}}^{*}$ that 
\begin{align*}
\braket{e_k^{\p}|C}_2 & = \braket{e_{-k}^{\p}|C}_2 = 
\frac{1}{\sqrt{2}} \, \int_{-1}^{1} \cos(\pi k y) C(y) \, dy 
\, \, \, \, \left( \leqslant 
\frac{1}{\sqrt{2}} \int_{-1}^{1} C(y) \, dy
\right) \, \, , \\
\braket{e_k^{\p}|u}_2  & = - \braket{e_{-k}^{\p}|u}_2 = 
\frac{(-1)^{k+1} i}{\sqrt{2}} \braket{e_{2k}|u}_2 \, \, .
\end{align*}
As a consequence, for $k \in {\mathbb{N}}^{*}$
\begin{align*}
& \braket{e_k^{\p}|C}_2 \braket{e_k^{\p}|u}_2 e_k^{\p}(x) + 
\braket{e_{-k}^{\p}|C}_2 \braket{e_{-k}^{\p}|u}_2 e_{-k}^{\p}(x) \\
& = i \sqrt{2} \, (-1)^{k} \braket{e_k^{\p}|C}_2 \braket{e_k^{\p}|u}_2 e_{2k}(x) = \braket{e_k^{\p}|C}_2 \braket{e_{2k}|u}_2 e_{2k}(x) \, \, .
\end{align*}
Hence 
\begin{align*}
C *_{\p} u  & = \sum_{k \in {\mathbb{Z}}} \braket{e_k^{\p}|C}_2 \braket{e_k^{\p}|u}_2 e_{k}^{\p}
=  
\sum_{k = 1}^{\infty} \braket{e_k^{\p}|C}_2 \braket{e_{2k}|u}_2 e_{2k} \\
& = \sum_{k = 1}^{\infty} \varphi_1(k^2) \braket{e_{k}|u}_2 e_{k} \, \, , 
\end{align*}
where $\varphi_1 \in B(\sigma(A_{\D}),{\mathbb{C}})$ is defined by 
\begin{equation*}
\varphi_1(k^2) := 
\begin{cases}
\quad \, \, 0 & \text{if $k \in {\mathbb{N}}^{*}$ is odd} \\
\braket{e_{k/2}^{\p}|C}_2 & \text{if $k \in {\mathbb{N}}^{*}$ is even} 
\end{cases} \, \, ,
\end{equation*}
and 
\begin{equation*}
C *_{\p} P_{\text{odd}} = \varphi_1(A_{\D}) \, \, .
\end{equation*}

Also, we note for even $C \in L^2(I)$ and 
even $u \in L^2_{\mathbb{C}}(I)$, $x \in I$, $k \in {\mathbb{N}}$ that 
\begin{align*}
\braket{e_k^{\aBC}|C}_2  & = 
\braket{e_{-k-1}^{\aBC}|C}_2 
= \frac{(-1)^{k}}{\sqrt{2}} \braket{e_{2k+1}|C}_2 \, \, \, \, \left( \leqslant 
\frac{1}{\sqrt{2}} \int_{-1}^{1} C(y) \, dy
\right) 
\, \, , \\
\braket{e_k^{\aBC}|u}_2 & = \braket{e_{-k-1}^{\aBC}|u}_2 = \frac{(-1)^{k}}{\sqrt{2}} \braket{e_{2k+1}|u}_2  \, \, .
\end{align*}
As a consequence, 
\begin{align*}
& \braket{e_k^{\aBC}|C}_2 \braket{e_k^{\aBC}|u}_2 e_k^{\aBC}(x) + 
\braket{e_{-k-1}^{\aBC}|C}_2 \braket{e_{-k-1}^{\aBC}|u}_2 e_{-k-1}^{\aBC}(x) \\
& = \sqrt{2} \, (-1)^{k} \braket{e_k^{\aBC}|C}_2 \braket{e_k^{\aBC}|u}_2 e_{2k+1}(x) = \braket{e_k^{\aBC}|C}_2 \braket{e_{2k+1}|u}_2 e_{2k+1}(x)  \, \, .\\
\end{align*}
Hence 
\begin{align*}
C *_{\aBC} u  & = \sum_{k \in {\mathbb{Z}}} \braket{e_k^{\aBC}|C}_2 \braket{e_k^{\aBC}|u}_2 e_{k}^{\aBC} = \sum_{k=0}^{\infty} \braket{e_k^{\aBC}|C}_2 \braket{e_{2k+1}|u}_2 e_{2k+1}  \\
& = 
\sum_{k=1}^{\infty} \varphi_2(k^2) \braket{e_{k}|u}_2 e_{k} \, \, ,
\end{align*}
where $\varphi_2 \in B(\sigma(A_{\D}),{\mathbb{C}})$ is defined by 
\begin{equation*}
\varphi_2(k^2) := 
\begin{cases}
\braket{e_{(k-1)/2}^{\aBC}|C}_2 & \text{if $k \in {\mathbb{N}}^{*}$ is odd} \\
\qquad \, \, \, \, 0 & \text{if $k \in {\mathbb{N}}^{*}$ is even} 
\end{cases} \, \, ,
\end{equation*}
and 
\begin{equation*}
C *_{\aBC} P_{\text{even}} = \varphi_2(A_{\D}) \, \, .
\end{equation*}

\subsubsection{Properties of Canonical Convolutions and Integral Representations}

In the following, $*_{\D}$ denotes the convolution in $L^2_{\mathbb{C}}(I)$
that, according to Theorem~\ref{convolutionsinhilbertspaces}, is associated
to the Hilbert basis $(e_{k})_{k \in {\mathbb{N}}^{*}}$. In particular, for $C \in L^2(I)$ and $c \in {\mathbb{R}}$,
\begin{equation*}
\braket{e_k|C}_2 = \int_{-1}^{1} 
\sin\left(\frac{k \pi}{2}(y+1)\right) C(y)  \, dy \, \, , 
\end{equation*}
is real-valued for every $k \in {\mathbb{N}}^{*}$ and $c - C *_{\D} \cdot$ is a bounded self-adjoint function of $A_{\D}$. Furthermore, since 
\begin{equation*}
\int_{-1}^{1} C \, dy - 
\braket{e_k|C}_2 = \int_{-1}^{1} \left[1 -
\sin\left(\frac{k \pi}{2}(y+1)\right) \right] C(y)  \, dy \, \, , 
\end{equation*}
for every $k \in {\mathbb{N}}^{*}$,  
if $C \geqslant 0$ and 
\begin{equation*}
c = \int_{-1}^{1} C(y) \, dy \, \, , 
\end{equation*}
then the operator 
$c - C *_{\D} \cdot$
is in particular positive.
\newline
\linebreak
If $C$ is even, 
\begin{equation*}
\braket{e_k|C}_2 = 0 \, \, , 
\end{equation*}
for every even $k \in {\mathbb{N}}^{*}$. 
\newline
\linebreak
In addition, since for every $C, g \in L^2_{\mathbb{C}}(I)$, $x \in I$ and every finite subset
$S \subset {\mathbb{N}}^{*}$
\begin{align*}
& \sum_{k \in S} |(e_k(x))^{*} \cdot \braket{C|e_k}_2|^2 \leqslant \sum_{k \in S}
|\braket{e_k|C}_2|^2  \leqslant 
\sum_{k \in {\mathbb{N}}^{*}} |\braket{e_k|C}_2|^2
\, \, , 
\end{align*}
we note that 
\begin{equation*}
(C *_{\D} u)(x) = \sum_{k \in {\mathbb{N}}^{*}} \braket{e_k|C}_2 \braket{e_k|u}_2 . \, e_k(x) =
\big \langle \sum_{k \in {\mathbb{N}}^{*}} (e_k(x))^{*} \cdot \braket{C|e_k}_2 . e_k|u \rangle_2
\end{equation*}
and, 
since for $k \in {\mathbb{N}}^{*}$, $x, u \in {\mathbb{R}}$
\begin{align*}
& \sin\left(\frac{k \pi}{2}(x+1)\right) \sin\left(\frac{k \pi}{2}(y+1)\right) \\
& = \frac{1}{2} \left\{\left[\cos\left(\frac{k \pi}{2}(x - y + 1)\right) 
- \cos\left(\frac{k \pi}{2}(x + y + 1)\right) \right]
\cos\left(\frac{k \pi}{2}\right) \right. \\
& \left.\qquad \, \, \, \, + \left[\sin\left(\frac{k \pi}{2}(x - y + 1)\right) + \sin\left(\frac{k \pi}{2}(x + y + 1)\right) \right]
\sin\left(\frac{k \pi}{2}\right)
\right\} \, \, ,
\end{align*}
for {\it even} $C$, $k \in {\mathbb{N}}^{*}$, $x \in I$ that 
\begin{align*}
& (e_k(x))^{*} \cdot \braket{C|e_k}_2 = 
\sin\left(\frac{k \pi}{2}(x+1)\right) \int_{-1}^{1} C^{*}(y) \sin\left(\frac{k \pi}{2}(y+1)\right) dy \\
& = \sin\left(\frac{k \pi}{2}\right) \int_{-1}^{1} {\hat{C}_{\aBC}}^{*}(y) \,
\sin\left(\frac{k \pi}{2}(x - y + 1)\right) 
dy \\
& = \sin\left(\frac{k \pi}{2}\right) \int_{x-1}^{x+1} {\hat{C}_{\aBC}}^{*}(y) \,
\sin\left(\frac{k \pi}{2}(x - y + 1)\right) 
dy \\
& =  \sin\left(\frac{k \pi}{2} \right) \int_{-1}^{1} {\hat{C}_{\aBC}}^{*}(x - y) \,
\sin\left(\frac{k \pi}{2}(y + 1)\right) 
dy \\
& = \sin\left(\frac{k \pi}{2} \right)  \braket{e_k|{\hat{C}_{\aBC}}^{*}(x - {\textrm{id}}_{\mathbb{R}})}_2
\, \, , 
\end{align*}
where ${\hat{C}_{\aBC}}$ denotes the extension of $C$ to a $2$-{\it antiperiodic} function 
on ${\mathbb{R}}$, i.e., such that 
\begin{equation*}
{\hat{C}_{\aBC}}(x + 2) = - {\hat{C}_{\aBC}}(x)
\end{equation*}
for every $x \in I$. Here, it has been used that 
$\sin\left(\frac{k \pi}{2}({\textrm{id}}_{\mathbb{R}} + 1)\right)$
is $2$-{\it antiperiodic} for {\it odd} $k \in {\mathbb{N}}^{*}$.
\newline
\linebreak
We note for $k \in {\mathbb{N}}^{*}$ that 
\begin{align*}
\sin\left(\frac{k \pi}{2}\right) =
\begin{cases}
\, \, \, \, 0 & \text{if $k$ is even} \\
\, \, \, \, 1 & \text{if $k$ is odd and $(k-1)/2$ is even} \\
-1 & \text{if $k$ is odd and $(k-1)/2$ is odd} 
\end{cases} \, \, . 
\end{align*}
As a consequence, if we denote by $\mbox{P}$ the orthogonal projection 
onto the closure of the subspace 
\begin{equation*}
\mathrm{Span}(\{e_{4l + 1}: l \in {\mathbb{N}}\}) \, \, ,
\end{equation*} 
then 
for {\it even} $C$ and {\it odd} $k \in {\mathbb{N}}^{*}$, $x \in I$ 
\begin{align} \label{oddk}
(e_k(x))^{*} \cdot \braket{C|e_k}_2  
& = (e_k(x))^{*} \braket{\mbox{PC}|e_k}_2
+ (e_k(x))^{*} \braket{C - {\mbox{PC}}|e_k}_2 \nonumber \\
& = \braket{e_k|({\widehat{\mbox{PC}}_{\aBC}})^{*}(x - \cdot
) - ({\hat{C}_{\aBC} - \widehat{\mbox{PC}}_{\aBC}})^{*}(x - \cdot)}_2
\, \, ,
\end{align}
where $\widehat{\mbox{PC}}_{\aBC}$ denotes the extension of $\mbox{PC}$ to a $2$-{\it antiperiodic} function 
on ${\mathbb{R}}$. Here, we used for every odd $k \in {\mathbb{N}}^{*}$ 
that $e_{k}$ is even and hence that $\mbox{PC}$ and 
$C - \mbox{PC}$ are even. 
\newline
\linebreak 
In the following, we extend (\ref{oddk}) to {\it even} $k \in {\mathbb{N}}^{*}$. For this purpose, 
we note for every even 
$u \in L^2_{\mathbb{C}}(I)$ that its 2-anti-periodic extension ${\hat u}$ is even, too. For the proof, let $l \in {\mathbb{N}}$, 
$x \in [-1-2l,-2l]$. Then 
\begin{align*}
{\hat u}(x) & = (-1)^l u(x+2l) = (-1)^l u(-x-2l) = (-1)^l \cdot (-1)^l
{\hat u}(-x-2l + 2l) = {\hat u}(-x) \, \, .
\end{align*}
Hence it follows for even
$u \in L^2_{\mathbb{C}}(I)$, $x \in I$ and even $k \in {\mathbb{N}}^{*}$ that 
\begin{align*}
\braket{e_k|{\hat u}(x - \cdot)}_2 & = 
- \braket{e_k|{\hat u}(- x - \cdot)}_2 \, \, ,
\end{align*}
which implies that 
\begin{equation*}
\braket{e_k|{\hat u}(x - \cdot) + {\hat u}(- x - \cdot)}_2 = 0
\, \, .
\end{equation*}
On the other hand, for odd $k \in {\mathbb{N}}^{*}$ 
\begin{align*}
\braket{e_k|{\hat u}(x - \cdot)}_2 & = 
\braket{e_k|{\hat u}(- x - \cdot)}_2 \, \, .
\end{align*}
As a consequence, 
\begin{equation*}
\braket{e_k|\frac{1}{2}\,[{\hat u}(x - \cdot)+ {\hat u}(- x - \cdot)]}_2
= 
\begin{cases}
\qquad \, \, \, \, 0 & \text{if $k \in {\mathbb{N}}^{*}$ is even} \\
\braket{e_k|{\hat u}(x - \cdot)}_2 & \text{if $k \in {\mathbb{N}}^{*}$ is odd}
\end{cases} \, \, .
\end{equation*}
Therefore, we conclude from (\ref{oddk}) that 
for {\it even} $C$ and $k \in {\mathbb{N}}^{*}$, $x \in I$ 
\begin{align*} 
& (e_k(x))^{*} \cdot \braket{C|e_k}_2 \\
& = \braket{e_k|\frac{1}{2} \, [({\widehat{\mbox{PC}}_{\aBC}})^{*}(x - \cdot
) + ({\widehat{\mbox{PC}}_{\aBC}})^{*}(- x - \cdot
) - ({\hat{C}_{\aBC} - \widehat{\mbox{PC}}_{\aBC}})^{*}(x - \cdot) -
({\hat{C}_{\aBC} - \widehat{\mbox{PC}}_{\aBC}})^{*}(- x - \cdot)
]}_2
\end{align*}
and hence that 
\begin{align*}
& (C *_{\D} u)(x) = \sum_{k \in {\mathbb{N}}^{*}} \braket{e_k|C}_2 \braket{e_k|u}_2 . \, e_k(x) \\
& =
\big \langle \sum_{k \in {\mathbb{N}}^{*}} (e_k(x))^{*} \cdot \braket{C|e_k}_2 . e_k|u \big \rangle_2 \\
& = 
\big \langle \sum_{k \in {\mathbb{N}}^{*}} \langle e_k|
\frac{1}{2} \, [({\widehat{\mbox{PC}}_{\aBC}})^{*}(x - \cdot
) + ({\widehat{\mbox{PC}}_{\aBC}})^{*}(- x - \cdot
) \\
& \qquad \qquad \, \, \, \, \, \, \, \, \, \, - ({\hat{C}_{\aBC} - \widehat{\mbox{PC}}_{\aBC}})^{*}(x - \cdot) -
({\hat{C}_{\aBC} - \widehat{\mbox{PC}}_{\aBC}})^{*}(- x - \cdot)
]
\rangle_2 . e_k|g \big \rangle_2 \\
& = \frac{1}{2} \, \braket{({\widehat{\mbox{PC}}_{\aBC}})^{*}(x - \cdot
)|u}_2 + \frac{1}{2} \, \braket{({\widehat{\mbox{PC}}_{\aBC}})^{*}(- x - \cdot
)|u}_2 \\
& \quad \, \, - \frac{1}{2} \,
\braket{(\hat{C}_{\aBC} - \widehat{\mbox{PC}}_{\aBC})^{*}(x - \cdot
)|u}_2 - \frac{1}{2} \, \braket{(\hat{C}_{\aBC} - \widehat{\mbox{PC}}_{\aBC})^{*}(-x - \cdot
)|u}_2 \\
& = \frac{1}{2} \, 
\int_{-1}^{1} ({\widehat{\mbox{PC}}_{\aBC}})(x - y) u(y) \, dy 
+ \frac{1}{2} \, 
\int_{-1}^{1} ({\widehat{\mbox{PC}}_{\aBC}})(- x - y) u(y) \, dy \\
& \quad \, \, - \frac{1}{2} \,
\int_{-1}^{1} (\hat{C}_{\aBC} - {\widehat{\mbox{PC}}_{\aBC}})(x - y) u(y) \, dy 
- \frac{1}{2} \, \int_{-1}^{1} (\hat{C}_{\aBC} - {\widehat{\mbox{PC}}_{\aBC}})(- x - y) u(y) \, dy
\, \, .
\end{align*}

\subsubsection{Representation of the Projection Present in the Canonical Convolution}
In the following, we give a representation of $\mbox{P}$, which
is independent of the orthonormal system used in its definition. 
\newline
\linebreak 
For $u \in L^2_{\mathbb{C}}(I)$, $x \in I$, we note that   
\begin{align*}
\sum_{k=0}^n \braket{e_{4k+1}|u} e_{4k+1}(x) =
\big \langle 
\sum_{k=0}^n 
e_{4k+1}(x) . e_{4k+1} | u
\big \rangle .
\end{align*}
Furthermore, for $y \in I$,
\begin{align*}
& e_{4k+1}(x) \, e_{4k+1}(y) = \sin\left(\frac{(4k+1) \pi}{2}(x+1)\right) \sin\left(\frac{(4k+1) \pi}{2}(y+1)\right) \\
& = \frac{1}{2} \, \left[\sin\left(\frac{(4k+1) \pi}{2}(x - y + 1)\right) + \sin\left(\frac{(4k+1) \pi}{2}(x + y + 1)\right) \right] \, \, .
\end{align*}
Since for $a \in {\mathbb{R}}$,
\begin{align*}
\sum_{k=0}^n \sin((4 k + 1) a) & =
\sum_{k=0}^n [\sin(4 k a) \cos(a) + \cos(4 k a) \sin(a)] \\
& = \cos(a) \sum_{k=0}^n \sin(4 k a) + \sin(a) 
\sum_{k=0}^n \cos(4 k a) \, \, ,
\end{align*}
for $n \in {\mathbb{N}}^{*}$, $b \in {\mathbb{C}}$ 
satisfying $b \neq 2 \pi l, l \in {\mathbb{Z}}$,
\begin{align*}
& \sum_{k=0}^n \sin(k b) = \frac{1}{2i} 
\left[ 
\sum_{k=0}^n e^{i k b} - \sum_{k=0}^n e^{- i k b} \right] =
\frac{1}{2i} \left[ 
\sum_{k=0}^n (e^{i b})^{k} - \sum_{k=0}^n (e^{- i b})^{k} \right] \\
& = \frac{1}{2 \sin{\!(b/2)}} \left\{ \sin\left(\frac{b}{2}\right)
+ \sin\left[\left(n + \frac{1}{2}\right) b \right] 
\right\}
\end{align*}
and hence if $a \neq l \pi/2, l \in {\mathbb{Z}}$,
\begin{align*}
\sum_{k=0}^n \sin((4 k + 1) a) & 
= \frac{\sin[(2n+1)a] \sin[2(n+1)a]}{\sin{\!(2a)}} \, \, ,
\end{align*}
we conclude that 
\begin{align*}
& \sum_{k=0}^n e_{4k+1}(x) \, e_{4k+1}(y) \\
& = \frac{1}{2} \, \frac{\sin\left(\frac{(2n+1) \pi}{2}(x - y + 1)\right)
\sin\left(\frac{2(n+1) \pi}{2}(x - y + 1)\right)}{\sin{\![\pi(x - y + 1)]}} 
\\
& \quad \, \, + \frac{1}{2} \, \frac{\sin\left(\frac{(2n+1) \pi}{2}(x + y + 1)\right)
\sin\left(\frac{2(n+1) \pi}{2}(x + y + 1)\right)}{\sin{\![\pi(x + y + 1)]}} \, \, ,
\end{align*}
if both
\begin{equation*}
x - y \, \, , \, \, x + y  \notin {\mathbb{Z}} \, \, .
\end{equation*}
As a consequence, 
\begin{align*}
& \sum_{k=0}^n \braket{e_{4k+1}|u} e_{4k+1}(x) \\
& \quad \, \, + \frac{1}{2} \, \int_{-1}^{1}
\frac{\sin\left(\frac{(2n+1) \pi}{2}(x + y + 1)\right)
\sin\left(\frac{2(n+1) \pi}{2}(x + y + 1)\right)}{\sin{\![\pi(x + y + 1)]}} \, u(y) \, dy \\
& = \frac{1}{2} \, \int_{-1}^{1}
\frac{\sin\left(\frac{(2n+1) \pi}{2}(x - y + 1)\right)
\sin\left(\frac{2(n+1) \pi}{2}(x - y + 1)\right)}{\sin{\![\pi(x - y + 1)]}} \, u(y) \, dy \\
& \quad \, \, + \frac{1}{2} \, \int_{-1}^{1}
\frac{\sin\left(\frac{(2n+1) \pi}{2}(x - y + 1)\right)
\sin\left(\frac{2(n+1) \pi}{2}(x - y + 1)\right)}{\sin{\![\pi(x - y + 1)]}} \, u(-y) \, dy \, \, , 
\end{align*}
and hence 
\begin{align*}
& \mbox{Pu} = 
\lim_{n \rightarrow \infty} 
\frac{\sin\left[\pi (n + \frac{1}{2}) 
({\textrm{id}}_{I} + 1)\right]
\sin\left[\pi (n+1) ({\textrm{id}}_{I}  + 1)\right]}{\sin{\![\pi({\textrm{id}}_{I}  + 1)]}} * \frac{1}{2} \, [u + u \circ (-{\textrm{id}}_{I})] \, \, , 
\end{align*}
where $*$ denotes the integral convolution on $I$, and the limit is to be 
performed in $L^2_{\mathbb{C}}(I)$.

\subsubsection{Connections to the Standard Fourier Expansion}
\label{standardFourierDirichlet}
In the following, we connect the expansion with respect 
to the Hilbert basis $(e_k)_{k \in {\mathbb{N}}^{*}}$ to 
that expansion with respect to the Hilbert basis 
$(e^{\p}_k)_{k \in {\mathbb{Z}}}$ of $L^2_{\mathbb{C}}((-2,2))$, where 
\begin{equation*}
e_{k}^{\p}(x) := \frac{1}{2} \, e^{i \pi k x / 2} \, \, , 
\end{equation*}
for every $x \in (-2,2)$. We note for $k \in {\mathbb{N}}^{*}$, $x \in I$ that 
\begin{align*}
e_{k}(x) & = \frac{e^{\frac{i k \pi}{2}}}{i} \cdot e_{k}^{\p}(x) -
\frac{e^{-\frac{i k \pi}{2}}}{i} \cdot e_{-k}^{\p}(x) \, \, .
\end{align*}
This implies for $u \in L^2_{\mathbb{C}}(I)$ that 
\begin{align*}
\braket{e_k|u}_2 & 
= - \frac{e^{\frac{-i k \pi}{2}}}{i} \braket{e_{k}^{\p}|{\bar{u}}\,}_{{\bar{I}},2} + \frac{e^{\frac{i k \pi}{2}}}{i} \braket{e_{-k}^{\p}|{\bar{u}}\,}_{{\bar{I}},2}
\end{align*}
where ${\bar{I}} := (-2,2)$, $\braket{\,|\,}_{{\bar{I}},2}$ denotes the scalar product 
on $L^2_{\mathbb{C}}({\bar{I}})$ and 
${\bar{u}} \in L^2_{\mathbb{C}}((-2,2))$ is defined 
by 
\begin{equation*}
{\bar{u}}(x) := 
\begin{cases}
\, \, \, \, 0 & \text{if $x \in (-2,-1)$}\\
u(x) & \text{if $x \in (-1,1)$} \\
\, \, \, \, 0 & \text{if $x \in (1,2)$}
\end{cases}
\, \, ,
\end{equation*}
for a.e. $x \in (-2,2)$.
Furthermore, for $u \in L^2_{\mathbb{C}}(I)$ and on $I$
\begin{align*}
& u = \sum_{k=1}^{\infty} \braket{e_k|u}_2 \, e_k  = 
\sum_{k \in {\mathbb{Z}}} \braket{e_{k}^{\p}|u}_2 \cdot e_{k}^{\p} - \sum_{k \in {\mathbb{Z}}} e^{i k \pi} \braket{e_{-k}^{\p}|u}_2 \cdot e_{k}^{\p} \, \, .
\end{align*}
We note for $k \in {\mathbb{Z}}$ that 
\begin{align*}
& e^{i k \pi} \braket{e_{-k}^{\p}|u}_2 = 
\int_{-2}^{-1} \frac{1}{2} \, e^{-i \pi k y / 2} u(-y - 2) \, dy
+ \int_{1}^{2} \frac{1}{2} \, e^{-i \pi k y / 2} u(- y + 2) \, dy 
\, \, .
\end{align*}
Hence on $I$
\begin{equation*}
u = \sum_{k=1}^{\infty} \braket{e_k|u}_2 \, e_k  = 
\sum_{k \in {\mathbb{Z}}} 
\left[\int_{-2}^2 \left(e_{k}^{\p}(x)\right)^{*} \cdot u_e(x)\, dx \right] \cdot e_{k}^{\p} \, \, , 
\end{equation*}
where 
\begin{equation*}
u_e(x) := 
\begin{cases}
- u(-x - 2) & \text{if $x \in (-2,-1)$}\\
\qquad u(x) & \text{if $x \in (-1,1)$} \\
- u(-x + 2) & \text{if $x \in (1,2)$}
\end{cases}
\, \, . 
\end{equation*}
for a.e. $x \in (-2,2)$.

\section{Numerical Experiments}

\label{sec:numerics}

\begin{figure}[t]
\centering
\subfigure
{
\scalebox{0.30}{\includegraphics{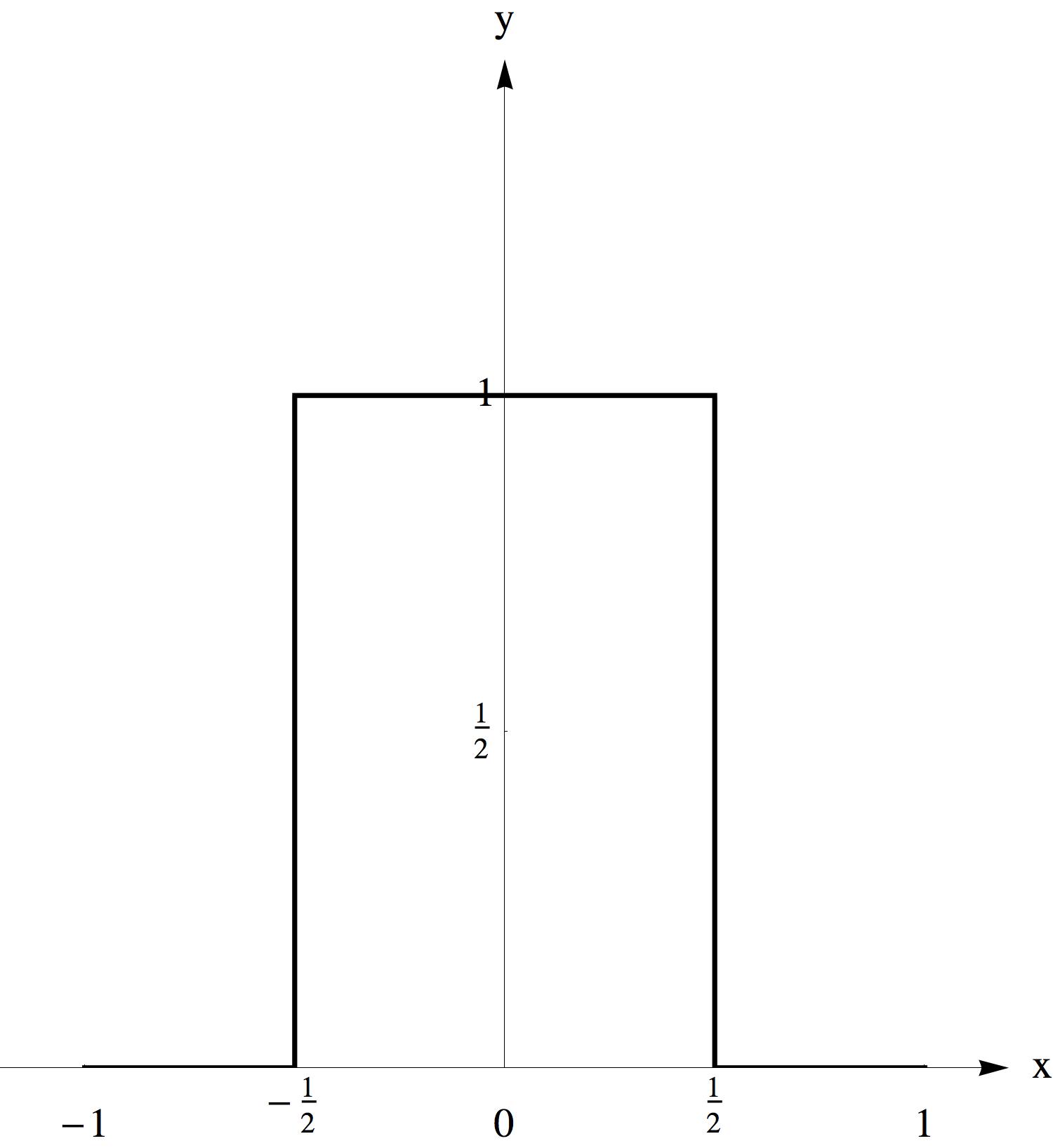}}
}
\hspace{.35cm}
\subfigure
{
\scalebox{0.30}{\includegraphics{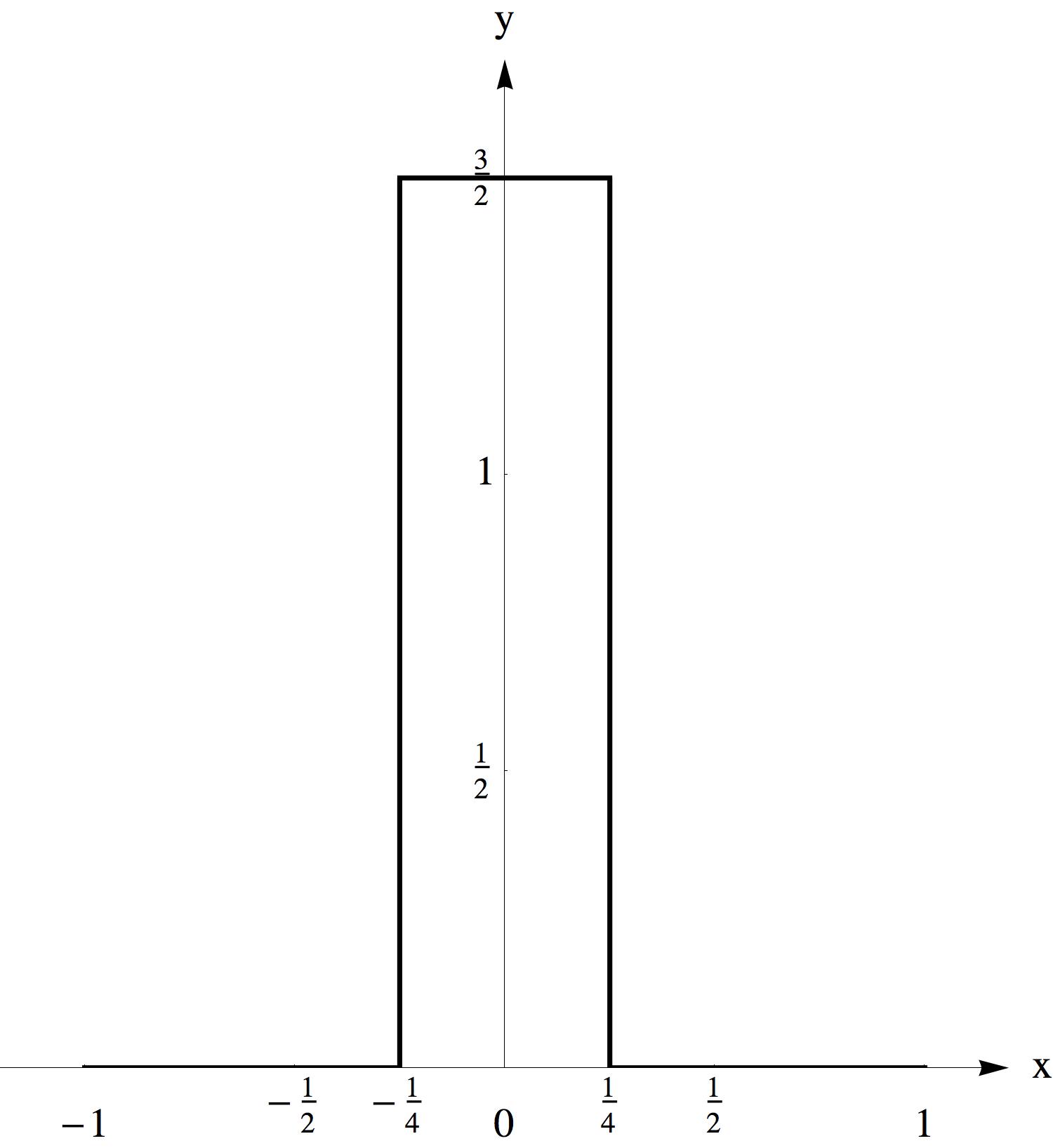}}
}
\hspace{.35cm}
\subfigure
{
\scalebox{0.30}{\includegraphics{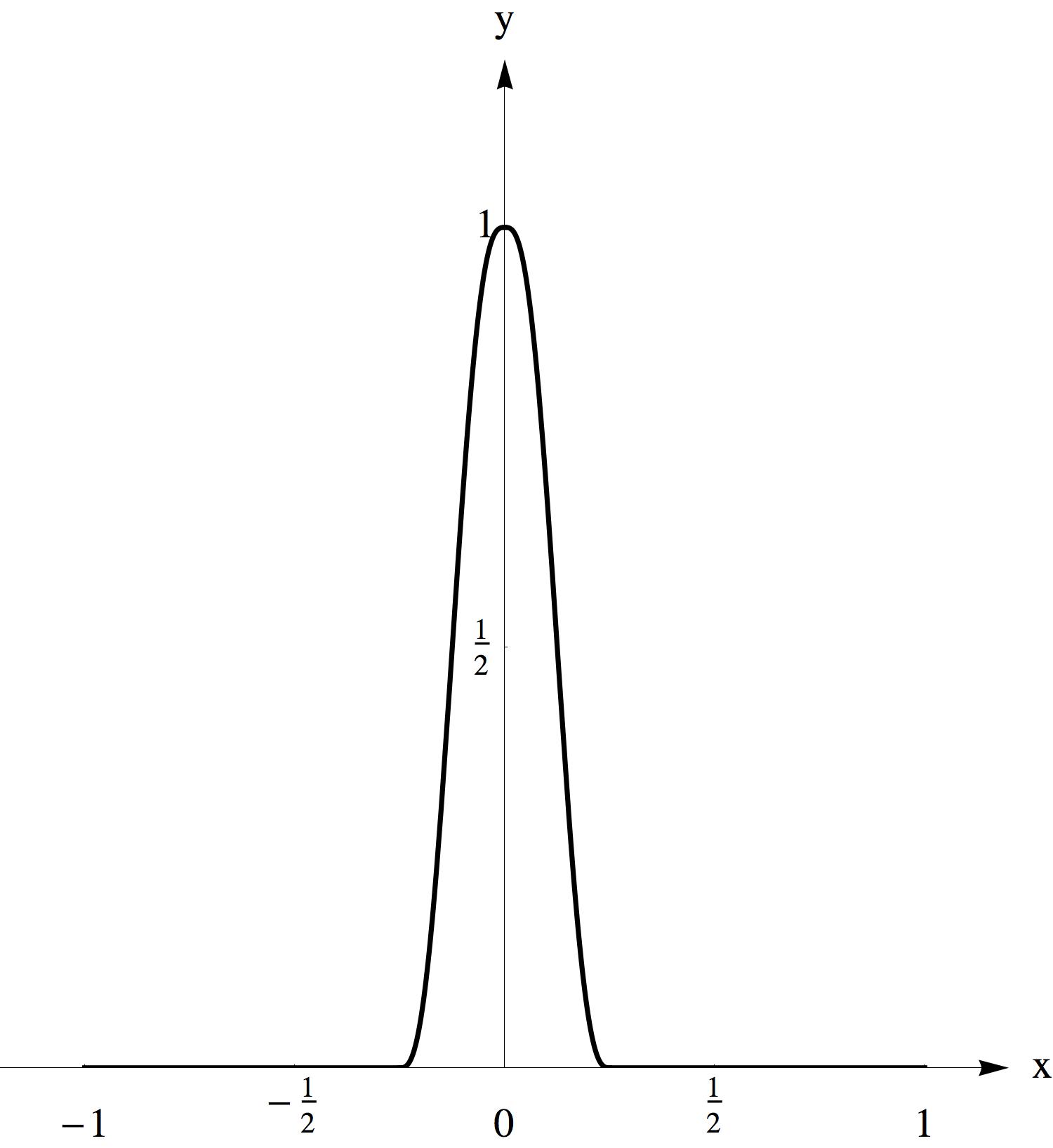}}
}
\caption{Micromodulus function $C(x)$ (left). Discontinuous (middle) and continuous (right)
initial displacement functions $u_{0,\textrm{disc}}(x)$ and $u_{0,\textrm{cont}}(x)$, respectively.}
\label{fig:micromodulusData}
\end{figure}

Recalling the governing equation \eqref{governingEqu}, we numerically
solve the following nonlocal equation
\begin{subequations}
\label{governing}
\begin{alignat}{2}
u_{tt}(x,t) + \varphi(A_{\BC})u(x,t) &= b(x,t),\quad &&(x,t) \in \Omega  \times J, \\
u(x,0) &= u_0(x), \quad &&x \in \Omega, \\
u_t(x,0) &= v_0(x) \quad &&x \in \Omega,
\end{alignat}
\end{subequations}
where $J = (0,T)$ is some finite time interval, $\Omega=(-1,1)$, $b$ is a given source term,
and $u_0$ and $v_0$ are given initial conditions.
The choice of the subscript
${\BC} \in \{\p, \aBC, \N, \D\}$ is determined by the BCs that are
to be satisfied at the boundary of the physical domain $(-1,1)$.
This, in turn, determines the function of the classical operator
$\varphi(A_{\tt{BC}})$
as described in Table \ref{table:nonlocalFunc} where we have defined
\begin{equation}
\label{sillingConst}
c := \frac{1}{\sqrt{2}} \int_{-1}^{1}C(y) \, dy.
\end{equation}
Furthermore, the abstract convolutions, for fixed $t \in J$, are given in terms of the eigenbasis as follows:\\
\begin{eqnarray*}
C *_{\p} u(x,t)  & = &
\sum_{k \in {\mathbb{Z}}} \braket{e_k^{\p}|C}_2 \braket{e_k^{\p}|u}_2 e_{k}^{\p}(x), \\
C *_{\aBC} u(x,t)  & = &
\sum_{k \in {\mathbb{Z}}} \braket{e_k^{\aBC}|C}_2 \braket{e_k^{\aBC}|u}_2 e_{k}^{\aBC}(x).
\end{eqnarray*}
For our numerical approximation of the solution of the nonlocal
problem \eqref{governing}, we need to write these infinite sums as
integral convolutions. This has been accomplished
for all types of BCs in Section \ref{sec:convoBC}.
We provide these integral convolutions below
\begin{eqnarray*}
C *_{\p} u(x,t)   &= &\frac{1}{\sqrt{2}} \int_{-1}^{1} {\hat{C}_{\p}}(x - y) \, u(y,t) \, dy, \\
C *_{\aBC} u(x,t) &= &\frac{1}{\sqrt{2}} \int_{-1}^{1} {\hat{C}_{\aBC}}(x - y) \, u(y,t) \, dy.
\end{eqnarray*}

\begin{table}[t]
\centering
\begin{tabular}{c|ccc}
\hline
&&& \\
$\BC$  &$\varphi(A_{\BC})u(x,t)$ &BCs enforced \\ [1ex]
\hline  \\ [-.5ex]
$\p$   &$(c-C *_{\p}) u(x,t)$
&$u(-1,t) =  u(1,t),\quad u_x(-1,t) =  u_x(1,t)$ \\ [2ex]
$\aBC$ &$(c-C *_{\aBC}) u(x,t)$
&$u(-1,t) = -u(1,t),\; u_x(-1,t) = -u_x(1,t)$ \\ [2ex]
$\N$ &$\sqrt{2} \{[(c-C*_{\p}P_{\textrm{even}})+(c-C*_{\aBC}P_{\textrm{odd}})]u(x,t)\}$
&$u_x(-1,t) = u_x(1,t) = 0$ \\ [2ex]
$\D$ &$\sqrt{2} \{[(c-C*_{\p}P_{\textrm{odd}})+(c-C *_{\aBC}P_{\textrm{even}})]u(x,t)\}$
&$u(-1,t) = u(1,t) = 0$ \\ [2ex]
\hline
\end{tabular}
\caption{
The choice of nonlocal operators based on the boundary conditions enforced.
}
\label{table:nonlocalFunc}
\end{table}

\subsection{Discretization in Space}
\begin{figure}[t]
\centering
\subfigure[Regulating function.]{
\scalebox{0.35}{\includegraphics{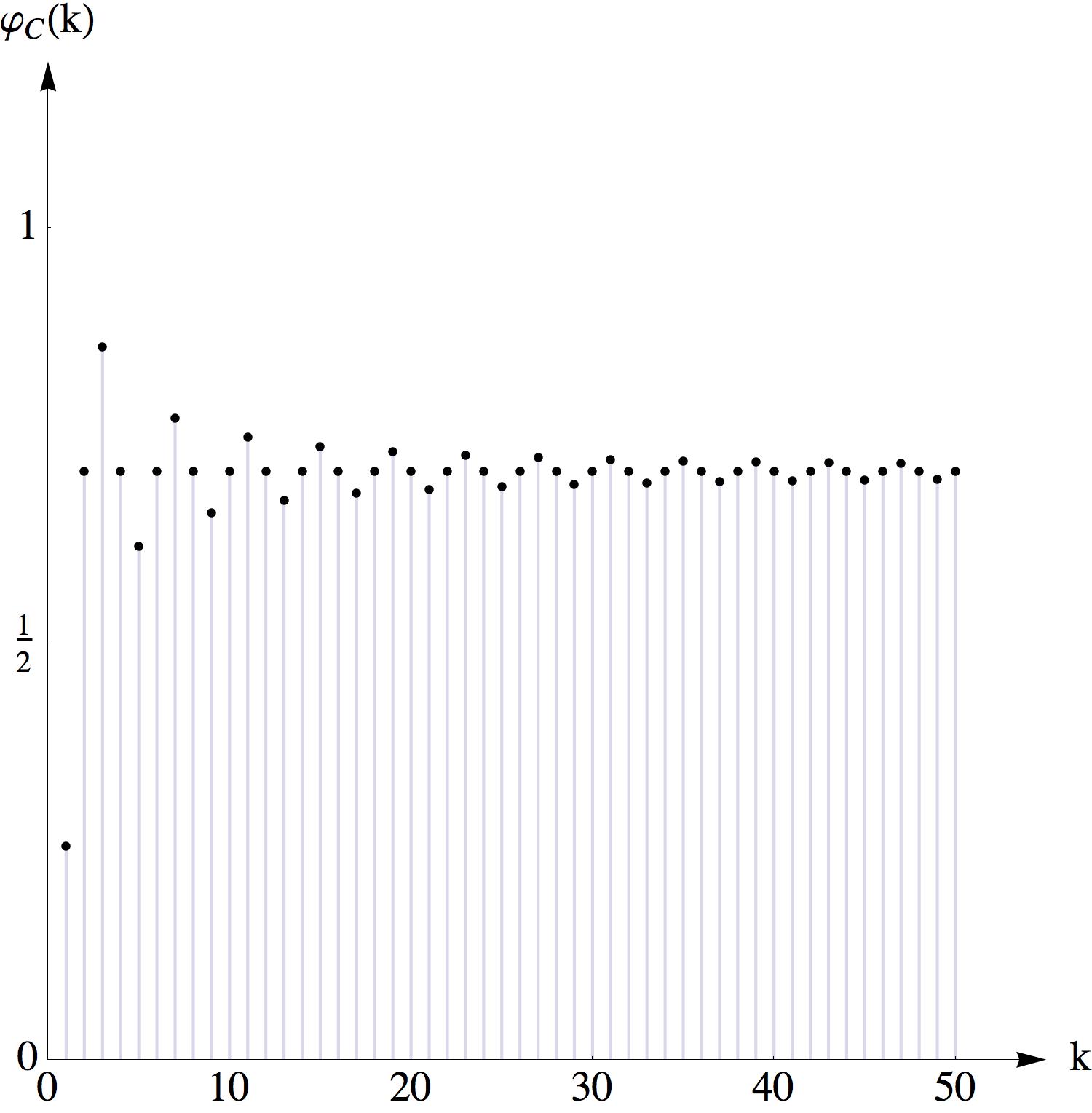}}
\label{periodicRegulatingFunc}
}
\hspace{.35cm}
\subfigure[Contour plot of $u$ from Fig.~\ref{periodicDataView}.]{
\scalebox{0.70}{\includegraphics{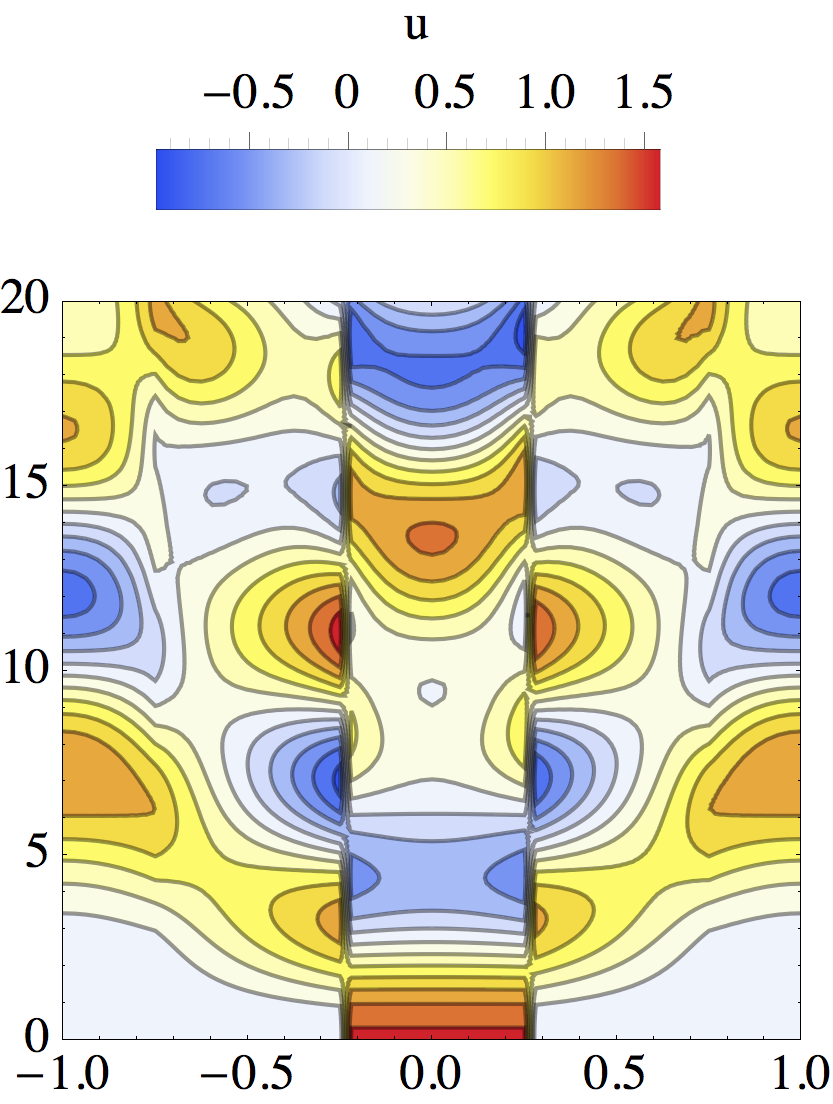}}
\label{periodicContourPlot}
}
\\
\subfigure[Solution $u$ to the nonlocal wave equation
with initial data $u(x,0) = u_{0,\textrm{disc}}(x)$
and $u_t(x,0)=0,~x \in (-1,1)$. Initial data view.]{
\scalebox{0.50}{\includegraphics{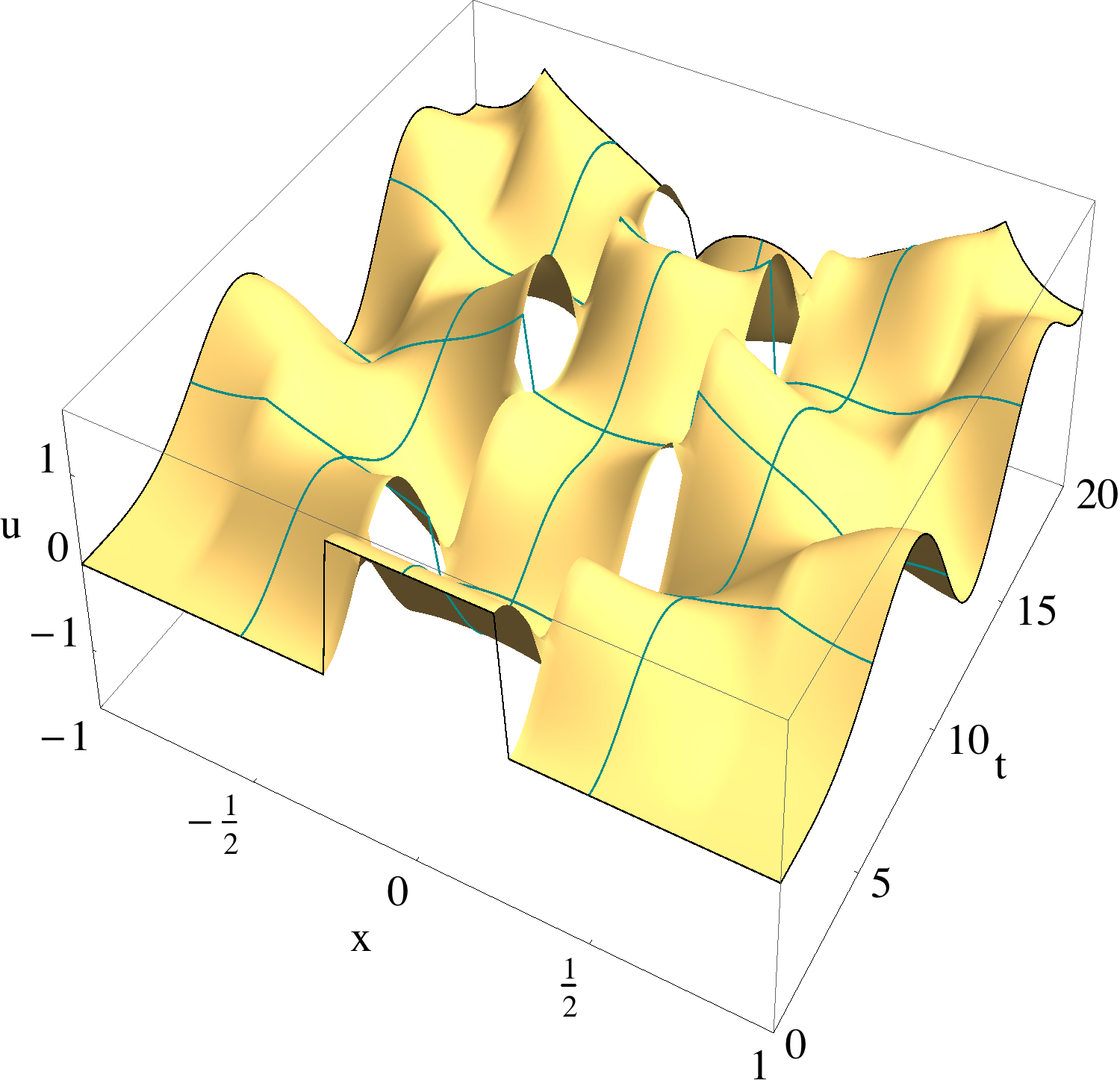}}
\label{periodicDataView}
}
\hspace{.35cm}
\subfigure[The same solution from Fig.~\ref{periodicDataView} from a
boundary point of view.]{
\scalebox{0.50}{\includegraphics{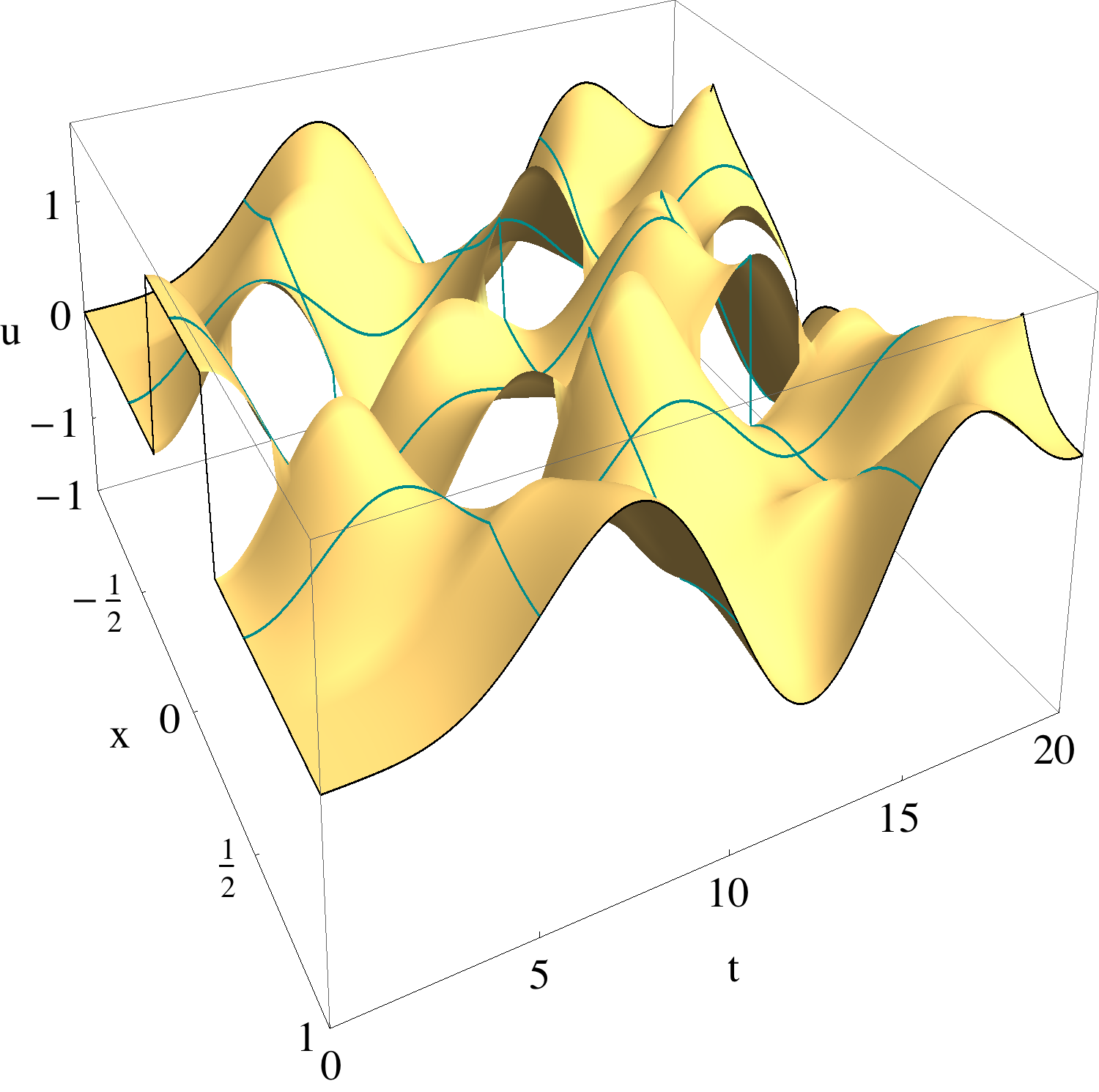}}
\label{periodicBoundaryView}
}
\caption{Solution to the nonlocal wave equation solution with
periodic boundary conditions and vanishing initial velocity.}
\label{fig:periodic}
\end{figure}

\begin{figure}[t]
\centering
\subfigure[Regulating function.]{
\scalebox{0.35}{\includegraphics{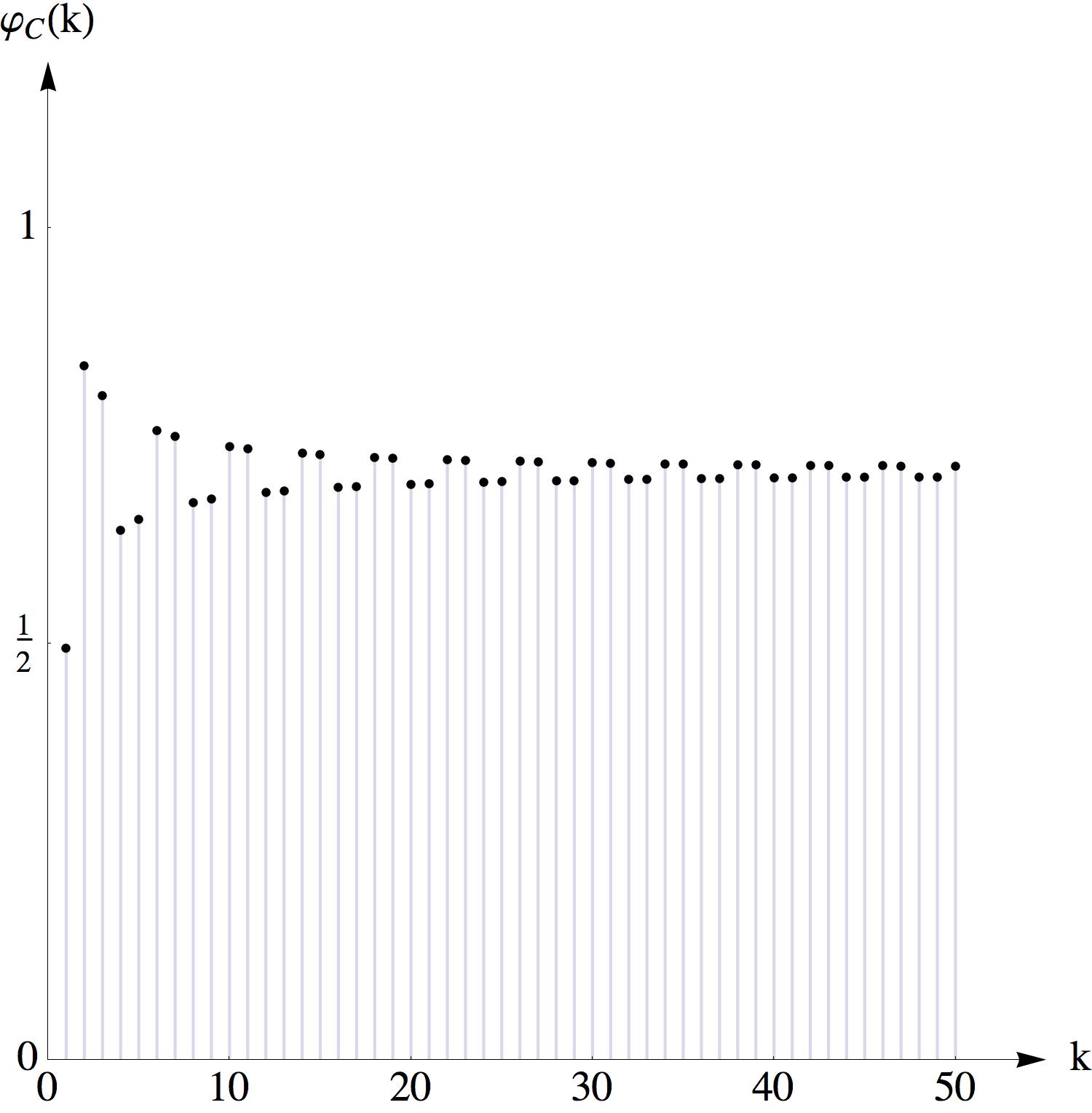}}
\label{antiperiodicRegulatingFunc}
}
\hspace{.35cm}
\subfigure[Contour plot of $u$ from Fig.~\ref{antiperiodicDataView}.]{
\scalebox{0.70}{\includegraphics{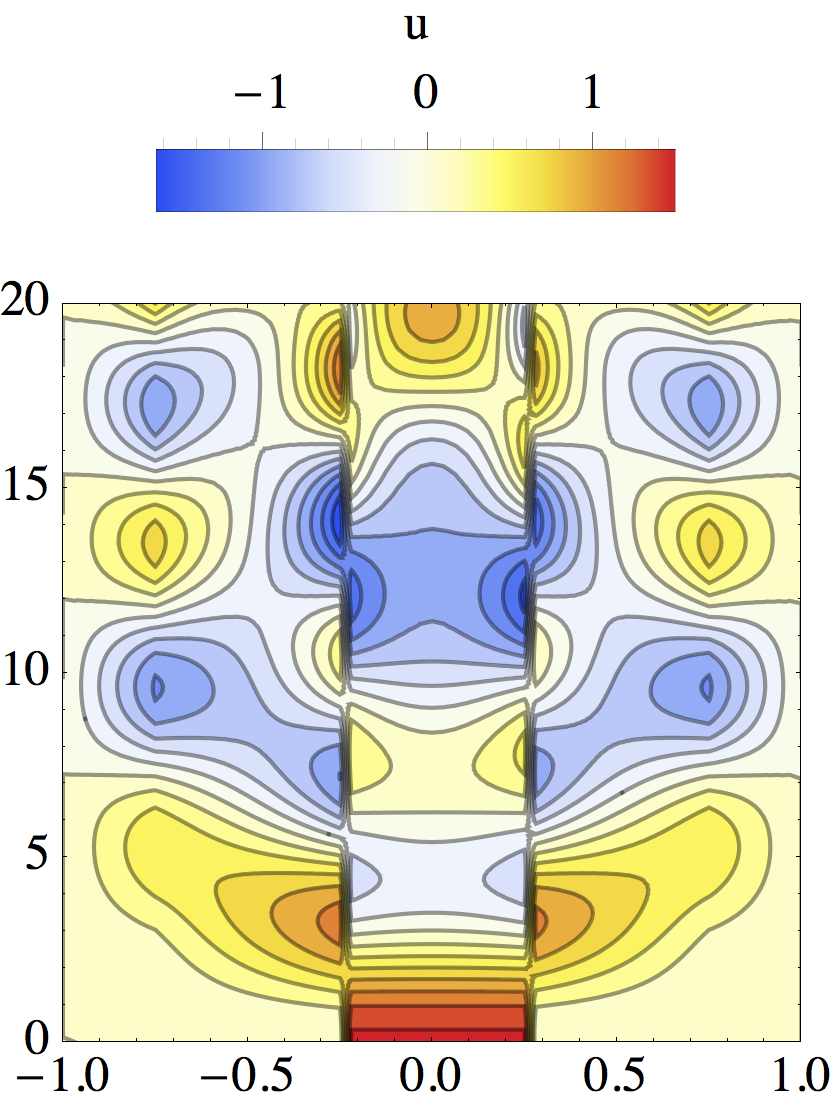}}
\label{antiperiodicContourPlot}
}
\\
\subfigure[Solution $u$ to the nonlocal wave equation
with initial data $u(x,0) = u_{0,\textrm{disc}}(x)$
and $u_t(x,0)=0,~x \in (-1,1)$. Initial data view.]{
\scalebox{0.50}{\includegraphics{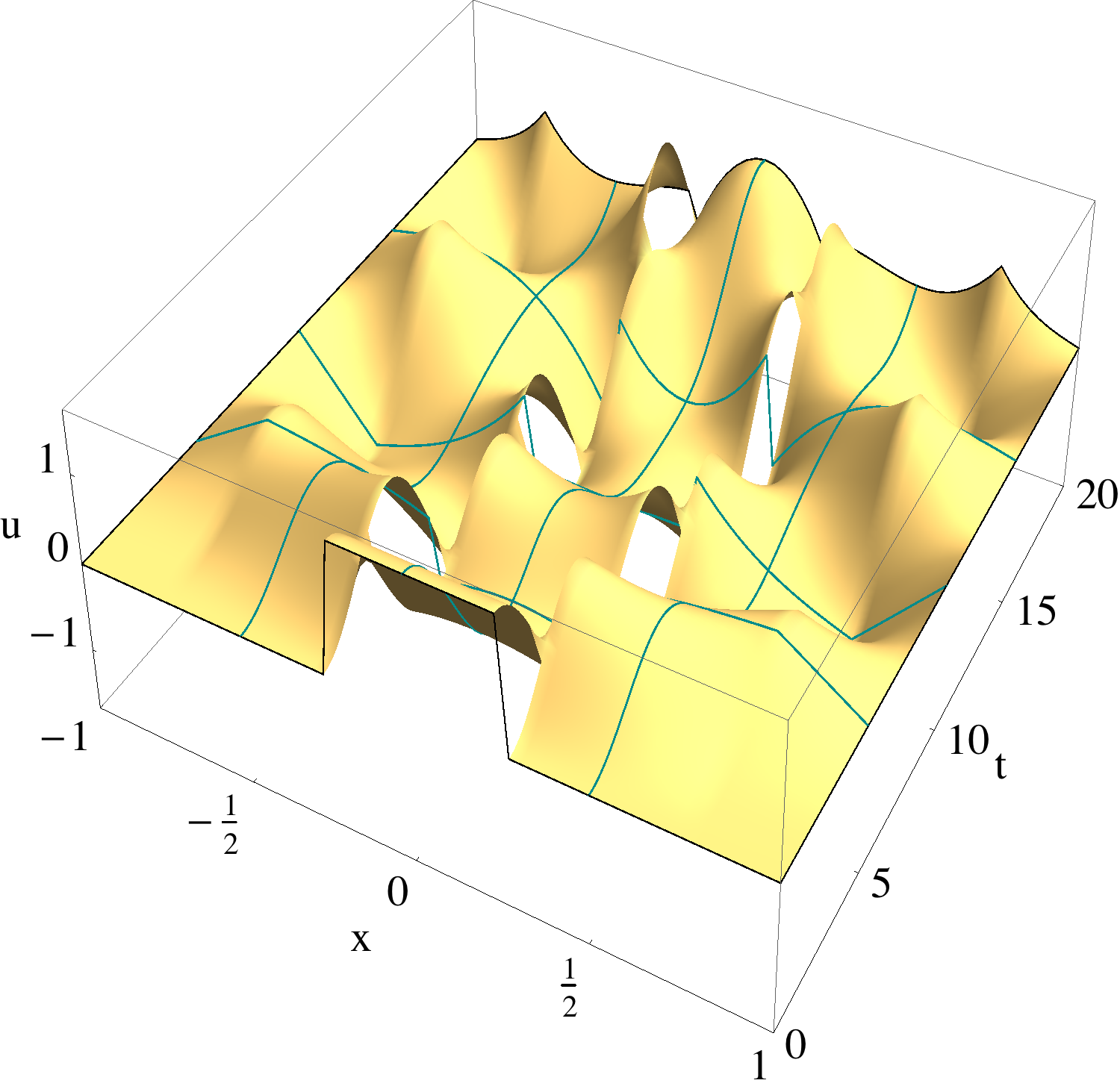}}
\label{antiperiodicDataView}
}
\hspace{.35cm}
\subfigure[The same solution from Fig.~\ref{antiperiodicDataView} from a
boundary point of view.]{
\scalebox{0.50}{\includegraphics{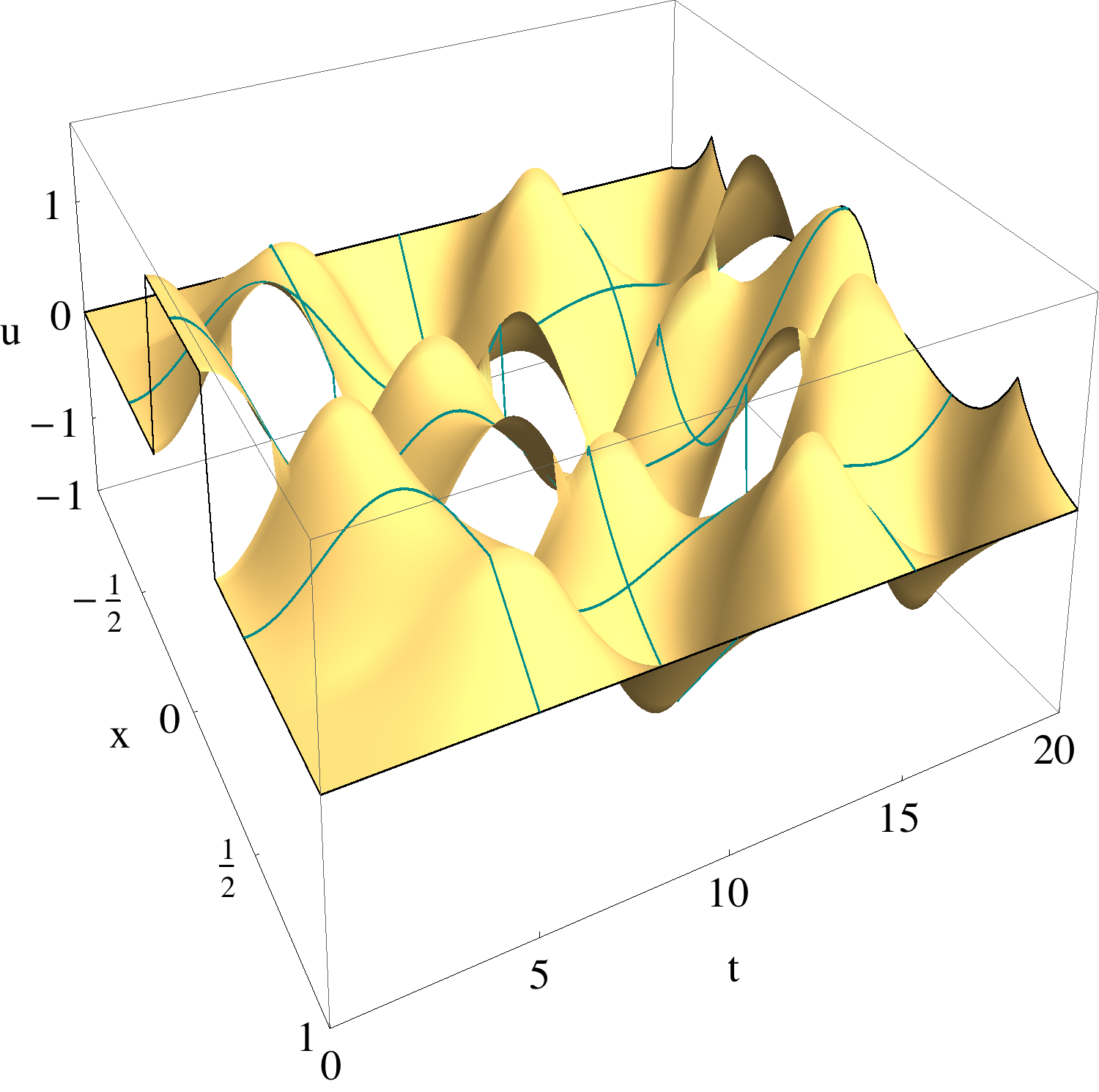}}
\label{antiperiodicBoundaryView}
}
\caption{Solution to the nonlocal wave equation solution with
antiperiodic boundary conditions and vanishing initial velocity.}
\label{fig:antiperiodic}
\end{figure}

\begin{figure}[t]
\centering
\subfigure[Regulating function.]{
\scalebox{0.35}{\includegraphics{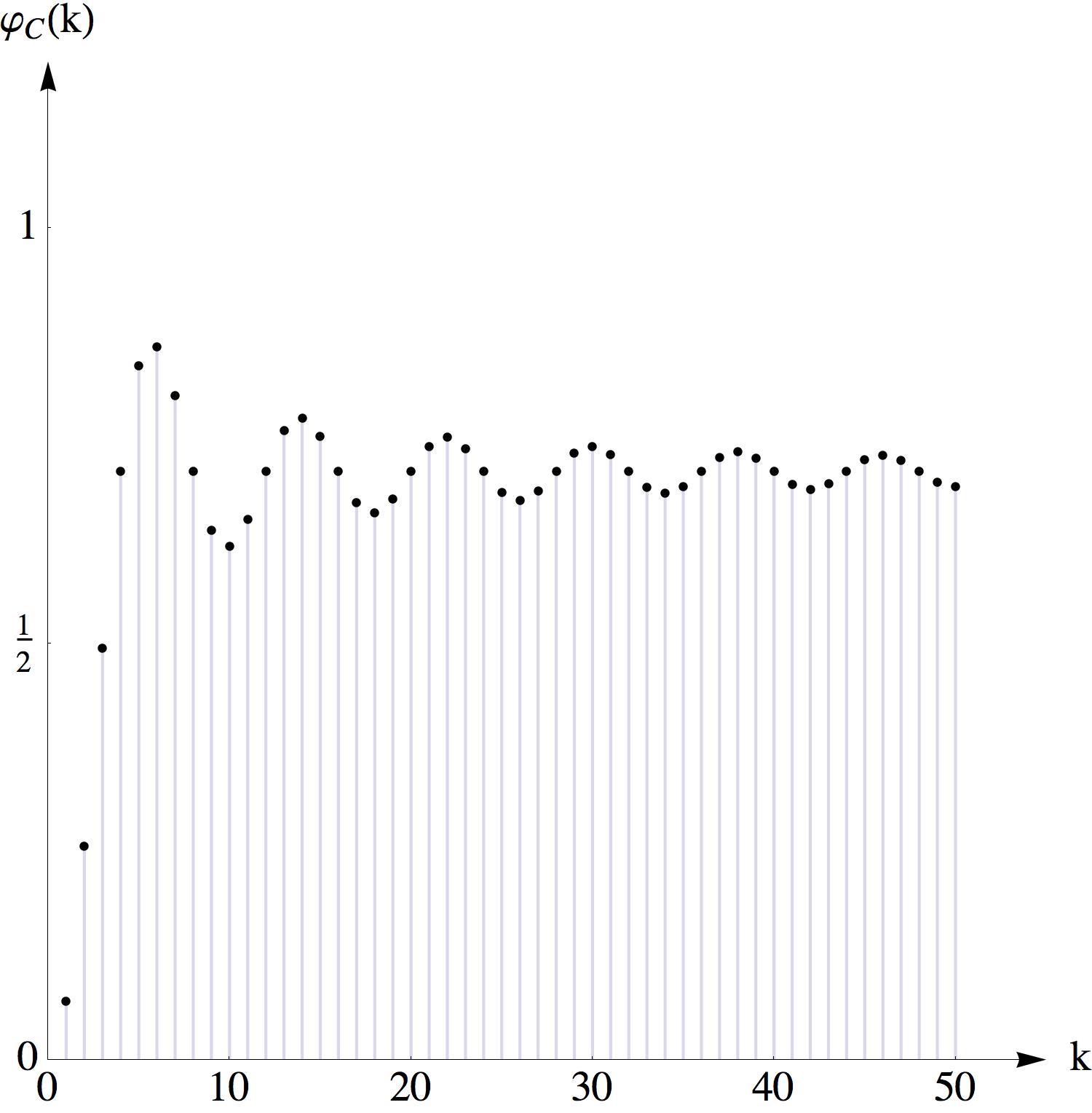}}
\label{neumannRegulatingFunc}
}
\hspace{.35cm}
\subfigure[Contour plot of $u$ from Fig.~\ref{neumannDataView}.]{
\scalebox{0.70}{\includegraphics{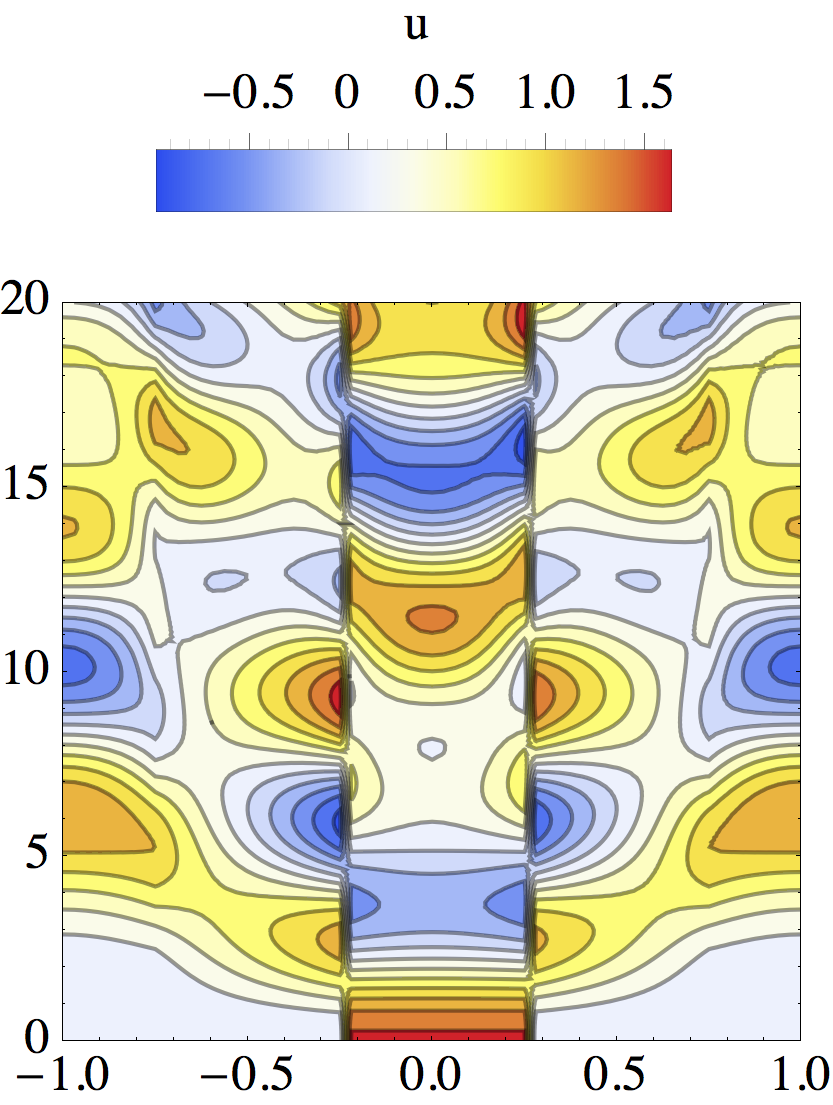}}
\label{neumannContourPlot}
}
\\
\subfigure[Solution $u$ to the nonlocal wave equation
with initial data $u(x,0) = u_{0,\textrm{disc}}(x)$
and $u_t(x,0)=0,~x \in (-1,1)$. Initial data view.]{
\scalebox{0.50}{\includegraphics{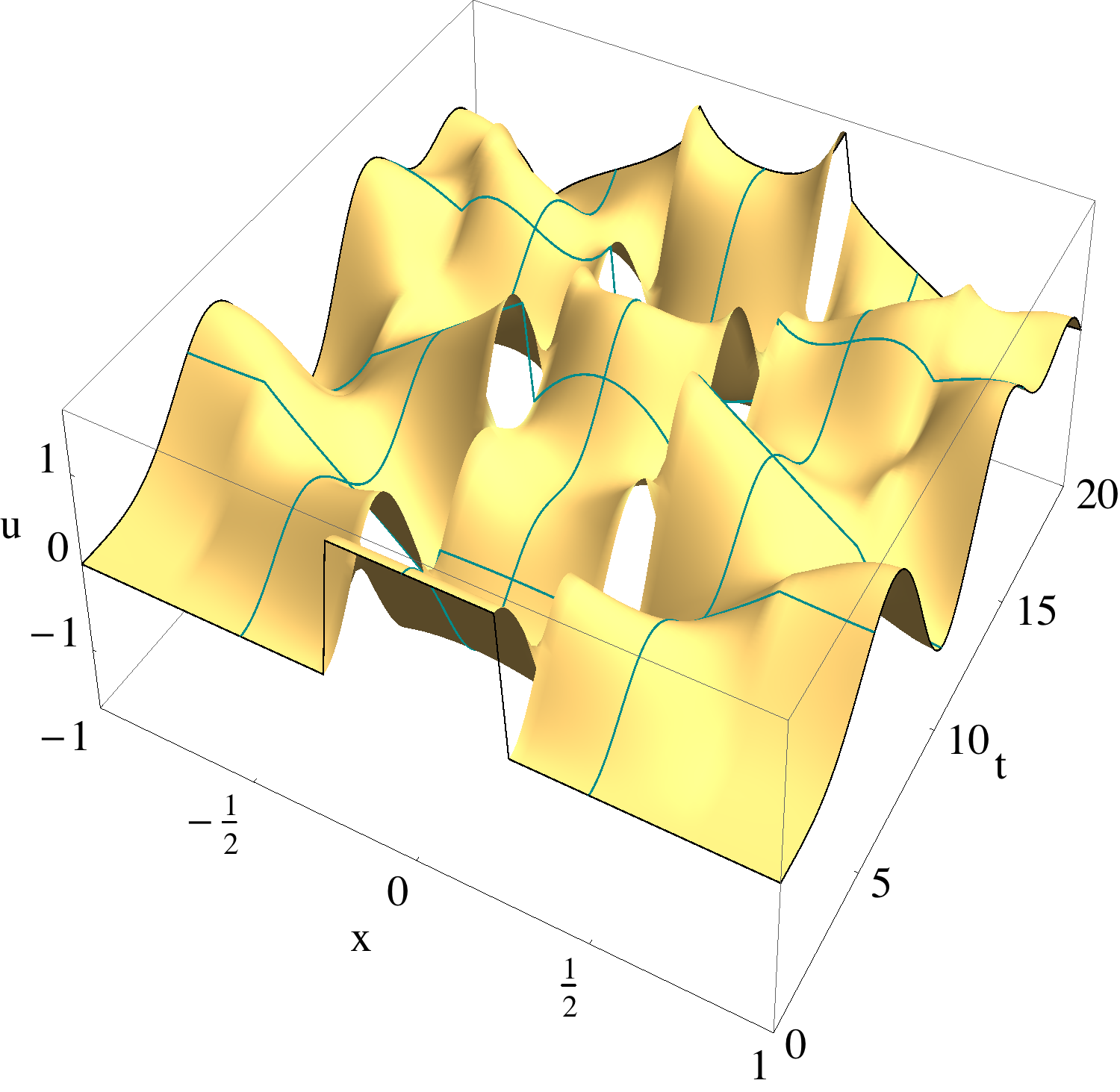}}
\label{neumannDataView}
}
\hspace{.35cm}
\subfigure[The same solution from Fig.~\ref{neumannDataView} from a
boundary point of view.]{
\scalebox{0.50}{\includegraphics{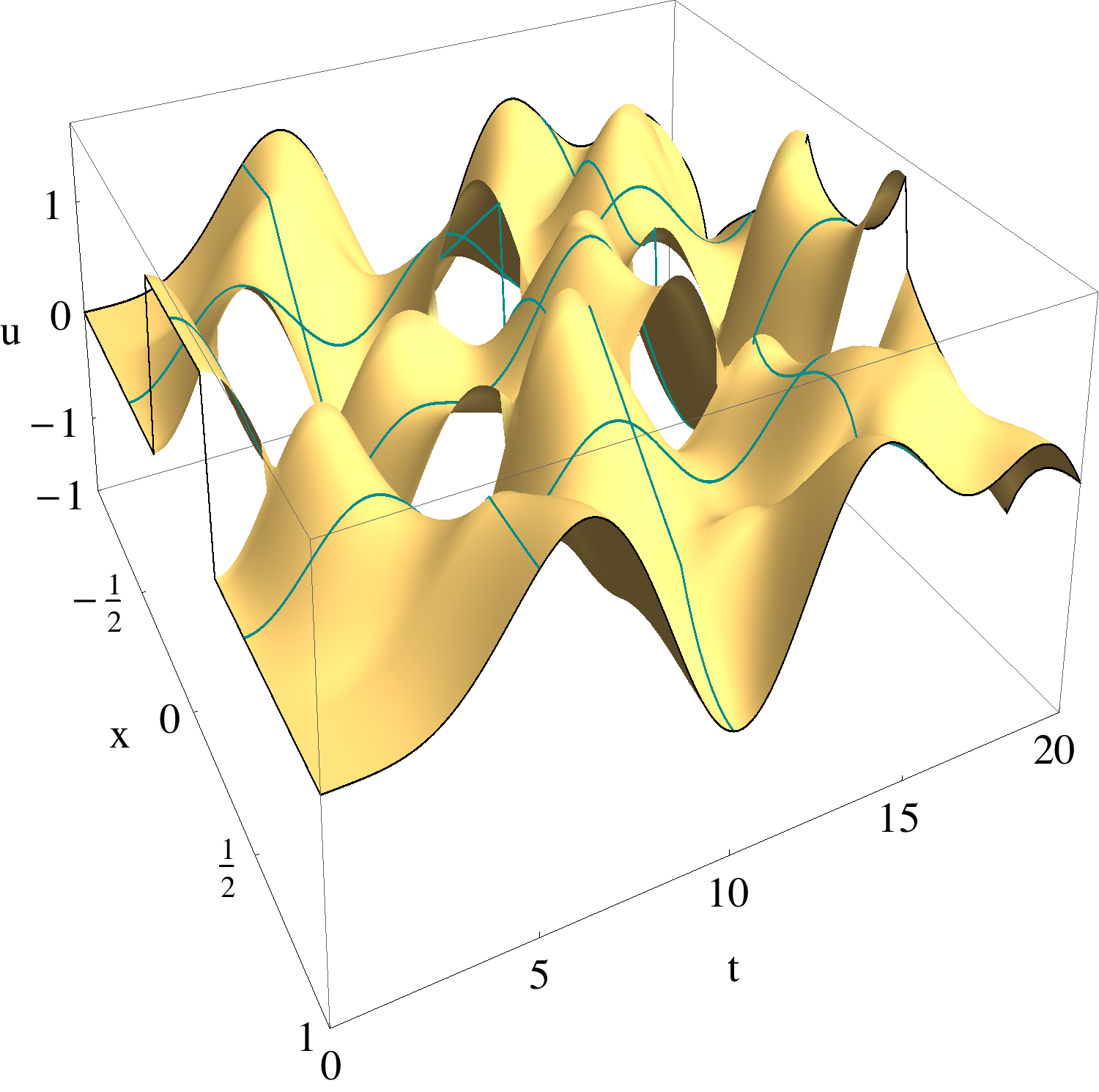}}
\label{neumannBoundaryView}
}
\caption{Solution to the nonlocal wave equation solution with
Neumann boundary conditions and vanishing initial velocity.}
\label{fig:neumann}
\end{figure}

\begin{figure}[t]
\centering
\subfigure[Regulating function.]{
\scalebox{0.35}{\includegraphics{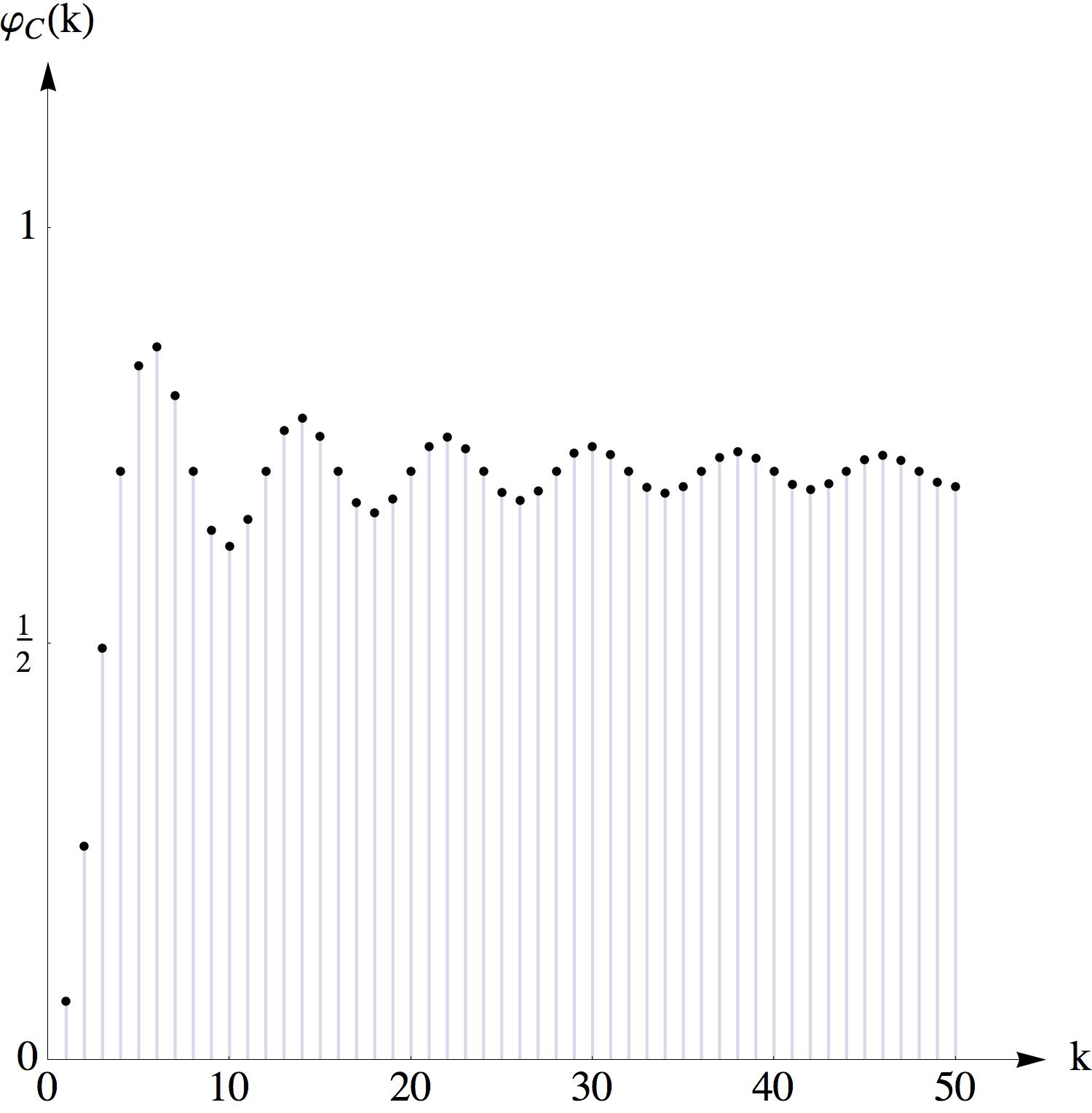}}
\label{dirichletRegulatingFunc}
}
\hspace{.35cm}
\subfigure[Contour plot of $u$ from Fig.~\ref{dirichletDataView}.]{
\scalebox{0.70}{\includegraphics{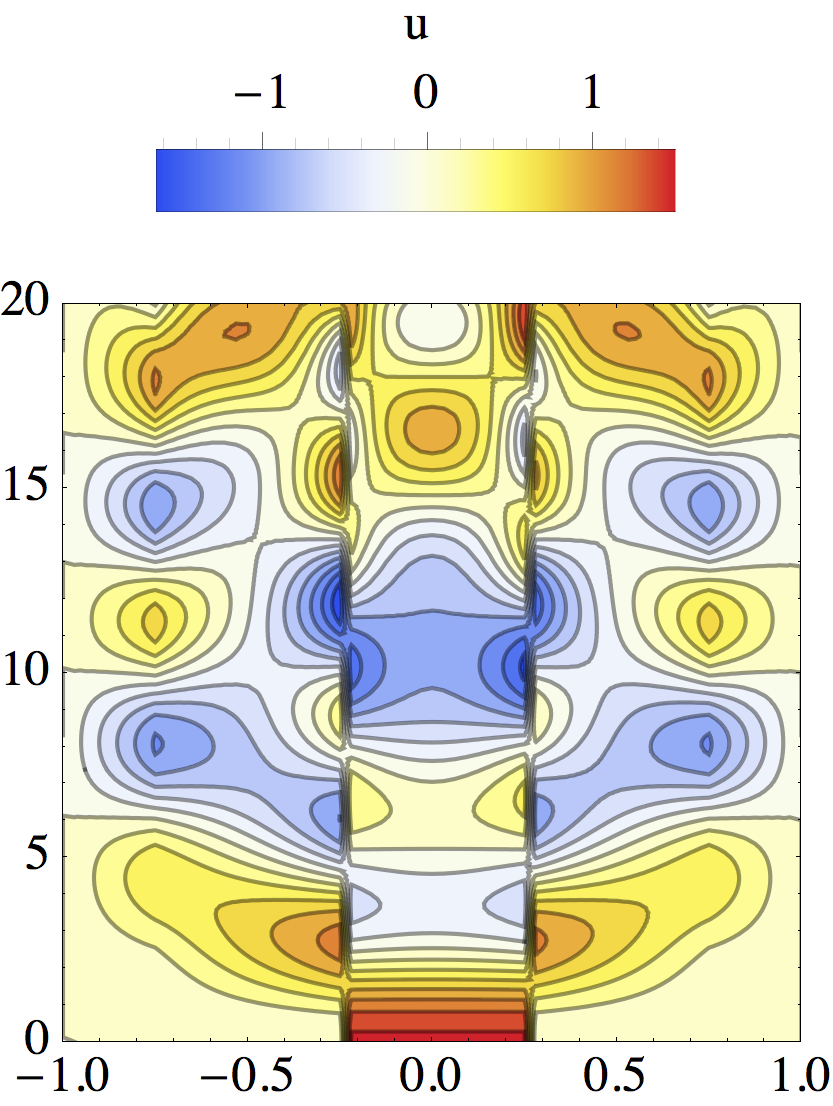}}
\label{dirichletContourPlot}
}
\\
\subfigure[Solution $u$ to the nonlocal wave equation
with initial data $u(x,0) = u_{0,\textrm{disc}}(x)$
and $u_t(x,0)=0,~x \in (-1,1)$. Initial data view.]{
\scalebox{0.50}{\includegraphics{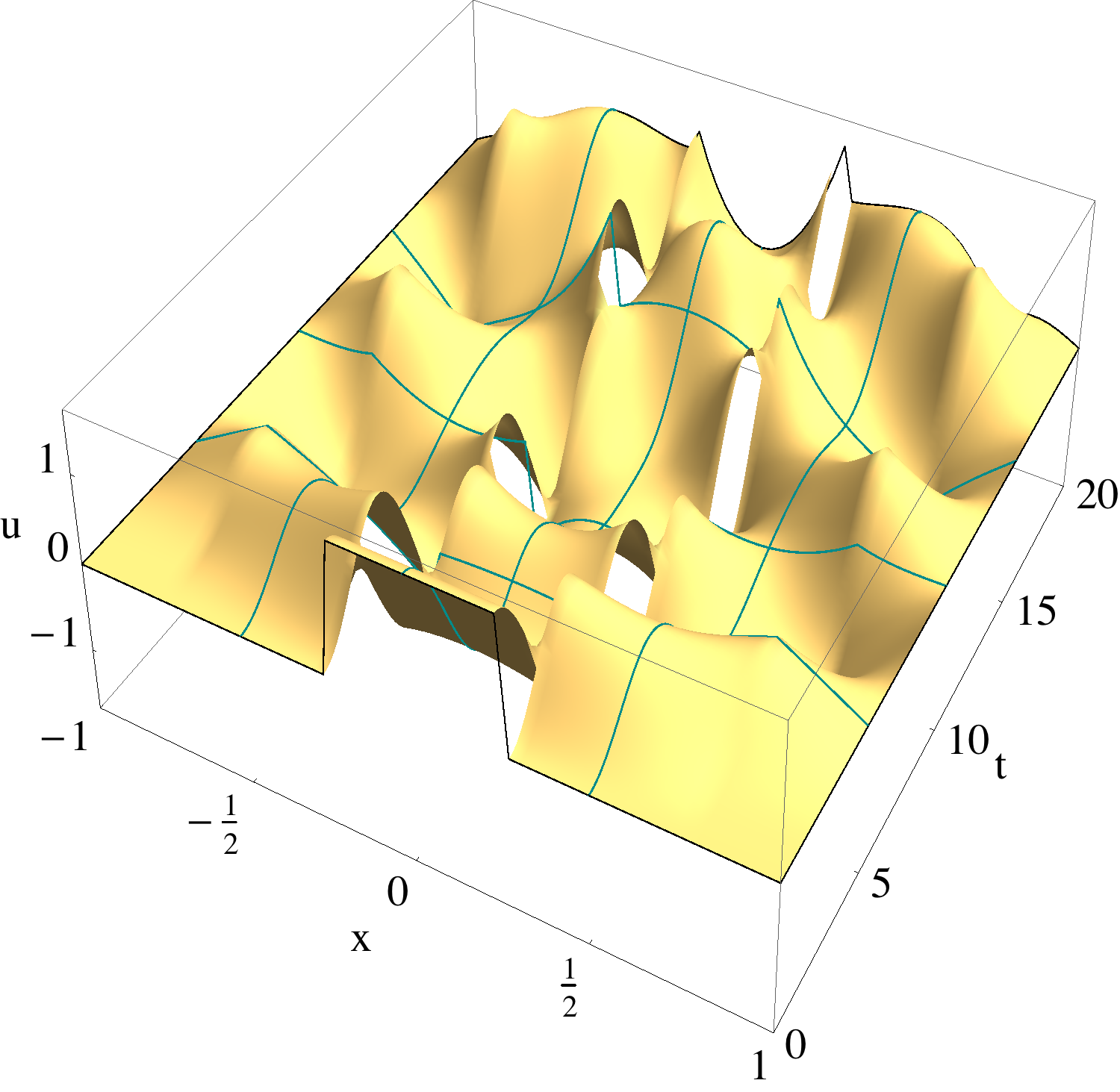}}
\label{dirichletDataView}
}
\hspace{.35cm}
\subfigure[The same solution from Fig.~\ref{dirichletDataView} from a
boundary point of view.]{
\scalebox{0.50}{\includegraphics{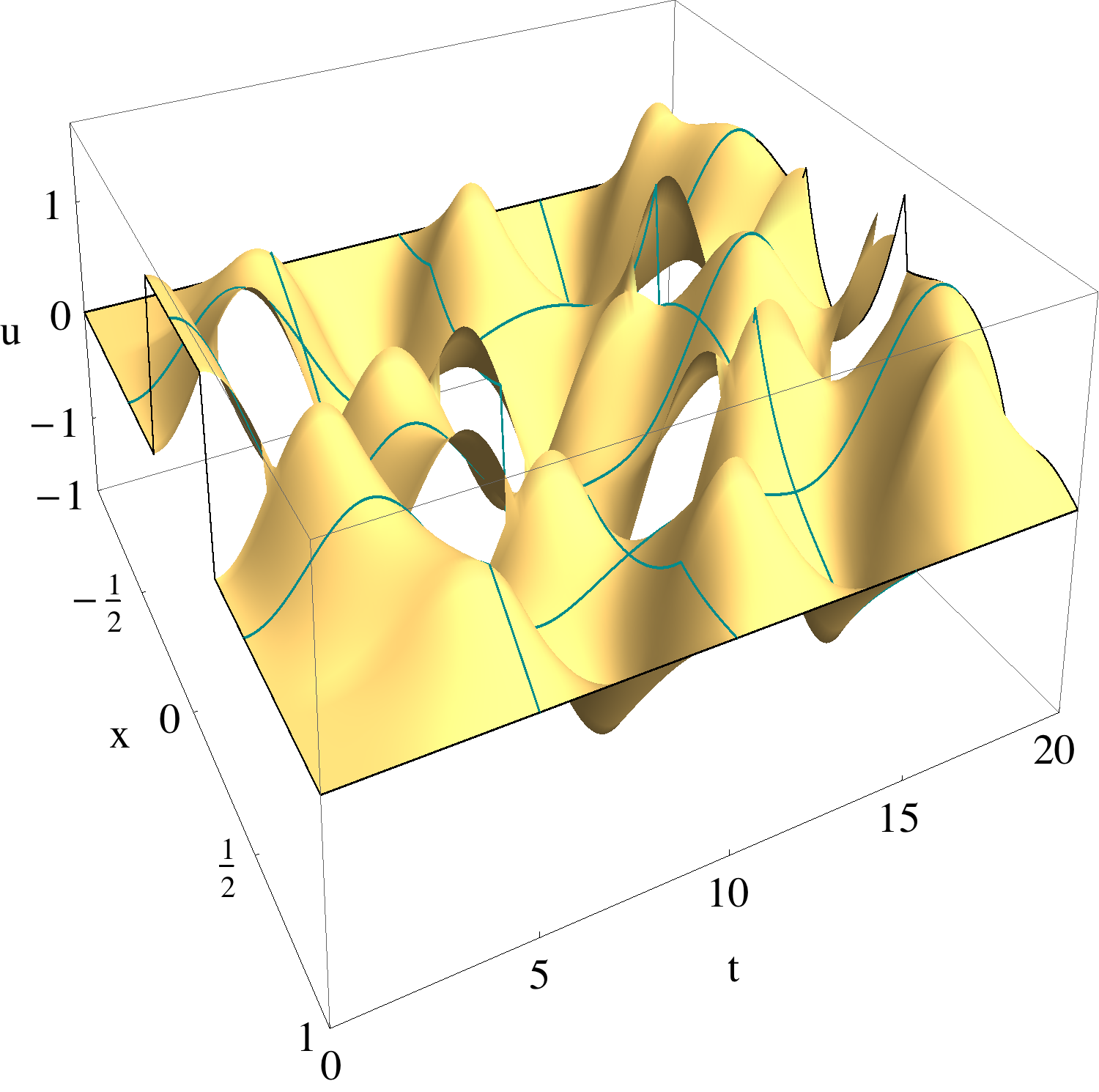}}
\label{dirichletBoundaryView}
}
\caption{Solution to the nonlocal wave equation solution with
Dirichlet boundary conditions and vanishing initial velocity.}
\label{fig:dirichlet}
\end{figure}

To approximate the solution of \eqref{governing} we begin with
discretizing the domain $\Omega$ into $N$ subintervals by defining
$\oh = \{K_1,K_2,\ldots,K_N\}$ where $K_i = (x_{i-1},x_i)$ with
$-1 = x_0 < x_1 < \cdots < x_{N-1} < x_N = 1$. We let
$h_i = |K_i| = x_i - x_{i-1}$ for $i=1,\ldots,N$.
Given a polynomial degree $\ell \ge 0$, we wish to approximate the solution
$u(x,t)$ of \eqref{governing} for a fixed $t$ in the finite element space
\begin{equation*}
V_h = \{ v \in L^2(\Omega) : v|_K \in P_{\ell}(K) \mbox{ for all } K \in \oh\}
\end{equation*}
where $P_{\ell}(K)$ is the space of polynomials of degree at most $\ell$ on $K$.

We define the $L^2$-inner product on an element $K\in \oh$ as
\[
(u,v)_K = \int_K u(x) v(x) \, dx \quad\mbox{ and set }\quad
(u,v)_{\oh} = \sum_{K\in\oh}(u,v)_K.
\]
For approximation of \eqref{governing}, we use a Galerkin projection
as used in \cite{aksoyluParks2011,aksoyluUnlu2014_nonlocal} and
consider the following (semidiscrete) approximation:
Find $u^h : J \times V_h \to \mathbb{R}$ such that
\begin{subequations}
\label{DG}
\begin{alignat}{1}
(u_{tt}^h, v)_{\oh} + (\varphi(A_{\BC})u^h,v)_{\oh} &= (b,v)_{\oh} \quad \mbox{ for all } v \in V_h, \\
u^h|_{t=0}   &= \Pi_h u_0, \\
u_t^h|_{t=0} &= \Pi_h v_0.
\end{alignat}
\end{subequations}
Here, $\Pi_h$ denotes the $L^2$-projection onto $V_h$.

\subsection{Discretization in Time}
The discretization of \eqref{governing} by the Galerkin method \eqref{DG}
leads to the second-order system of ordinary differential equations
\begin{equation}
\label{DG-ODE}
\bM \ddot{\bu}(t) + \bA\bu^h(t) = \bb^h(t), \qquad t \in J,
\end{equation}
with initial conditions
\begin{equation}
\label{DG-ODE-IC}
\bM\bu^h(0) = \bu_0^h, \qquad
\bM\dot{\bu}^h = \bv_0^h.
\end{equation}
Here, $\bM$ denotes the mass matrix and $\bA$ denotes the stiffness matrix.
To discretize \eqref{DG-ODE}-\eqref{DG-ODE-IC} in time, we employ the
Newmark time-stepping scheme as described in
\cite{groteSchneebeliSchotzau2006}, also see, e.g.,
\cite{raviartThomas1983_book}.

Let $k$ denote the time step and set $t_n = n\cdot k$ for $n = 1,2,\ldots$ \, .
The Newmark scheme we employ consists in finding approximations
$\{\bu_n^h\}_n$ to $\bu^h(t_n)$ such that
\begin{eqnarray}
\bM \ddot{\bu}_1^h & = & \left( \bM - \frac12 k^2\bA \right)\bu_0^h
+ k \bM \bv_0^h + \frac12 k^2 \bb_0^h, \label{Newmark-1} \\
\bM \ddot{\bu}_{n+1}^h & = & \left( 2\bM - k^2\bA \right)\bu_n^h
- \bM \bu_{n-1}^h + k^2 \bb_n, \label{Newmark-n}
\end{eqnarray}
for $n=1,2,\ldots,N_{t}-1$ where $N_{t}k = T$, and $\bb_n = \bb(t_n)$.
Although there is a more general version of the Newmark time-stepping scheme,
we made this particular choice due to the fact that it is second-order accurate
and is explicit in the sense that at each time step we only have to solve
a linear system with a coefficient matrix $\bM$ that is block diagonal.
Hence, $\bM$ can be inverted at a very low computational cost.
For other Newmark schemes
the coefficient matrix of the linear system would be
$\bM + k^2 \beta \bA$ for some $\beta > 0$ which needs
to be inverted at each time step.
For a detailed discussion of more general Newmark time integration schemes
we refer to \cite{groteSchneebeliSchotzau2006}.

\subsection{Implementation Details}
Let us describe a few details regarding the computation of
the stiffness matrix $\bA$. Let $K \in \oh$ and let
$\{\phi_j^K \;:\; j=1,\ldots,\ell+1\}$ be a basis for $P_{\ell}(K)$.
To fix ideas, let us consider the case $\BC = \p$ so that
\[
\varphi(A_{\BC})u(x,t) = (c-C*_{\p})u(x,t).
\]
The remaining cases ${\BC}=\aBC,\N,\D$ are similar.

First of all, we need to compute the constant $c$ in \eqref{sillingConst}.
In the cases where $C$ is an elementary function such as a (piecewise) polynomial,
the exact value of this constant can be computed by direct integration.
However, in the general case, we have to resort to numerical quadrature.
We simply compute
\[
c =  \frac{1}{\sqrt{2}} \sum_{K \in \oh} \int_K C(x)dx
\]
where the integral on each element $K \in \oh$ is approximated by a quadrature rule.
In this case, if $C$ happens to have discontinuities or kinks in $\Omega$,
in order to obtain an accurate approximation to $c$, we have to ensure that the
nodes of the discrete domain $\oh$ are aligned with these discontinuities.

The matrix $\bA$ is of size $N(\ell+1) \times N(\ell+1)$ and has a block structure.
Each block-row of size $(\ell+1) \times N(\ell+1)$ corresponding to an element $K \in \oh$
is determined by the equations
\[
(\varphi(A_{\p})u^h,\phi_i^K)_{K} = (b,\phi_i^K)_{K},
\qquad\mbox{for } i=1,2,\ldots,\ell+1.
\]
Inserting the definition of $\varphi(A_{\p})$, we get
\[
\begin{aligned}
(\varphi(A_{\p})u^h,\phi_i^K)_{K}
&= ((c-C*_{\p})u^h,\phi_i^K)_K \\
&= c(u^h,\phi_i^K)_K - (C*_{\p} u^h,\phi_i^K)_K .
\end{aligned}
\]
The computation of the first term is standard, but we would like elaborate
on a few details regarding the computation of the second term.
At any fixed time $t \in J$ and for a fixed element $T \in \oh$,
we have the restriction, $u^{h,T}$, of $u^h$ on $T$ has the expansion
\[
u_h^{h,T}(x,t) = \sum_{j=1}^{\ell+1} u_j^T(t)\phi_j^T(x).
\]
Then, since
\[
\begin{aligned}
C*_{\p} u^h(x,t) &= \int_{-1}^1 \hat{C}_{\p}(x-y)\,u^h(y,t)\,dy \\
&= \sum_{T \in \oh} \sum_{j=1}^{\ell+1} u_j^T(t)\int_T \hat{C}_{\p}(x-y) \phi_j^T(y)\,dy,
\end{aligned}
\]
we have
\begin{equation}
\label{A_TK}
(C*_{\p} u^h,\phi_i^K)_K
= \sum_{T \in \oh} \sum_{j=1}^{\ell+1} u_j^T(t) \int_K R_j^T(x) \phi_j^K(x) \, dx
\end{equation}
where
\[
R_j^T(x) := \int_T \hat{C}_{\p}(x-y) \phi_j^T(y)\,dy.
\]
Thus, we need to compute pointwise values of $R_j^T$ which will be achieved
through numerical quadrature.
Note that the micromodulus function $C$ may have points of discontinuities
or kinks (or higher order derivatives of $C$ may not be continuous) in $\Omega$.
Hence, when computing $R_j^T(x)$, we need to take these points into account, for example,
when using Gaussian quadrature which requires the smoothness of the integrand
for optimal order accuracy. Furthermore, even if $C$ is arbitrarily smooth in $\Omega$,
its extension $\hat{C}_{\p}$ may not be smooth in $[-2,2]$.
Since the integrand involves $\hat{C}_{\p}(x-y)$ which is a translation of
$\hat{C}_{\p}(-y) = \hat{C}_{\p}(y)$ by $x$ units to the left, we always have to account for
possible singularities of $\hat{C}_{\p}(y)$ at the end points, $\{-1,1\}$, of the domain
$\Omega$.
Suppose $y_s \in T$ is such that $\hat{C}_{\p}(x-y_s)$ has a singularity in $K$.
Then the integral defining $R_j^T(x)$ has to be computed
by writing $T = T_1 \cup T_2$ where $T=(x_L,x_R)$,
$T_1 = (x_L,y_s)$ and $T_2 = (y_s,x_R)$, and applying
numerical quadrature on both subintervals.
A similar treatment is needed when computing the integral
$\int_K R_j^T(x) \phi_j^K(x) \, dx$.

Due to the nonlocal nature of the problem, the stiffness
matrix $\bA$ is not necessarily sparse. This can be seen from
\eqref{A_TK} by observing that $R_j^T$ does not necessarily vanish on
the element $K$ for $T \ne K$. The sparsity structure of $\bA$ is
determined by the support of the micromodulus function $C$. More
explicitly, the wider the support of $C$, the less sparse $\bA$ is.
Symmetry and positive definiteness of the stiffness matrix are
the consequences of the self-adjointness and positivity of the
governing operator, respectively; see Theorem \ref{thm:positivity}.
For the case of periodic and Neumann BCs, the stiffness matrix becomes
positive semidefinite and these systems can be solved by using
numerical methods described in
\cite{bochevLehoucq2005,gockenbach2006_book}.  Finally, we would like
to point out that the assembly of the stiffness matrix as well as the
mass matrix is independent of the time step and is performed only once.

\subsection{Approximations to Explicitly Known Exact Solutions}

Note that, since the operator
$\varphi(A_{\BC})$ is different for each BC, 

In order to ascertain the convergence performance of the scheme
described above, we display some numerical results corresponding to
explicitly known exact solutions.  We solve one example corresponding
to each BC type.  We take the exact solution
corresponding to each BC as given in Table \ref{table:exactSolutions}
and compute the corresponding right-hand side function $b(x,t)$.

\begin{figure}[h]
\centering
\subfigure[Contour plot of $u$ from Fig.~\ref{neumannDataViewCont}.]{
\scalebox{0.70}{\includegraphics{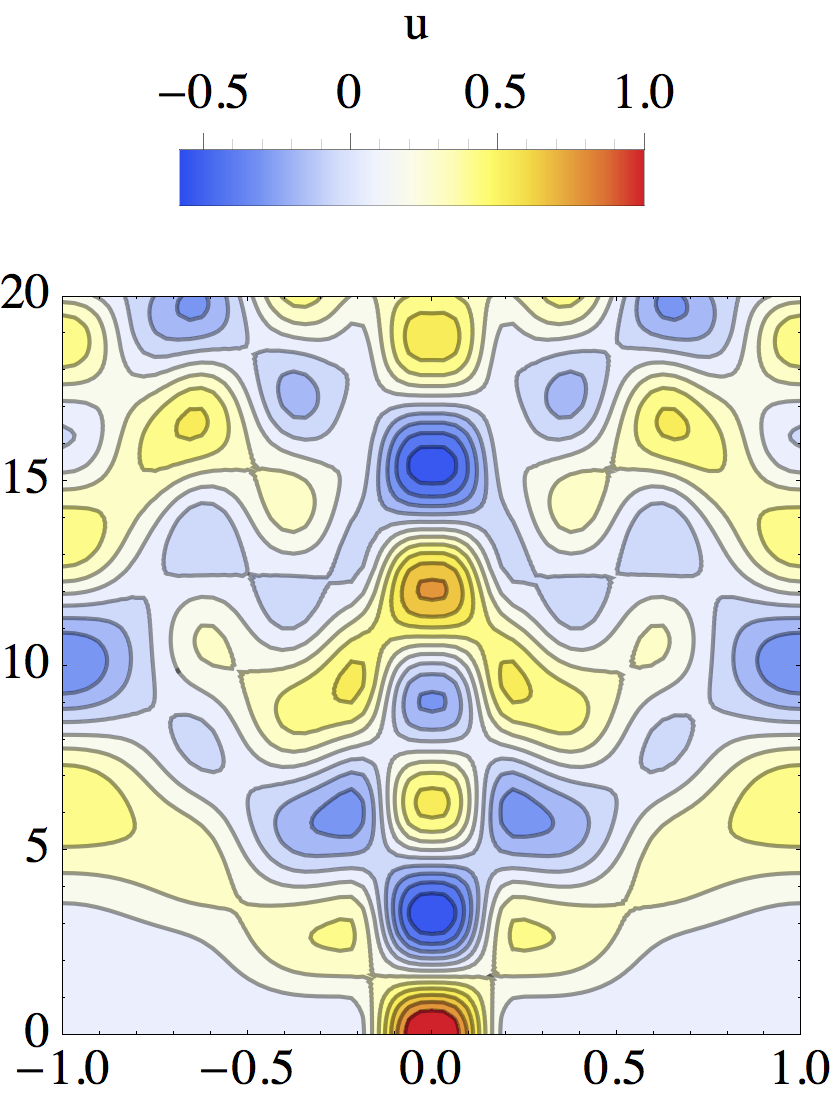}}
\label{neumannContourPlotCont}
}
\hspace{.35cm}
\subfigure[Contour plot of $u$ from Fig.~\ref{dirichletDataViewCont}.]{
\scalebox{0.70}{\includegraphics{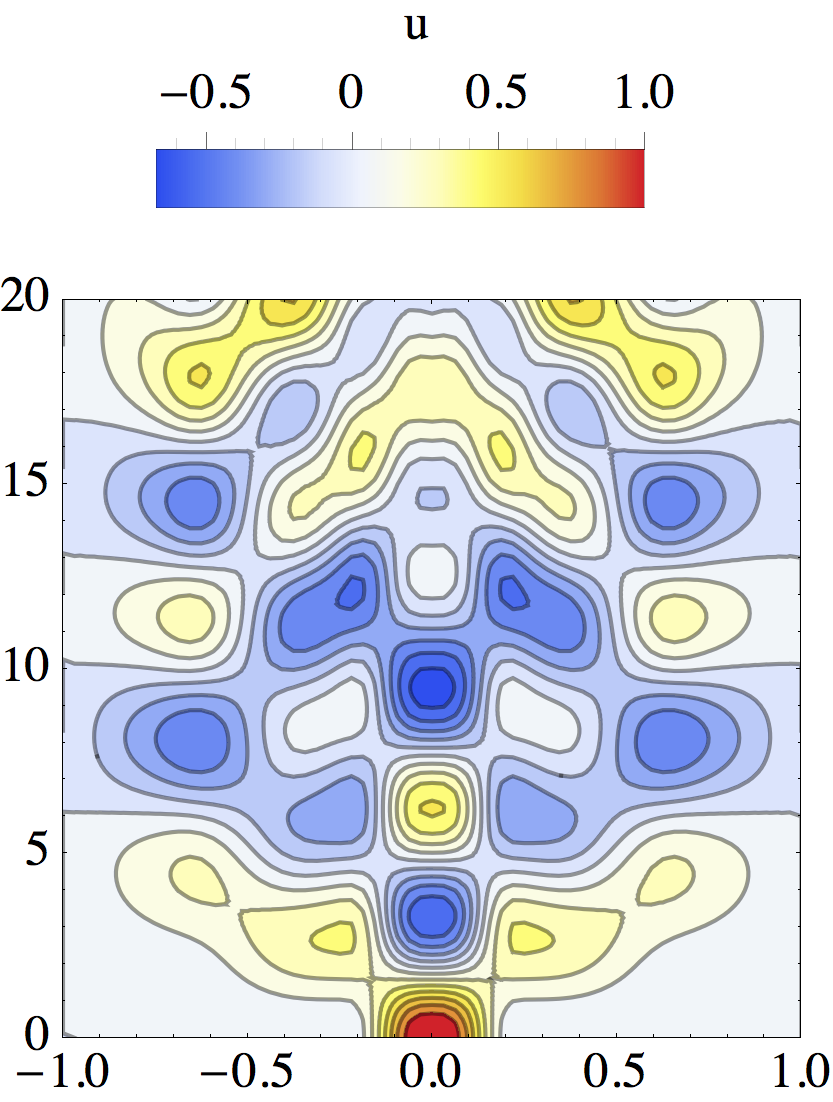}}
\label{dirichletContourPlotCont}
}
\\
\subfigure[Solution $u$ to the nonlocal wave equation
with Neumann boundary conditions and initial data
$u(x,0) = 0$
and $u_t(x,0)=u_{0,\textrm{cont}},~x \in (-1,1)$. Initial data view.]{
\scalebox{0.50}{\includegraphics{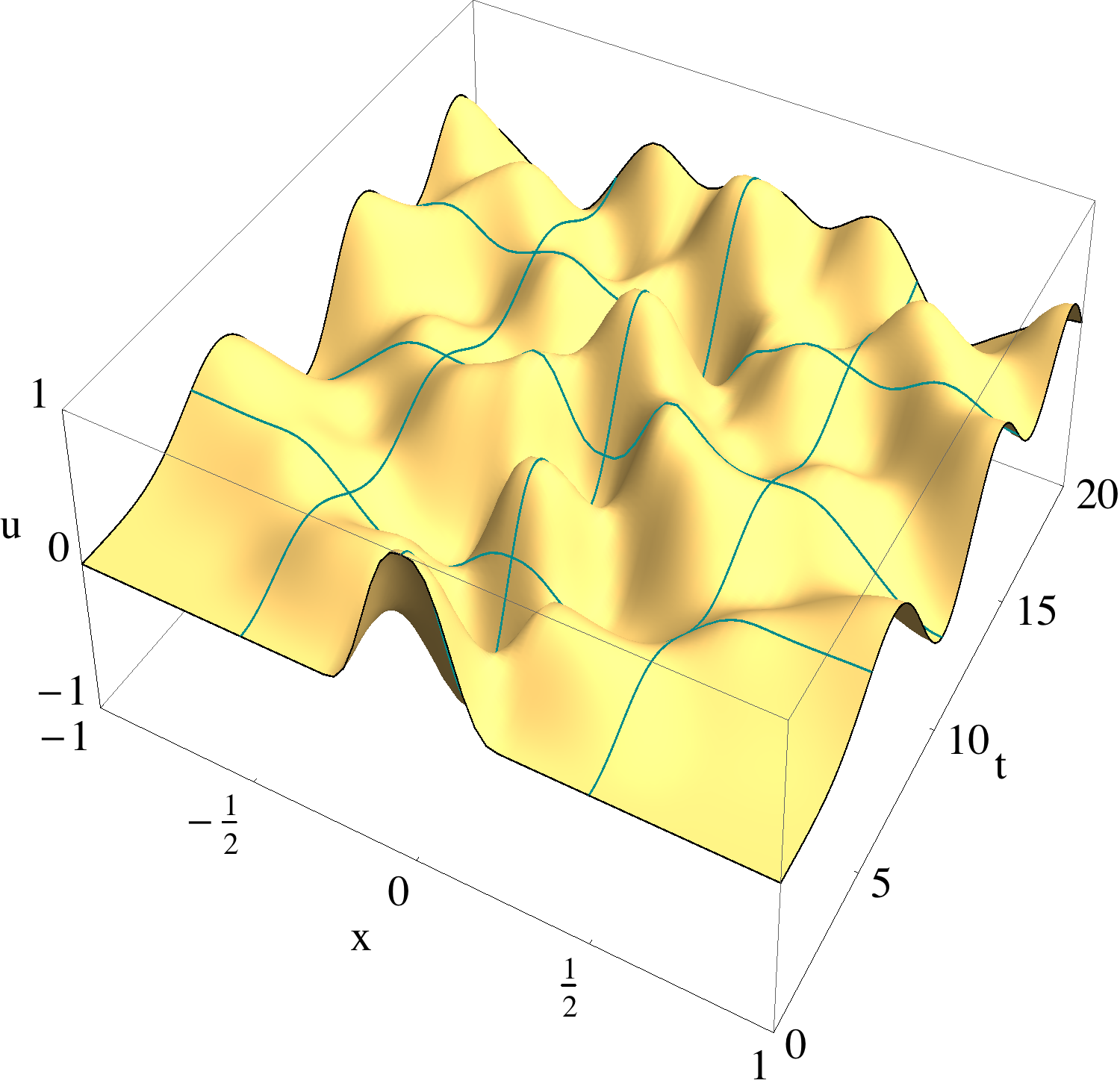}}
\label{neumannDataViewCont}
}
\hspace{.35cm}
\subfigure[Solution $u$ to the nonlocal wave equation
with Dirichlet boundary conditions and initial data
$u(x,0) = u_{0,\textrm{cont}}(x)$
and $u_t(x,0)=0,~x \in (-1,1)$. Initial data view.]{
\scalebox{0.50}{\includegraphics{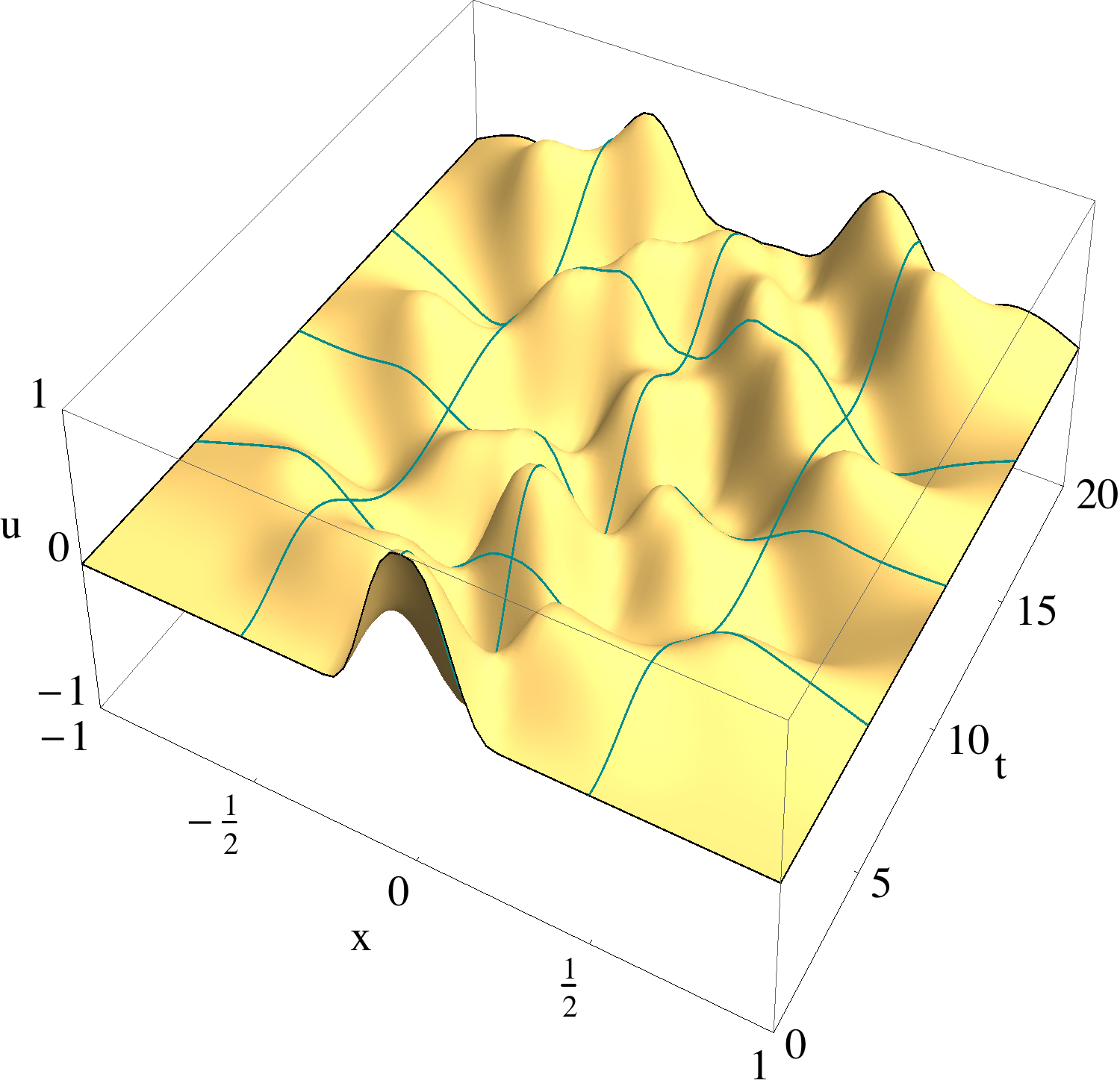}}
\label{dirichletDataViewCont}
}\\
\subfigure[The same solution from Fig. ~\ref{neumannDataViewCont} from a
boundary point of view.]{
\scalebox{0.50}{\includegraphics{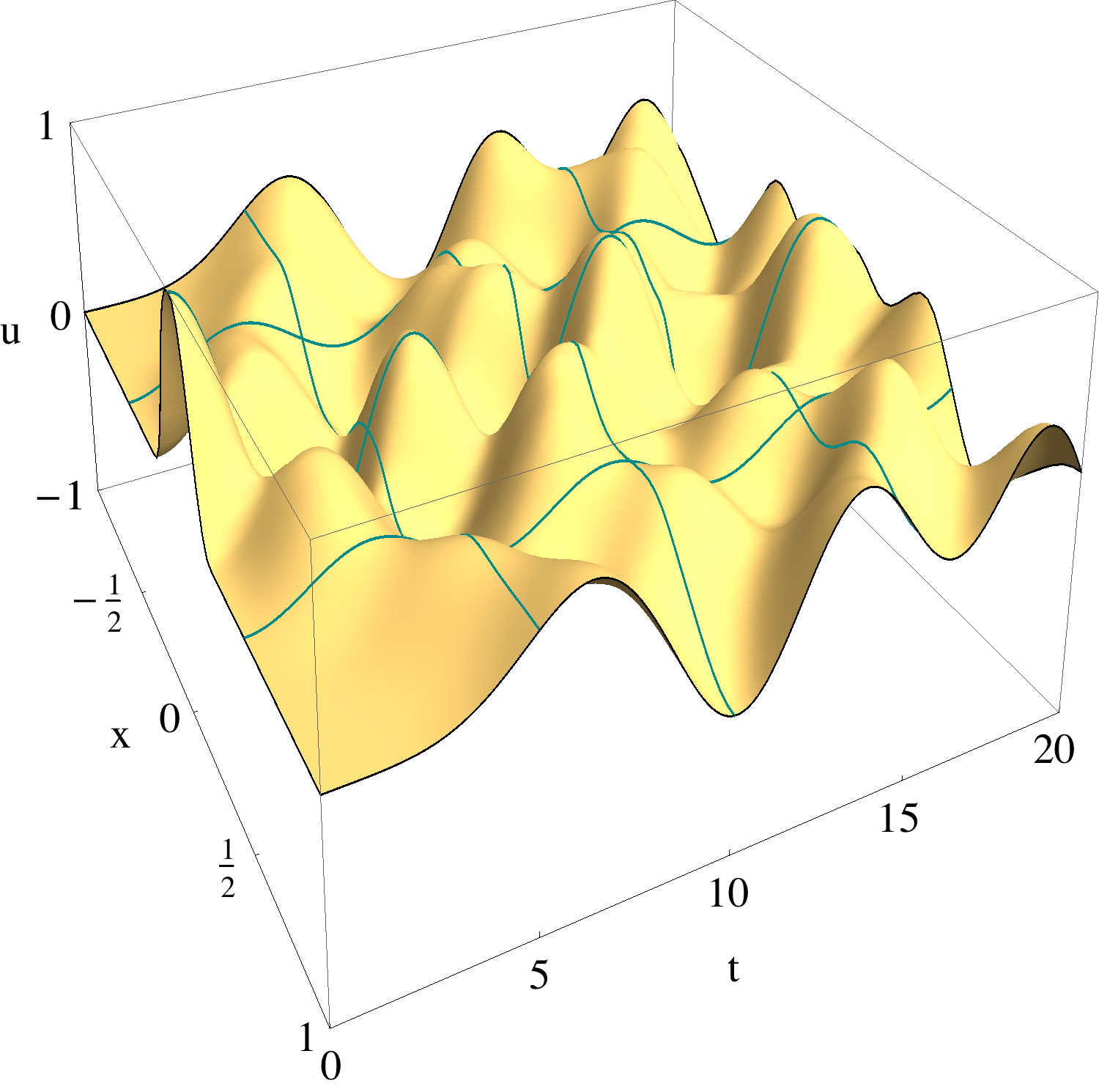}}
\label{neumannBoundaryViewCont}
}
\hspace{.35cm}
\subfigure[The same solution from Fig.~\ref{dirichletDataViewCont} from a
boundary point of view.]{
\scalebox{0.50}{\includegraphics{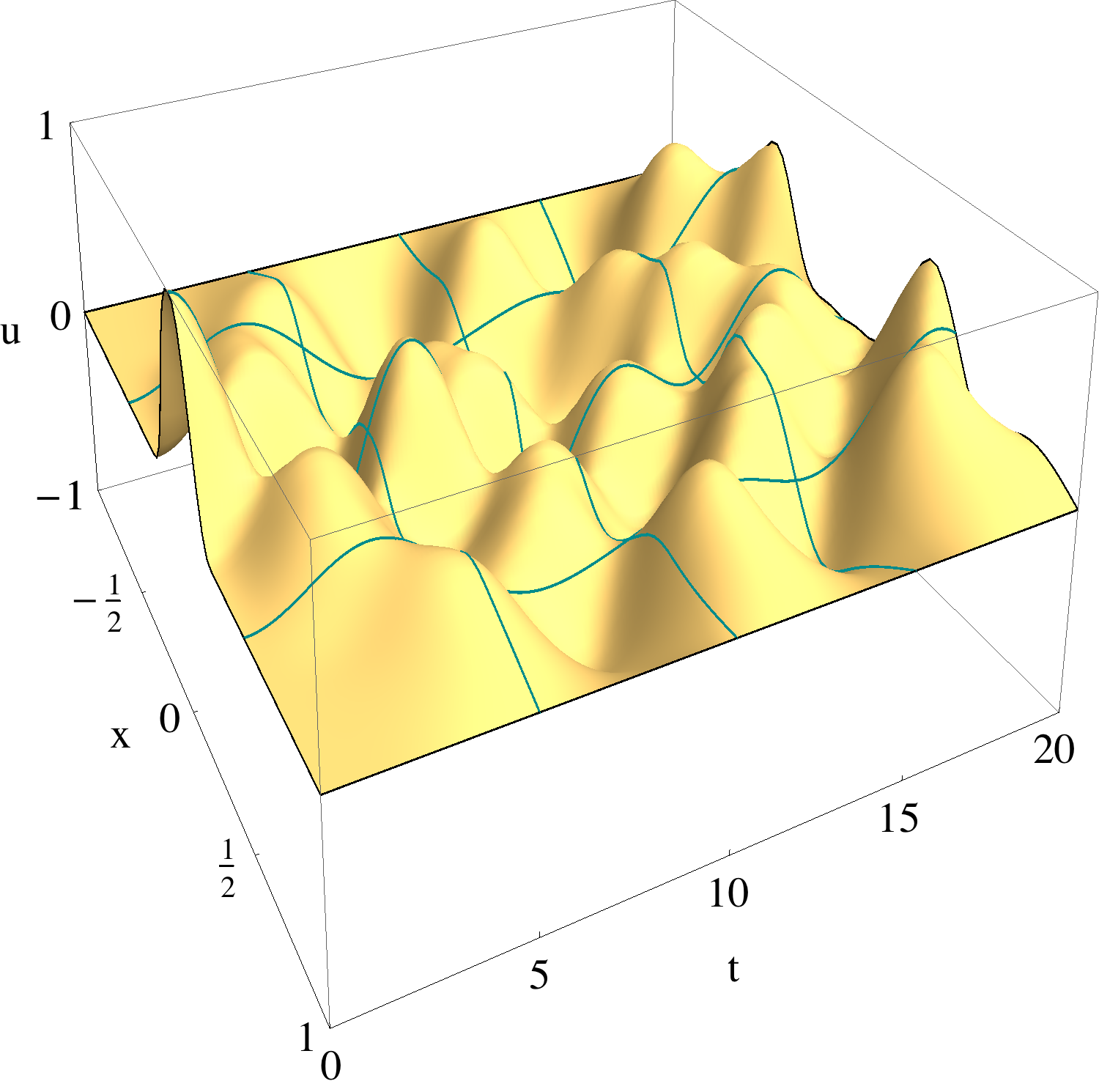}}
\label{dirichletBoundaryViewCont}
}
\caption{Solution to the nonlocal wave equation with
Neumann ((a), (c), and (e))and Dirichlet ((b), (d), (f)) boundary conditions,
continuous initial displacement and vanishing initial velocity.}
\label{fig:neumannDirichletCont}
\end{figure}

\clearpage

\begin{table}[h]
\centering
\begin{tabular}{c|l}
\hline
& \\ [-1ex]
$\BC$  &$\qquad\quad u(x,t)$ \\ [1ex]
\hline & \\ [-.5ex]
$\p$   &$t^2(\sin(\pi x) + \cos(\pi x))$ \\ [1ex]
$\aBC$ &$t^2(x^4-1)$ \\ [1ex]
$\N$   &$t^2((x^2-1)^2-8/15)$ \\ [1ex]
$\D$   &$t^2(1+\sin(\pi x) + \cos(\pi x))$ \\ [1ex]
\hline
\end{tabular}
\caption{Known exact solutions used in numerical experiments.}
\label{table:exactSolutions}
\end{table}

\begin{figure}[h]
\centering
\subfigure[Solution $u$ to the classical (local) wave equation with
initial data $u(x,0) = u_{0,\textrm{cont}}(x)$ defined in \eqref{initialDisplacement} and
$u(x,0)=0.$]{
\scalebox{0.50}{\includegraphics{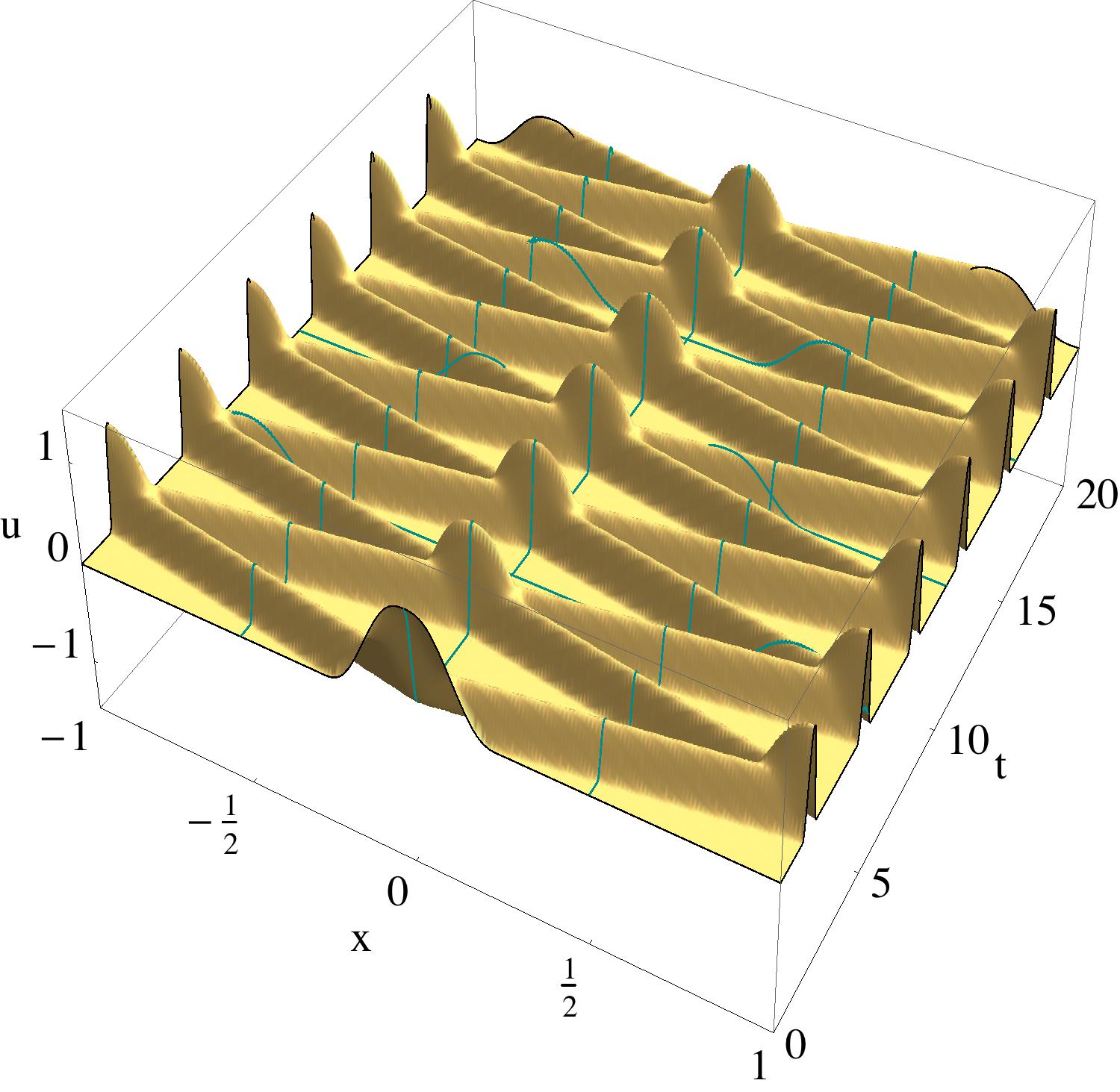}}
\label{neumannClassicalCosinus}
}
\hspace{.35cm}
\subfigure[Solution $u$ to the classical (local) wave equation with
initial data $u(x,0) = u_{0,\textrm{cont}}(x)$ defined in \eqref{initialDisplacement} and
$u_t(x,0)=0.$]{
\scalebox{0.50}{\includegraphics{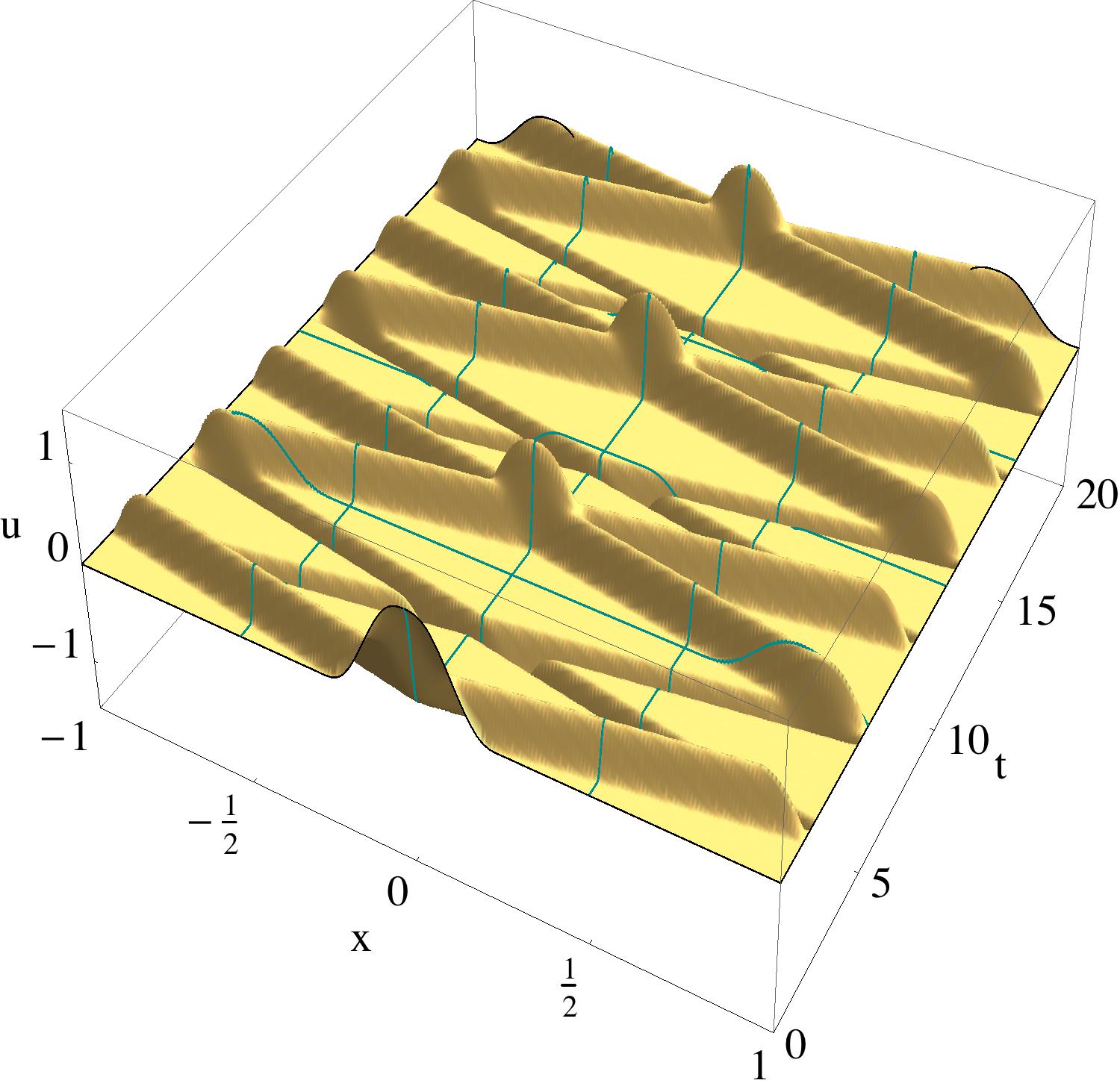}}
\label{dirichletClassicalCosinus}
}
\\
\subfigure[Solution $u$ to the classical (local) wave equation with
initial data $u(x,0) = 0$ and $u_t(x,0)= u_{0,\textrm{cont}}(x)$ defined
in \eqref{initialDisplacement}.]{
\scalebox{0.50}{\includegraphics{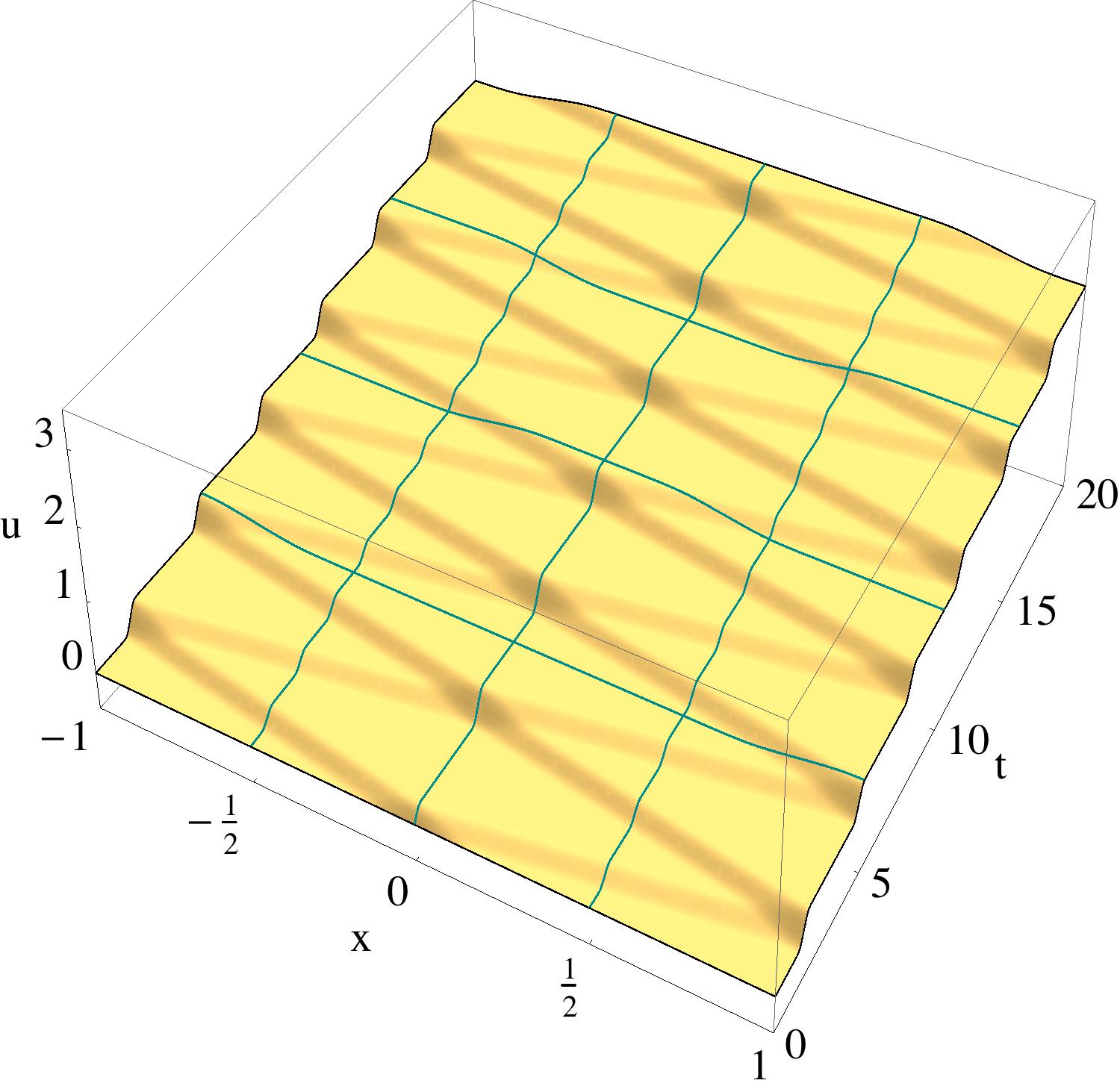}}
\label{neumannClassicalSinus}
}
\hspace{.35cm}
\subfigure[Solution $u$ to the classical (local) wave equation with
initial data $u(x,0) = 0$ and $u_t(x,0)= u_{0,\textrm{cont}}(x)$ defined
in \eqref{initialDisplacement}.]{
\scalebox{0.50}{\includegraphics{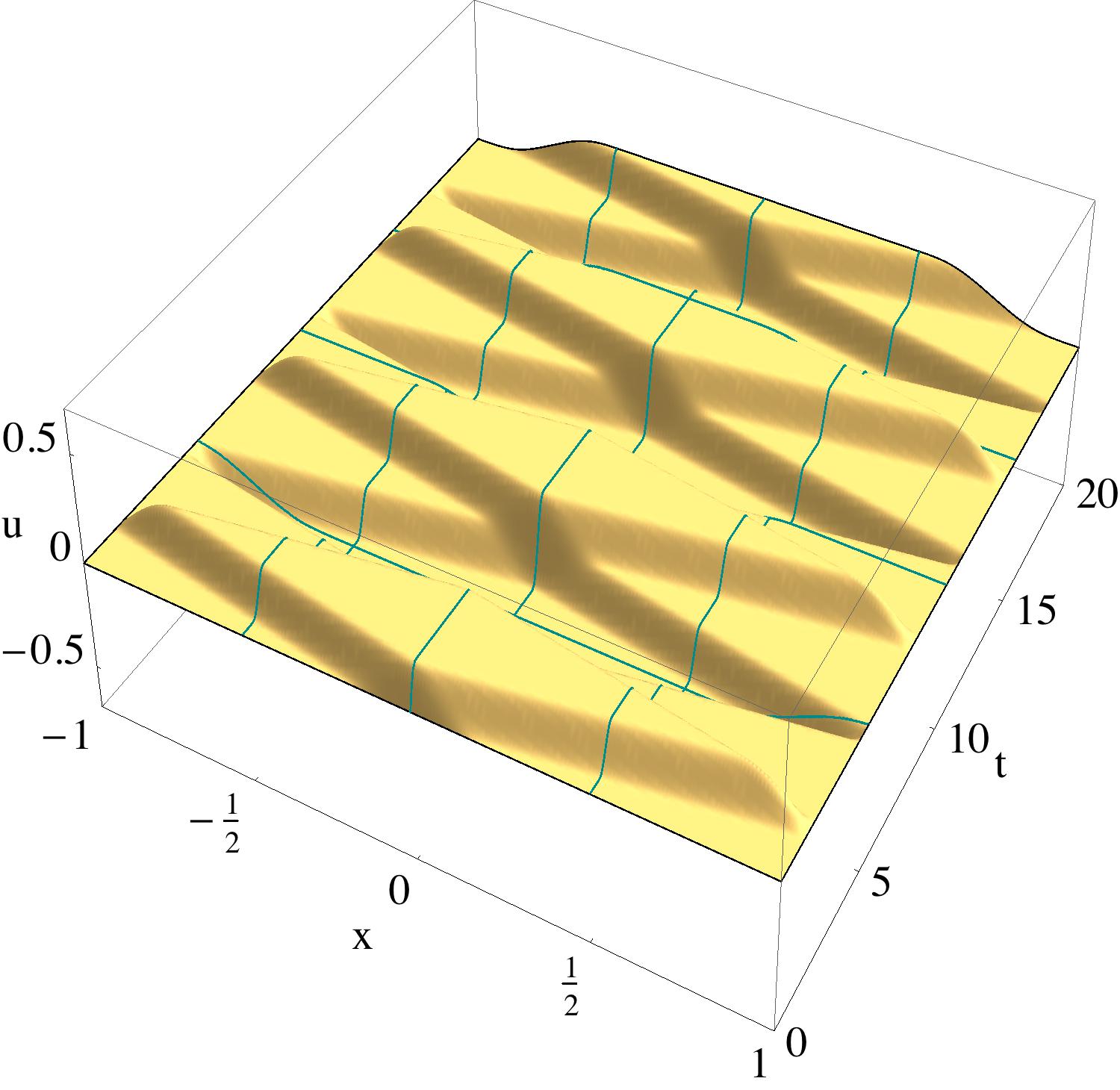}}
\label{dirichletClassicalSinus}
}
\caption{Solution to the classical wave equation with
Neumann ((a) and (c)) and Dirichlet ((b) and (d)) boundary conditions
with vanishing initial velocity ((a) and (b)) and  vanishing
initial displacement ((c) and (d)).}
\label{fig:neumannDirichletClassical}
\end{figure}

\noindent
the
corresponding source term $b(x,t)$ is also expected to differ.
We take the micromodulus function $C$ to be the unit box on $\Omega$, namely,
\begin{equation}
\label{micromodulusFunc}
C(x) =
\left\{
\begin{aligned}
1, \quad & x \in [-1/2,1/2], \\
0, \quad &\mbox{otherwise,}
\end{aligned}
\right.
\end{equation}
which is depicted in Figure \ref{fig:micromodulusData}.

For each case, we compute the exact solution until the final time
$T=20$ and compute the relative $L^2$-error
$\norm{(u-u^h)(T,\cdot)}_0/\norm{u(T,\cdot)}_0$.
We first compute an approximate solution with a uniform coarse mesh with
$N=2^3$ elements and then refine the mesh by subdividing each element into
two elements of equal size. In each case, as the time step of the Newmark scheme
we take $\Delta t = 0.005$ so that the explicit Newmark time integration scheme is stable.
In all of our examples, we found out that taking $\Delta t$ so that
$\Delta t < h/10$ is sufficient for stability.
Note that since the Newmark scheme is second order accurate in time, and
all of the exact solutions in Table \ref{table:exactSolutions} is of the form
$u(x,t) = T(t)X(x)$ with $T(t)=t^2$, a second order polynomial, it is guaranteed
that the dominant error is that in the space variable.

\begin{table}[htbp]
\caption{History of convergence with known exact solutions for all BC types.}
\label{table:convergenceOrders}
\scriptsize
\begin{center}
\begin{tabular}{ccc@{\hspace{.2cm}}cc@{\hspace{.2cm}}cc@{\hspace{.2cm}}cc@{\hspace{.2cm}}c}
\hline \\
\vspace{-.1in} \\
& & \multicolumn{2}{c}{\underline{\quad\quad periodic \quad\quad}}
  & \multicolumn{2}{c}{\underline{\quad\quad antiperiodic \quad\quad}}
  & \multicolumn{2}{c}{\underline{\quad\quad Neumann \quad\quad}}
  & \multicolumn{2}{c}{\underline{\quad\quad Dirichlet \quad\quad}} \\
\vspace{-.04in} \\
$\ell$ &mesh &error &order &error &order &error &order &error &order \\
\vspace{-.05in} \\
\hline \\
\vspace{-.1in} \\
  &3 &2.32E-01 &-- &1.53E-01 &-- &2.34E-01 &-- &1.83E-01 &-- \\
  &4 &1.14E-01 &1.02 &6.88E-02 &1.15 &1.15E-01 &1.03 &8.35E-02 &1.13 \\
0 &5 &5.68E-02 &1.01 &3.29E-02 &1.06 &5.72E-02 &1.01 &4.05E-02 &1.04 \\
  &6 &2.84E-02 &1.00 &1.62E-02 &1.02 &2.85E-02 &1.00 &2.01E-02 &1.01 \\
  &7 &1.42E-02 &1.00 &8.10E-03 &1.00 &1.43E-02 &1.00 &1.00E-02 &1.00 \\
  &&&&&&&& \\
  &3 &2.28E-02 &-- &1.46E-02 &-- &2.30E-02 &-- &1.62E-02 &-- \\
  &4 &5.74E-03 &1.99 &3.69E-03 &1.99 &5.91E-03 &1.96 &4.06E-03 &1.99 \\
1 &5 &1.44E-03 &2.00 &9.25E-04 &2.00 &1.49E-03 &1.99 &1.02E-03 &2.00 \\
  &6 &3.59E-04 &2.00 &2.32E-04 &2.00 &3.73E-04 &2.00 &2.54E-04 &2.00 \\
  &7 &8.98E-05 &2.00 &5.79E-05 &2.00 &9.32E-05 &2.00 &6.35E-05 &2.00 \\
  &&&&&&&& \\
  &3 &1.52E-03 &-- &8.03E-04 &-- &2.05E-03 &-- &1.07E-03 &-- \\
  &4 &1.90E-04 &2.99 &1.01E-04 &2.99 &2.47E-04 &3.05 &1.35E-04 &2.99 \\
2 &5 &2.38E-05 &3.00 &1.26E-05 &3.00 &3.06E-05 &3.01 &1.69E-05 &3.00 \\
  &6 &2.98E-06 &3.00 &1.58E-06 &3.00 &3.82E-06 &3.00 &2.11E-06 &3.00 \\
  &7 &3.73E-07 &3.00 &1.97E-07 &3.00 &4.77E-07 &3.00 &2.63E-07 &3.00 \\
  &&&&&&&& \\
  &3 &7.51E-05 &-- &2.21E-05 &-- &5.03E-04 &-- &5.31E-05 &-- \\
3 &4 &4.71E-06 &3.99 &1.38E-06 &4.00 &3.16E-05 &3.99 &3.33E-06 &3.99 \\
  &5 &2.95E-07 &4.00 &8.62E-08 &4.00 &1.98E-06 &4.00 &2.08E-07 &4.00 \\
  &6 &1.84E-08 &4.00 &5.39E-09 &4.00 &1.25E-07 &3.99 &1.30E-08 &4.00 \\
\vspace{-.05in} \\
\hline \\
\end{tabular}
\end{center}
\end{table}
We display our numerical results in Table \ref{table:convergenceOrders}.
Therein, the column labeled $\ell$ indicates the polynomial degree we used to compute
$u^h$, and the column labeled ``mesh'' denotes the mesh we used to compute the
relative error displayed in the corresponding row, more explicitly,
mesh$=i$ means we used a uniform mesh with $N=2^i$ elements.
In the column labeled ``order'' we display an approximate order
of convergence as follows. If $e_i$ denotes the relative error with
mesh$=i$, then we display the quantity
\[
r_{i+1} = -\frac{1}{\log 2}\frac{e_{i+1}}{e_i}
\]
at the row corresponding to mesh$=i+1$. The results displayed in
Table \ref{table:convergenceOrders} suggest an error estimate of the form
\[
\frac{\norm{(u-u^h)(\cdot,T)}_0}{\norm{u(\cdot,T)}_0} \le D\, h^{\ell+1}
\]
for some constant $D$ independent of $u$ and $h$, that is, the method
converges \emph{optimally} with respect to the mesh size.

\subsection{Approximations to Solutions}

Here we display some numerical results in which we solve \eqref{DG}
with $b = 0$ on $\Omega \times J$. In this case, we do not have an
explicit representation of the solution and merely rely on numerical
computing.  We consider two initial displacement functions

\begin{equation} \label{initialDisplacementDiscont}
u_{0,\textrm{disc }}(x) =
\left\{
\begin{aligned}
3/2, \quad & x \in [-1/4,1/4] ,\\
0, \quad &\mbox{otherwise},
\end{aligned}
\right.
\end{equation}
and
\begin{equation} \label{initialDisplacement}
u_{0,\textrm{cont}}(x)= \left\{
\begin{aligned}
0, \quad & x \in (-1, -1/4), \\
(1 + 4 x)^3 (96 x^2 - 12 x + 1), \quad & x \in [-1/4, 0), \\
(1 - 4 x)^3 (96 x^2 + 12 x + 1), \quad & x \in [0, 1/4], \\
0,  \quad & x \in (1/4, 1).\\
\end{aligned}
\right.
\end{equation}
These functions are displayed in Fig. \ref{fig:micromodulusData}.
In all cases, the initial velocity $v_0(x) = 0$ for all $x \in \Omega$.
The micromodulus function $C(x)$ is again taken to be the unit box given in \eqref{micromodulusFunc}.
We use the polynomial degree $\ell=2$ on a mesh with $N=64$ elements.
For each BC case, we depict the regulating functions $\varphi_C(k)$ utilized to define the
nonlocal operator as well as the associated wave propagation; see
Figures \ref{fig:periodic}, \ref{fig:antiperiodic}, \ref{fig:neumann},
and \ref{fig:dirichlet}.

For $t \in \REAL$, we have proved that the solution is discontinuous
if and only if the initial data is discontinuous; see Section
\ref{sec:strategy}.  Furthermore, the position of discontinuity is
determined by the initial data and should remain stationary.
Since we use vanishing initial velocity,
the explicit solution expression given in
\eqref{representationofthesolution} is as follows:
\begin{equation*}
u(x,t) =
\left[\overline{\cos \left(t \sqrt{\varphi(c)} \right)} \,\right]  u_0(x)
+ \delta u(x,t),
\end{equation*}
where $\delta u(\cdot,t)$ is a continuous function for $t \in \REAL.$
As seen in Figures \ref{fig:periodic}, \ref{fig:antiperiodic},
\ref{fig:neumann}, and \ref{fig:dirichlet}, discontinuities of the
initial data remain stationary at $x=-1/4$ and $x=1/4$.
We also numerically verify that the prescribed BCs are
satisfied for all $t \in [0,20]$.  For instance, it is easy to see
that homogeneous Dirichlet BCs are satisfied in Figure \ref{fig:dirichlet}.
Furthermore, the governing operator preserves the reflection symmetry.  In other
words, since initial data (both $u_{0,\textrm{disc}}$ and $u_{0,\textrm{cont}}$) are
symmetric with respect to $x=0$, the displacement is symmetric
with respect to $x=0$, which
can easily be observed by the symmetry in contour plots; see Figures
\ref{periodicContourPlot}, \ref{antiperiodicContourPlot},
\ref{neumannContourPlot}, and \ref{dirichletContourPlot}.

We also report solutions of local and nonlocal equations with
continuous initial data $u_{0,\textrm{cont}}(x)$ given in
\eqref{initialDisplacement}.  We observe several common features. Wave
separation behavior is similar to that from the unbounded domain case
as reported in the companion paper; see
\cite{beyerAksoyluCeliker2014_unbounded}. Namely, in the classical
case, as expected, we observe the propagation of waves along
characteristics; see Figure \ref{fig:neumannDirichletClassical} .  In
the nonlocal case, we observe oscillatory recurrent wave separation
and oscillations are located at the center of the initial
pulse. Hence, the wave patterns are symmetric with respect to $x=0$;
see Figure \ref{fig:neumannDirichletCont}.  As far as the boundary
behavior goes, in the classical case, we see that the Dirichlet BC
creates reflections of opposite signs; see Figures
\ref{dirichletClassicalCosinus} and \ref{dirichletClassicalSinus}.  In
the case of Neumann BC, reflections are of the same sign; see Figures
\ref{neumannClassicalCosinus} and \ref{neumannClassicalSinus}.  A
parallel behavior is observed for the nonlocal Dirichlet case. Such
parallel behavior is not obvious in the Neumann case.  Further
investigation of boundary behavior is needed.

\section{Conclusion}

This paper came about from the result in our companion paper
\cite{beyerAksoyluCeliker2014_unbounded} that the peridynamic
governing operator is a bounded function of the classical governing
operator on $\REAL^n$. The peridynamic operator contains a
convolution.  In this paper, we generalize convolutions to a bounded
domain with the help of a eigenbasis obtained from classical operator
on the bounded domain. This way, we can incorporate local boundary
conditions into nonlocal governing operators.  We study prominent
BCs such as periodic, antiperiodic, Neumann, and
Dirichlet.  In the case of periodic and antiperiodic boundary
conditions, integral representations of the abstract convolutions are
relatively straightforward to establish.  Such representations can
also be achieved for the case of Neumann BCs, but with
considerably more effort exploiting half-way symmetry.  For Dirichlet
BC, this integral representation involves an
orthogonal projection of the micromodulus function onto a closed
subspace defined in terms of the eigenbasis.  We give an integral
representation of this projection which does not depend on the
eigenbasis.  This representation involves a limit.
Applying convolutions of the periodic and antiperiodic cases,
we construct additional integral convolutions, what we call \emph{simple 
convolutions}, respecting Neumann and Dirichlet BCs. 


For the homogeneous nonlocal wave equation with the considered BCs, we
prove that continuity is preserved by time evolution. Namely, for $t
\in \REAL$, we prove that the solution is discontinuous if and only if
the initial data is discontinuous. This is due to the fact that the
solution has a unique decomposition into two parts.  The first part is
the product of a function of time with the initial data.  This
decomposition is induced by the fact that the governing operator has a
unique decomposition into multiple of the identity and a
Hilbert-Schmidt operator; see \eqref{decompositionBC}.  Hence, the
second part is continuous.  The decomposition also implies that
discontinuities remain stationary. Whereas, in the classical case, it
is well-known that discontinuities propagate along characteristics.
We hold that this fundamentally difference is one of the most
distinguishing features of PD.

The paper presents a unique way of combining the powers of abstract
operator theory with numerical computing.  The abstractness of the
methods used in the paper allows generalization to other nonlocal
theories.  To substantiate the uniqueness of our treatment, we
provide a comprehensive numerical study of the solutions of the
nonlocal wave equation.  We accomplish to demonstrate two crucial
goals: For $t \in \REAL$ and each BC considered, discontinuities of
the initial data remain stationary and BCs are satisfied by the
solution.  We accomplish the two goals for all BCs and depict the
corresponding solutions.  For discretization, we employ a weak
formulation based on a Galerkin projection and use piecewise
polynomials on each element which allows discontinuities of the
approximate solution at the element borders.  We carry out a history
of convergence study to ascertain the convergence behavior of the
method with respect to the polynomial order and observe optimal
convergence.

Generically, an operator with regular coefficients on $\REAL^n$ has a
purely discrete spectrum, providing an eigenbasis of the underlying
space.  The methods provided in this paper can treat problems in $n$
spatial dimensions.  To our knowledge, it is the first systematic
approach of incorporating local BCs into nonlocal theories governed by
bounded operators. Regulating functions are the key for capturing the
essence of the underlying physics. The creation of a map from
regulating functions to physical situations is an exciting research
endeavor for future research.  In conclusion, we believe that we
added valuable tools to the arsenal of methods to treat nonlocal
problems and compute their solutions.

\fi

\bibliographystyle{spmpsci}
\bibliography{../../../data/digest/gemFEM2011/paper3_stokes/bib/burak}


\end{document}